\documentclass{article}
\usepackage[a4paper,margin=2cm,footskip=.5cm]{geometry}
\usepackage{setspace}
\onehalfspacing
\usepackage{graphicx, epstopdf, parskip, latexsym, amsmath, amssymb, mathtools, times, amsthm, url, multicol, rotating, natbib, algorithm, algpseudocode, bm}
\usepackage[english]{babel}
\usepackage[utf8]{inputenc}
\usepackage{latexsym}
\usepackage{fancyhdr}
\usepackage{color}


\theoremstyle{plain}
\newtheorem{thm}{Theorem}
\newtheorem{Lem}{Lemma}
\newtheorem{Cor}{Corollary}

\theoremstyle{definition}
\newtheorem{Def}{Definition}
\newtheorem{Assum}{Assumption}

\theoremstyle{remark}
\newtheorem{Rem}{Remark}
\newtheorem{Eg}{Example}
\newtheorem{Con}{Condition}

\DeclareMathOperator{\Var}{Var}
\DeclareMathOperator{\Cov}{Cov}
\DeclareMathOperator{\Corr}{Corr}

\DeclareMathOperator{\Leb}{Leb}

\allowdisplaybreaks

\pagestyle{fancy}
\fancyhf{}
\rhead{Modelling spatial heteroskedasticity by VMMAs \hspace{2mm} \thepage}
\lhead{M. Nguyen and A. E. D. Veraart }

\title{Modelling spatial heteroskedasticity\\by volatility modulated moving averages}
\author{MICHELE NGUYEN and ALMUT E. D. VERAART\\
	\textit{Department of Mathematics, Imperial College London}
	}
\providecommand{\keywords}[1]{\textbf{\textit{Keywords:}} #1}

\date{}

\begin{document}

\maketitle
\pagenumbering{arabic}

\thispagestyle{fancy}

\begin{abstract}
Spatial heteroskedasticity refers to stochastically changing variances and covariances in space. Such features have been observed in, for example, air pollution and vegetation data. We study how volatility modulated moving averages can model this by developing theory, simulation and statistical inference methods. For illustration, we also apply our procedure to sea surface temperature anomaly data from the International Research Institute for Climate and Society. 
\end{abstract}

\keywords{moments-based inference, moving averages, stochastic simulation, stochastic volatility, spatial processes}

\section{Introduction}

A classical assumption made when dealing with spatial data is that the variance is a constant and the covariance between measurements at two locations is a function of their distance apart. In practice, however, it has been observed that this does not hold for many data sets and accounting for spatial heteroskedasticity or spatial volatility has multiple benefits. 
\\
The first benefit is the better representation of the data. In a recent paper, it was shown that including spatial volatility in road topography models better captures the hilliness features of the roads  \cite[]{JPRS2016}. This has implications on estimating the risk of vehicle damage and simulating fuel consumption. In some settings, the presence of spatial volatility can be also explained. For example, in a study of sulphur dioxide concentrations by \cite{FS2001}, it was found that states which lie close to several coal power plants tend to have high variability in their readings. This was attributed to the dependence of the levels on the wind speed, the wind direction, as well as the atmospheric stability.
\\
A second benefit of modelling spatial volatility is the potential for improving prediction. This was seen by \cite{HWBR2011} when they fitted a Gaussian process with volatility to vegetation and nitrate deposition data. In the case of agriculture yields, prediction intervals accounting for spatial volatility will be useful for insurance companies when they set crop insurance prices \cite[]{Yan2007}. 
\\
Another way of using spatial volatility would be as an indicator of regime change. Such an approach has been taken in desertification and urban planning studies \cite[]{SD2015, Getis2015}. In the first case, regions of high volatility demarcate the bare and the extensive vegetative cover; while in the second case, it is used to identify slum areas. 
\\
In this paper, we introduce stochastic volatility to the well-known Gaussian moving average (GMA) or process convolution model:
\begin{equation}
Y(\mathbf{x}) = \int_{\mathbb{R}^{d}}g(\mathbf{x}-\bm{\xi})W(\mathrm{d}\bm{\xi}), \label{eqn:GMA}
\end{equation}
where $\mathbf{x}\in\mathbb{R}^{d}$ for some $d\in\mathbb{N}$, $g$ is a deterministic (kernel) function and $W$ is the white noise on $\mathbb{R}^{d}$ or a homogeneous standard Gaussian basis whose L\'evy seed (which we shall define in Section \ref{sec:btheory}) has mean $0$ and variance $1$. This results in the so-called volatility modulated moving average (VMMA): 
\begin{equation}
Y(\mathbf{x}) = \int_{\mathbb{R}^{d}}g(\mathbf{x}-\bm{\xi})\sigma(\bm{\xi})W(\mathrm{d}\bm{\xi}), \label{eqn:VMMA}
\end{equation}
where $\{\sigma^{2}(\bm{\xi}):\bm{\xi} \in\mathbb{R}^{d}\}$ is a stationary stochastic volatility field, independent of $W$. In \cite{HWBR2011}, the stochastic volatility is multiplied as a factor to the main spatial process; here, it appears as an integrand. As such, $Y$ can sometimes be identified as a solution to a stochastic partial differential equation. Following similar arguments to those on page 559 of  \cite{Bolin2014}, we find that $Y$ can be viewed as a solution to:
\begin{equation*}
(\kappa^{2} - \Delta)^{\alpha/2} Y(\mathbf{x}) = \sigma(\mathbf{x})\dot{W}(\mathbf{x}),
\end{equation*}
where $\alpha > d/2$, $\kappa > 0$, $\Delta = \sum_{i = 1}^{d}\partial^{2}/\partial x_{i}^{2} $ is the Laplacian operator and $\dot{W}$ is Gaussian white noise, when $g$ is a Mat\'ern kernel defined by:
\begin{equation}
g(\mathbf{x} - \bm{\xi}) = 2^{1 - (\alpha-d)/2}(\kappa|\mathbf{x} - \bm{\xi}|)^{(\alpha - d)/2}K_{(\alpha - d)/2}(\kappa|\mathbf{x} - \bm{\xi}|)/[(4\pi)^{d/2}\Gamma(\alpha/2)\kappa^{\alpha - d}], \label{eqn:Mker}
\end{equation} 
and $K_{(\alpha - d)/2}$ is the modified Bessel function of the second kind. 
\\
VMMAs can be seen as an extension of the Type G L\'evy moving average (LMA) recently studied by \cite{Bolin2014} and \cite{WB2015}: 
\begin{equation}
Y(\mathbf{x}) = \int_{\mathbb{R}^{d}}g(\mathbf{x}-\bm{\xi})L(\mathrm{d}\bm{\xi}), \label{eqn:LMA}
\end{equation}
where $L$ is a (homogeneous) Type G L\'evy basis. This means that the L\'evy seed, $L' \stackrel{d}{=} V^{1/2}Z$ where $V$ is an infinitely divisible random variable and $Z$ is a standard normal random variable independent of $V$.  This is equivalent to restricting $\sigma^2$ in the definition of our VMMA to be infinitely divisible and independent across locations.  
\\
The VMMA is also a special case of the volatility modulated mixed moving average studied by \cite{Veraart2014}:
\begin{align*}
\int_{\mathcal{X}\times\mathbb{R}^{d}}g(z, \mathbf{x}-\bm{\xi})&\sigma(\bm{\xi})W(\mathrm{d}z, \mathrm{d}\bm{\xi}), 
\end{align*}
where $\mathcal{X} \subset \mathbb{R}^{k}$ for some $k\in\mathbb{N}$ and $W$ is a more general Gaussian basis \cite[]{Veraart2014}. It would be useful to study the simulation and inference procedures for the VMMA before moving on to this case where the parameters in the kernel function are randomised.
\\
Convolution models such as GMAs and LMAs have been used in Geostatistics for designing spatial correlation structures. Apart from the classical stationary and isotropic correlation functions, other specifications can be made to construct non-stationary correlations with for example, locally varying geometric anisotoropy \cite[]{FDR2016}. In this paper, we show that VMMAs give us the ability to model stationary and non-stationary correlations since the process is stationary but conditional on the volatility, non-stationary. Specifically, introducing $\sigma$ to a GMA to form a VMMA retains the correlation constructed by our choice of $g$ when we integrate or average over the realisations of $\sigma$. On the other hand, conditional on the realisation of $\sigma$, the VMMA has varying correlation structures over space.
\\
Another usefulness of VMMAs is that they enable us to model clustered extremes which are seen in many environmental data sets. This is because $\sigma(\bm{\xi})$ acts like a local standard deviation for the driving noise $W$ and $\sigma^{2}(\bm{\xi})$ is modelled as a process with correlation. If clustered extremes are not suitable for the context, a good fit of the data to a VMMA could reveal missing covariates in the mean trend, location-dependent explanatory variables or areas where accurate measurements are hard to make.    

\paragraph{Outline}
We begin in Section \ref{sec:btheory} by summarising the $\mathcal{L}_{0}$ integration theory in \cite{RR1989} and providing an integrability condition for the VMMA defined in (\ref{eqn:VMMA}). Next, we develop the theoretical properties of VMMAs in Section \ref{sec:tprop}. The main contributions of our research lie in the following two sections. In Section \ref{sec:sim}, we use discrete convolution ideas to design a simulation algorithm for VMMAs. This is illustrated for a VMMA with a layered structure: the stochastic volatility field is an LMA and the kernels at both the VMMA and LMA levels are Gaussian. We provide a semi-explicit expression for the mean squared error and study cases where an explicit upper bound as well as its order of convergence can be obtained. In Section \ref{sec:est}, we tackle the problem of inference for VMMAs. We develop a two-step moments-matching estimation method which involves a moving window to obtain local variance estimates. Simulation experiments are conducted and the consistency of the estimators is proved under suitable double asymptotics. Next, we apply our method to sea surface temperature anomaly data in Section \ref{sec:emp} to illustrate benefits of using VMMAs instead of GMAs in this case. Finally, we conclude and discuss future steps for action in Section \ref{sec:confurther}. 

\section{The $\mathcal{L}_{0}$ integration and integrability conditions} \label{sec:btheory}

To construct the required stochastic integrals, we first need define our integrator: the homogeneous L\'evy basis.

\subsection{Homogeneous L\'evy bases}

Let $\mathcal{B}(\mathbb{R}^{d})$ denote the Borel $\sigma$-algebra on $\mathbb{R}^{d}$ and $\mathcal{B}_{b}(\mathbb{R}^{d}) = \{E \in \mathcal{B}(\mathbb{R}^{d}): \Leb(E) < \infty\}$ where $\Leb$ represents the Lebesgue measure. We work in the probability space $(\Omega, \mathcal{F}, P)$. To understand what a homogeneous L\'evy basis is, we first define a L\'evy basis \cite[]{BBV2012, Sato2007}: 
\\
\begin{Def}[L\'evy basis] \label{def:Lbasis}\hfill \\
Let $\{E_{i}: i \in \mathbb{N}\}$ be any sequence of the disjoint elements of $\mathcal{B}_{b}(\mathbb{R}^{d})$. Suppose that $L$ is a set of $\mathbb{R}$-valued random variables indexed by such sets, i.e. $\{L(E): E \in \mathcal{B}_{b}(\mathbb{R}^{d})\}$ where, for $\bigcup\limits_{j = 1}^{\infty} E_{j} \in \mathcal{B}_{b}(\mathbb{R}^{d})$, $L(\bigcup\limits_{j = 1}^{\infty} E_{j}) = \sum_{j = 1}^{\infty} L(E_j)$ almost surely. Then, $L$ is called a random measure. 
\\
A random measure $L$ is said to be a L\'evy basis on $(\mathbb{R}^{d}, \mathcal{B}(\mathbb{R}^{d}))$ if:
\begin{enumerate}
\item it is independently scattered: $L(E_{1})$, $L(E_{2})$, ... are independent;
\item and it is infinitely divisible: the random vector $\mathbf{L} = (L(B_{1}), ..., L(B_{m}))$, where $B_{1}, ..., B_{m}$ are elements of $\mathcal{B}_{b}(\mathbb{R}^{d})$, is infinitely divisible. This means that there exists a law $\mu_{n}$, for any $n \in \mathbb{N}$, such that the law of $\mathbf{L}$ can be written as $\mu = \mu_{n}^{*n}$, the n-fold convolution of $\mu_{n}$ with itself.
\end{enumerate}
\end{Def}

Now, we specify a homogeneous L\'evy basis as well as its so-called L\'evy seed using the notion of a cumulant generating function (CGF).  Note that the L\'evy seed is important for defining the distributions of a L\'evy basis and its associated moving average processes. 
\\
\begin{Def}[CGF, homogeneous L\'evy basis and its seed] \hfill \label{defn:lseed} \\
The CGF of a random variable $Z$, which is denoted by $C(\theta; Z)$, is defined as the distinguished logarithm of its characteristic function, i.e. $\log\mathbb{E}\left[\exp\left(i\theta Z\right)\right]$. \\
Let $L$ be a L\'evy basis. Suppose that there exists a random variable $L'$ such that $C(\theta ; L(E)) = C(\theta ; L') \Leb(E)$ for all $E\in\mathcal{B}_{b}(S)$. Then, we say that $L$ is homogeneous and $L'$ is its L\'evy seed. 
\end{Def}
\vspace{5mm}
\begin{Eg}
The standard Gaussian basis in (\ref{eqn:VMMA}) is a homogeneous L\'evy basis. In this case, $W(E) \sim N(0, \Leb(E))$ for any $E\in \mathcal{B}_{b}(\mathbb{R}^{d})$.
\end{Eg}
\vspace{5mm}
\begin{Eg}
In this paper, we will also use the inverse Gaussian (IG) L\'evy basis in our simulation studies. The parameterisation chosen is such that, for $z > 0$, $\delta >0$ and $\gamma >0$, the probability density function of a random variable $Z$ with an $IG\big(\delta, \gamma)$ distribution is:
\begin{equation*}
f(z; \delta, \gamma) = \frac{\delta}{\sqrt{2\pi z^{3}}} \exp\left({\delta\gamma - \frac{1}{2}\left(\frac{\delta^{2}}{z} + \gamma^{2}z\right)}\right).
\end{equation*}
In this case, if $L$ is an IG basis whose seed has an $IG\big(\delta, \gamma)$ distribution, $L(E) \sim IG\big(\delta\Leb(E), \gamma\big)$ for any $E\in \mathcal{B}_{b}(\mathbb{R}^{d})$.
\end{Eg}

\subsection{Summary of the theory in \cite{RR1989}}

Since we only deal with homogeneous L\'evy bases for the GMAs, LMAs and VMMAs, we present the integration theory for this case. As usual, a stochastic integral is built up as a limit of those defined by so-called simple functions:
\\
\begin{Def}[Simple function on $\mathbb{R}^{d}$, the stochastic integral of a simple function] \hfill \\
Consider $\{y_{j} \in \mathbb{R} : j = 1, ..., n\}$ and $\{E_{j}: j = 1,..., n\}$, a collection of disjoint sets of $\mathcal{B}_{b}(\mathbb{R}^{d})$. Then, $f(\mathbf{x}) = \sum_{j = 1}^{n} y_{j}\mathbf{1}_{E_{j}}(\mathbf{x})$ is called a simple function on $\mathbb{R}^{d}$ where $\mathbf{1}_{E_{j}}(\mathbf{x}) = 1$ if $\mathbf{x} \in E_{j}$ and $0$ otherwise.
\\
The stochastic integral of $f$ over $A \in \mathcal{B}(\mathbb{R}^{d})$ is defined as $\int_{A}f(\bm{\xi})L(\mathrm{d}\bm{\xi}) = \sum_{j = 1}^{n} y_{j} L(A \cap E_{j})$.
\end{Def}
The stochastic integral of a measurable function is a simple extension of this:
\\
\begin{Def}[$L$-integrability and the stochastic integral of a measurable function] \hfill \\
Let $f:(\mathbb{R}^{d}, \mathcal{B}(\mathbb{R}^{d})) \rightarrow (\mathbb{R}, \mathcal{B}(\mathbb{R}))$  be a measurable function. Then, $f$ is $L$-integrable if there exists a sequence of simple functions $\{f_{m}\}$ such that:
\begin{itemize}
\item[(i)] $f_{m}$ converges to $f$ almost everywhere with respect to the Lebesgue measure;
\item[(ii)] the sequence $\{\int_{A}f_{m}(\bm{\xi})L(\mathrm{d}\bm{\xi})\}$ converges in probability for every $A \in \mathcal{B}(\mathbb{R}^{d})$. 
\end{itemize}

For an $L$-integrable function $f$, we define:
\begin{equation*}
\int_{A}f(\bm{\xi}) L(\mathrm{d}\bm{\xi}) = P-\lim_{m\rightarrow\infty} \int_{A} f_{m}(\bm{\xi}) L(\mathrm{d}\bm{\xi}).
\end{equation*}
This construction is well-defined because the limit does not depend on the sequence $\{f_{m}\}$. 
\end{Def}

Theorem 2.7 in \cite{RR1989} provides us with explicit conditions for $\mathcal{L}_{0}$ integrability. When the kernel $g$ is Lebesgue integrable and square-integrable, these conditions are satisfied for GMAs and LMAs whose L\'evy bases have finite second moments. To use the $\mathcal{L}_{0}$ theory to construct VMMA, we condition on the realisation of $\sigma^{2}$ and treat $g(\mathbf{x} - \bm{\xi})\sigma(\bm{\xi})$ as a measurable function. The condition required for a well-defined VMMA is then given by:
\\
\begin{Con}\label{con1}
$\int_{\mathbb{R}^{d}}g^{2}(\mathbf{x} - \bm{\xi})\sigma^{2}(\bm{\xi})<\infty$. 
\end{Con}
\vspace{5mm}
As will be shown later, this quantity is equal to the conditional variance of $Y$ at $\mathbf{x}$. This is easy to see that the condition holds whenever $g$ is square-integrable and $\sigma^{2}$ takes finite values.
\\
\begin{Eg}[Two-tiered model]
An example of a well-defined VMMA is the so-called two-tiered model:
\begin{equation}
\left.\begin{aligned}
Y(\mathbf{x})&= \int_{\mathbb{R}^{d}}g(\mathbf{x}-\bm{\xi})\sigma(\bm{\xi})W(\mathrm{d}\bm{\xi}),\\
\text{where } \sigma^{2}(\bm{\xi})&= \int_{\mathbb{R}^{d}}h(\bm{\xi}-\mathbf{u})L(\mathrm{d}\mathbf{u}). 
\end{aligned}
\right\}
\qquad \label{eqn:twotier}
\end{equation}
Here, $g$ and $h$ are Lebesgue integrable and square-integrable kernel functions, and $L$ is a subordinator with finite second moments (so that $\sigma^{2}$ is well-defined). As before, $W$ is a homogeneous standard Gaussian basis independent of $\sigma^{2}$. 
\\
Note that we model $\sigma^{2}$ directly. In comparison, treating the volatility as a multiplicative factor as is done in \cite{HWBR2011} is synonymous with modelling the conditional variance $\int_{\mathbb{R}^{d}} g^{2}(\mathbf{x}-\bm{\xi})\sigma^{2}(\bm{\xi}) \mathrm{d}\bm{\xi}$. In this case, we have chosen to model $\sigma^{2}$ as an LMA because it is convenient for deriving the second order distributional properties of $\sigma^{2}$. When the VMMA can be viewed as a solution to an SPDE, $\sigma^{2}$ could correspond to a process of special interest. For example, it could represent the cumulative effect which the wind speed, direction and atmospheric stability has on the spatial heteroskedasticity of air pollution. Thus, in Section \ref{sec:est}, we are particularly interested in estimating the parameters of $\sigma^{2}$ that determine its variance and correlation structure. 
\end{Eg}
\vspace{5mm}
\begin{Eg} \label{eg:weg}
As our main illustration example, we will use a more precise model:
\begin{equation}
\left.\begin{aligned}
Y(\mathbf{x})&= \int_{\mathbb{R}^{2}}\frac{\lambda}{\pi}\exp\left(-\lambda\left(\mathbf{x}-\bm{\xi}\right)^{T}\left(\mathbf{x}-\bm{\xi}\right)\right)\sigma(\bm{\xi})W(\mathrm{d}\bm{\xi}),\\
\text{where } \sigma^{2}(\bm{\xi})&= \int_{\mathbb{R}^{2}}\frac{\eta}{\pi}\exp\left(-\eta\left(\bm{\xi}-\mathbf{u}\right)^{T}\left(\bm{\xi}-\mathbf{u}\right)\right)L(\mathrm{d}\mathbf{u}).
\end{aligned}
\right\}
\qquad \label{eqn:weg}
\end{equation}
The stochastic volatility field is an LMA with the same kernel structure as the VMMA itself but a different rate parameter. Here, we choose Gaussian kernels with rate parameters $\lambda$, $\eta > 0$. As mentioned in \cite{Higdon1998}, these kernels are computationally convenient and are supported by physical ocean dynamics. 
\end{Eg}

\section{Theoretical properties of VMMAs} \label{sec:tprop}

In this section, we prove several distributional properties of VMMAs including stationarity, cumulant and covariance structures. These will be useful for the estimation method which we develop later. 

\subsection{Marginal distribution}

\subsubsection{Conditional distribution and cumulants}

Let $\mathcal{F}^{\sigma}$ be the $\sigma$-algebra generated by the stochastic volatility $\sigma^{2}$. As we have assumed that $\sigma$ and $W$ are independent, $Y(\mathbf{x})|\mathcal{F}^{\sigma} \sim N(0, \sigma^{2}_{I}(\mathbf{x}))$, where $\sigma^{2}_{I}(\mathbf{x}) = \int_{\mathbb{R}^{d}} g^{2}(\mathbf{x}-\bm{\xi})\sigma^{2}(\bm{\xi})\mathrm{d}\bm{\xi}$ denotes the conditional variance. 
\\
Recall that the cumulants of the VMMA, $\kappa_{l}\left(Y\left(\mathbf{x}\right)\right)$, are defined though its CGF. That is, $C(\theta; Y) = \sum_{l = 1}^{\infty} \kappa_{l}\left(Y\left(\mathbf{x}\right)\right)\frac{(i\theta)^{l}}{l!}$. In this case, since we have a Gaussian distribution, the conditional cumulants are $\kappa_{1}^{\sigma} = \mathbb{E}\left[Y\left(\mathbf{x}\right)|\mathcal{F}^{\sigma}\right] = 0$, $\kappa_{2}^{\sigma} = \Var\left[Y\left(\mathbf{x}\right)|\mathcal{F}^{\sigma}\right] = \sigma^{2}_{I}(\mathbf{x})$ and $\kappa_{l}^{\sigma} = 0$ for $l\geq 3$.

\subsubsection{Unconditional distribution and cumulants}
\label{section:unconddistr}

From the first conditional cumulant, we get $\kappa_{1} = \mathbb{E}\left[Y\left(\mathbf{x}\right)\right] = \mathbb{E}\left[\kappa_{1}^{\sigma}\right] = 0$. So, $\kappa_{2} = \mathbb{E}\left[Y^{2}\left(\mathbf{x}\right)\right] = \mathbb{E}\left[\sigma^{2}_{I}(\mathbf{x})\right]$. Higher order unconditional cumulants can be calculated in similar ways; however, beyond the third cumulant, they are typically not be equal to $0$ unlike their conditional counterparts. This is because the unconditional marginal distribution of a VMMA is generally not Gaussian. 
\\
\begin{Eg} \label{eg:cum}
For Model (\ref{eqn:weg}), we have $\kappa_{1} = 0$. Let $\mathbb{E}[L'] = a < \infty$. The next three cumulants are given by:
\begin{align*}
\kappa_{2}^{\sigma} = \sigma^{2}_{I}(\mathbf{x}) &\Rightarrow \kappa_{2} = \int_{\mathbb{R}^{2}} g^{2}(\mathbf{x}-\bm{\xi})\mathbb{E}\left[\sigma^{2}(\bm{\xi})\right] \mathrm{d}\bm{\xi} = a \int_{\mathbb{R}^{2}} \frac{\lambda^{2}}{\pi^{2}} \exp\left(-2\lambda\left(\mathbf{x}-\bm{\xi}\right)^{T}\left(\mathbf{x}-\bm{\xi}\right)\right)\mathrm{d}\bm{\xi} = \frac{a\lambda}{2\pi}. \\
\kappa^{\sigma}_{3} = 0 &\Rightarrow \mathbb{E}\left[Y^{3}(\mathbf{x})|\mathcal{F}^{\sigma}\right] - 3 \mathbb{E}\left[Y^{2}(\mathbf{x})|\mathcal{F}^{\sigma}\right] \mathbb{E}\left[Y(\mathbf{x})|\mathcal{F}^{\sigma}\right] + 2 \mathbb{E}^{3}\left[Y(\mathbf{x})|\mathcal{F}^{\sigma}\right]= 0. \Rightarrow \mathbb{E}\left[Y^{3}(\mathbf{x})\right] = \mathbb{E}\left[Y^{3}(\mathbf{x})|\mathcal{F}^{\sigma}\right] = 0. \\
&\Rightarrow \kappa_{3} = \mathbb{E}\left[Y^{3}(\mathbf{x})\right] - 3 \mathbb{E}\left[Y^{2}(\mathbf{x})\right] \mathbb{E}\left[Y(\mathbf{x})\right] + 2 \mathbb{E}^{3}\left[Y(\mathbf{x})\right] = 0. \\
\kappa^{\sigma}_{4} = 0 &\Rightarrow \mathbb{E}\left[Y^{4}(\mathbf{x})|\mathcal{F}^{\sigma}\right] - 4 \mathbb{E}\left[Y^{3}(\mathbf{x})|\mathcal{F}^{\sigma}\right] \mathbb{E}\left[Y(\mathbf{x})|\mathcal{F}^{\sigma}\right] - 3 \mathbb{E}^{2}\left[Y^{2}(\mathbf{x})|\mathcal{F}^{\sigma}\right] + 12 \mathbb{E}\left[Y^{2}(\mathbf{x})|\mathcal{F}^{\sigma}\right] \mathbb{E}^{2}\left[Y(\mathbf{x})|\mathcal{F}^{\sigma}\right] \\
& ~~~~~- 6 \mathbb{E}^{4}\left[Y(\mathbf{x})|\mathcal{F}^{\sigma}\right] = 0. \Rightarrow \mathbb{E}\left[Y^{4}(\mathbf{x})|\mathcal{F}^{\sigma}\right] = 3 \mathbb{E}^{2}\left[Y^{2}(\mathbf{x})|\mathcal{F}^{\sigma}\right] = 3(\sigma^{2}_{I}(\mathbf{x}))^{2}. \\
&\Rightarrow \kappa_{4} = \mathbb{E}\left[Y^{4}(\mathbf{x})\right] - 3 \mathbb{E}^{2}\left[Y^{2}(\mathbf{x})\right] = 3 \Var\left[\sigma^{2}_{I}(\mathbf{x})\right] = \frac{3b\lambda^{3}\eta}{4\pi^{3}(2\lambda + \eta)}, \text{ where } b = \Var(L') \text{ and we have used } (\ref{eqn:VCov}).
\end{align*}
\end{Eg}

\subsection{Finite dimensional distributions}

\subsubsection{Conditional joint distribution and correlation structure}

Now, we consider joint distributions of the process at different locations. This is characterised by the joint cumulant generating function (JCGF). To compute this for the VMMA, we introduce the concept of a generalised cumulant functional. This is a spatial extension of the concept given in \cite{BBV2012}:
\\
\begin{Def}[Generalised cumulant functional] \hfill \\
Let $Y = \{Y(\mathbf{x}): \mathbf{x}\in\mathbb{R}^{d}\}$ denote a stochastic process in $\mathbb{R}^{d}$, and let $v$ denote any non-random measure such that $v(Y) = \int_{\mathbb{R}^{d}} Y(\mathbf{x}) v(\mathrm{d}\mathbf{x})$, where the integral exists almost surely. The generalised cumulant functional (GCF) of $Y$ with respect to $v$ is given by: $C(\theta ; v(Y)) = \log\mathbb{E}\left[\exp\left(i\theta v\left(Y\right)\right)\right]$.
\end{Def}
To compute the JCGF of a VMMA, Y, we first condition on $\sigma^{2}(\bm{\xi})$ and obtain the conditional GCF: 
\\
\begin{thm}
\label{thm:GCF}
Let $Y(\mathbf{x})$ be a VMMA defined by (\ref{eqn:VMMA}). Assume that for all $\bm{\xi} \in \mathbb{R}^{d}$, $h(\bm{\xi}) = \int_{\mathbb{R}^{d}} g(\mathbf{x}-\bm{\xi})\sigma(\bm{\xi}) v(\mathrm{d}\mathbf{x}) < \infty$, and that $h(\bm{\xi})$ is integrable with respect to the Gaussian basis $W$. Then,  with $W'$ denoting the seed of $W$, the GCF of $Y|\sigma^{2}$ with respect to $v$ can be expressed as:
\begin{equation*}
C(\theta ; v(Y)|\mathcal{F}^{\sigma}) = \int_{\mathbb{R}^{d}}C(\theta h(\bm{\xi}) ; W') \mathrm{d}\bm{\xi} = - \frac{1}{2}\theta^{2}\int_{\mathbb{R}^{d}} h^{2}(\bm{\xi}) \mathrm{d}\bm{\xi}.
\end{equation*}
\end{thm}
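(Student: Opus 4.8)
The plan is to reduce the conditional GCF to the standard cumulant formula for stochastic integrals against a homogeneous L\'evy basis, after a stochastic Fubini interchange. Throughout we argue conditionally on $\mathcal{F}^{\sigma}$, so that $\sigma$ is a fixed Lebesgue-measurable function and $Y(\mathbf{x}) = \int_{\mathbb{R}^{d}} g(\mathbf{x}-\bm{\xi})\sigma(\bm{\xi})\,W(\mathrm{d}\bm{\xi})$ is a well-defined $\mathcal{L}_{0}$-integral against the Gaussian basis $W$ by Condition \ref{con1}.

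First I would write $v(Y) = \int_{\mathbb{R}^{d}} Y(\mathbf{x})\,v(\mathrm{d}\mathbf{x}) = \int_{\mathbb{R}^{d}}\big(\int_{\mathbb{R}^{d}} g(\mathbf{x}-\bm{\xi})\sigma(\bm{\xi})\,W(\mathrm{d}\bm{\xi})\big) v(\mathrm{d}\mathbf{x})$ and apply a stochastic Fubini theorem to exchange the deterministic $v$-integration with the stochastic $W$-integration, obtaining $v(Y) = \int_{\mathbb{R}^{d}} h(\bm{\xi})\,W(\mathrm{d}\bm{\xi})$ with $h(\bm{\xi}) = \sigma(\bm{\xi})\int_{\mathbb{R}^{d}} g(\mathbf{x}-\bm{\xi})\,v(\mathrm{d}\mathbf{x})$ as in the statement. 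The two hypotheses --- pointwise finiteness of $h$ and $W$-integrability of $h$ (equivalently, by the $\mathcal{L}_{0}$ theory of \cite{RR1989} in the Gaussian case, $\int_{\mathbb{R}^{d}} h^{2}(\bm{\xi})\,\mathrm{d}\bm{\xi} < \infty$) --- are precisely what make both sides well-defined and license the interchange: one checks it first for simple $v$ and for $g(\cdot-\bm{\xi})\sigma(\bm{\xi})$ simple, where it amounts to rearranging a finite double sum, and then extends by approximation, using $\mathcal{L}_{0}$-convergence (convergence in probability) of the stochastic integrals together with the matching deterministic convergence.

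Next I would invoke the representation of the CGF of an integral against a homogeneous L\'evy basis: if $f$ is $L$-integrable with seed $L'$, then $C(\theta; \int_{\mathbb{R}^{d}} f(\bm{\xi})\,L(\mathrm{d}\bm{\xi})) = \int_{\mathbb{R}^{d}} C(\theta f(\bm{\xi}); L')\,\mathrm{d}\bm{\xi}$. For completeness I would sketch the derivation: for $f = \sum_{j} y_{j}\mathbf{1}_{E_{j}}$ with disjoint $E_{j}$, independent scattering gives $C(\theta; \sum_{j} y_{j} L(E_{j})) = \sum_{j} C(\theta y_{j}; L(E_{j}))$, and homogeneity (Definition \ref{defn:lseed}) gives $C(\theta y_{j}; L(E_{j})) = C(\theta y_{j}; L')\Leb(E_{j}) = \int_{E_{j}} C(\theta y_{j}; L')\,\mathrm{d}\bm{\xi}$, so the sides agree; the general case follows by taking simple $f_{m}\to f$, using $\int f_{m}\,\mathrm{d}L \to \int f\,\mathrm{d}L$ in probability (hence in distribution) and a dominated-convergence argument on the right, together with continuity of the distinguished logarithm. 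Applying this with $L = W$, $f = h$, and $W'\sim N(0,1)$ so that $C(u; W') = -u^{2}/2$, we get $C(\theta h(\bm{\xi}); W') = -\tfrac{1}{2}\theta^{2} h^{2}(\bm{\xi})$, and integrating over $\bm{\xi}$ yields the asserted closed form. (As a sanity check, conditionally on $\mathcal{F}^{\sigma}$ the quantity $v(Y)$ is a limit of Gaussian random variables, hence Gaussian with mean $0$ and variance $\int_{\mathbb{R}^{d}} h^{2}(\bm{\xi})\,\mathrm{d}\bm{\xi}$, which independently confirms the final expression.)

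The main obstacle is the rigorous justification of the stochastic Fubini step --- in particular, verifying that the iterated integral defining $v(Y)$ exists almost surely and equals $\int_{\mathbb{R}^{d}} h\,\mathrm{d}W$, rather than merely manipulating the symbols formally. I would therefore isolate this interchange as a lemma and prove it along the simple-function-then-approximation route above, making explicit that in the Gaussian case $W$-integrability of $h$ is equivalent to $\int_{\mathbb{R}^{d}} h^{2}(\bm{\xi})\,\mathrm{d}\bm{\xi}<\infty$, so that the right-hand side is finite and the statement is non-vacuous. The remaining ingredients --- the simple-function identity for the CGF, the limiting arguments, and the $N(0,1)$ seed computation --- are routine once the hypotheses are in force.
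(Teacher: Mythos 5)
Your proposal is correct and follows essentially the same route as the paper, which simply defers to the proof of Proposition 5 in \cite{BBV2012} (a stochastic Fubini interchange giving $v(Y)=\int_{\mathbb{R}^{d}}h(\bm{\xi})\,W(\mathrm{d}\bm{\xi})$, followed by the cumulant formula $C(\theta;\int f\,\mathrm{d}L)=\int C(\theta f(\bm{\xi});L')\,\mathrm{d}\bm{\xi}$ for homogeneous L\'evy bases, specialised to the standard Gaussian seed). You have merely written out in full the details that the paper delegates to that citation, including the simple-function approximation underlying both the Fubini step and the CGF identity.
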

\begin{proof}
This is analagous to the proof for Proposition 5 in \cite{BBV2012} with $h$ being defined differently to account for our definition of $Y(\mathbf{x})$, and with the L\'evy basis restricted to be standard Gaussian.
\end{proof}
Now, we use $v(\mathrm{d}\mathbf{x}) = \sum_{j = 1}^{n}\theta_{j}\delta_{\mathbf{x}_{j}}(\mathrm{d}\mathbf{x})$ so that $C(\theta ; v(Y) | \mathcal{F}^{\sigma})$ is the JCGF of $Y(\mathbf{x}_{1}), \dots, Y(\mathbf{x}_{n}) |\mathcal{F}^{\sigma}$:
\\
\begin{Cor} \label{cor:jcum}
Let $\mathbf{x}_{1}, \dots, \mathbf{x}_{n}$ be different locations in $\mathbb{R}^{d}$. The JCGF of $Y(\mathbf{x}_{1}), \dots, Y(\mathbf{x}_{n})|\mathcal{F}^{\sigma}$ is given by:
\begin{equation*}
\log\mathbb{E}\left[\exp\left(i \sum_{j = 1}^{n} \tilde{\theta}_{j}Y(\mathbf{x}_{j})\right)|\mathcal{F}^{\sigma}\right] = - \frac{1}{2}\sum_{j = 1}^{n} \sum_{k = 1}^{n}\tilde{\theta}_{j}\tilde{\theta}_{k}\int_{\mathbb{R}^{d}} g(\mathbf{x}_{j}-\bm{\xi})g(\mathbf{x}_{k} - \bm{\xi})\sigma^{2}(\bm{\xi}) \mathrm{d}\bm{\xi}.
\end{equation*}
\end{Cor}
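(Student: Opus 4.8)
The plan is to read the corollary off Theorem \ref{thm:GCF} by feeding it the right measure. Apply the theorem with the (signed) measure $v(\mathrm{d}\mathbf{x}) = \sum_{j=1}^{n}\tilde{\theta}_{j}\delta_{\mathbf{x}_{j}}(\mathrm{d}\mathbf{x})$ and evaluate at $\theta = 1$. Then $v(Y) = \sum_{j=1}^{n}\tilde{\theta}_{j}Y(\mathbf{x}_{j})$ is a finite sum of well-defined random variables, hence exists almost surely, so the GCF $C(\theta; v(Y)\,|\,\mathcal{F}^{\sigma})$ is well-defined and at $\theta = 1$ coincides with $\log\mathbb{E}\big[\exp\big(i\sum_{j=1}^{n}\tilde{\theta}_{j}Y(\mathbf{x}_{j})\big)\,\big|\,\mathcal{F}^{\sigma}\big]$, the left-hand side of the claimed identity.

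Next I would check the two hypotheses of Theorem \ref{thm:GCF} for this $v$. With $v$ as above, the function $h$ in the theorem becomes $h(\bm{\xi}) = \int_{\mathbb{R}^{d}} g(\mathbf{x}-\bm{\xi})\sigma(\bm{\xi})\,v(\mathrm{d}\mathbf{x}) = \sigma(\bm{\xi})\sum_{j=1}^{n}\tilde{\theta}_{j}\,g(\mathbf{x}_{j}-\bm{\xi})$, which is finite for every $\bm{\xi}$ since it is a finite sum and $\sigma$ takes finite values. For $W$-integrability it suffices that $\int_{\mathbb{R}^{d}} h^{2}(\bm{\xi})\,\mathrm{d}\bm{\xi} < \infty$; expanding the square gives $\int_{\mathbb{R}^{d}} h^{2}(\bm{\xi})\,\mathrm{d}\bm{\xi} = \sum_{j=1}^{n}\sum_{k=1}^{n}\tilde{\theta}_{j}\tilde{\theta}_{k}\int_{\mathbb{R}^{d}} g(\mathbf{x}_{j}-\bm{\xi})g(\mathbf{x}_{k}-\bm{\xi})\sigma^{2}(\bm{\xi})\,\mathrm{d}\bm{\xi}$, and each diagonal term is finite by Condition \ref{con1} applied at $\mathbf{x}_{j}$, while the cross terms are controlled by Cauchy--Schwarz; the $\mathcal{L}_{0}$-integrability criterion for a standard Gaussian basis (finite second moment) then yields integrability of $h$ against $W$.

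Finally I would substitute $h^{2}(\bm{\xi}) = \sigma^{2}(\bm{\xi})\sum_{j=1}^{n}\sum_{k=1}^{n}\tilde{\theta}_{j}\tilde{\theta}_{k}\,g(\mathbf{x}_{j}-\bm{\xi})g(\mathbf{x}_{k}-\bm{\xi})$ into the closed form $C(\theta; v(Y)\,|\,\mathcal{F}^{\sigma}) = -\tfrac{1}{2}\theta^{2}\int_{\mathbb{R}^{d}} h^{2}(\bm{\xi})\,\mathrm{d}\bm{\xi}$ from Theorem \ref{thm:GCF}, interchange the finite double sum with the integral, and set $\theta = 1$; this is exactly the right-hand side of the stated formula. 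No higher-order terms appear, consistent with the fact that, conditionally on $\mathcal{F}^{\sigma}$, the vector $(Y(\mathbf{x}_{1}),\dots,Y(\mathbf{x}_{n}))$ is Gaussian.

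The only step that is not pure bookkeeping is the integrability verification: one must confirm that Condition \ref{con1} at the individual locations $\mathbf{x}_{1},\dots,\mathbf{x}_{n}$, which bounds only the diagonal integrals $\int g^{2}(\mathbf{x}_{j}-\bm{\xi})\sigma^{2}(\bm{\xi})\,\mathrm{d}\bm{\xi}$, is enough (via Cauchy--Schwarz on the cross terms) to make the full quadratic form $\int h^{2}$ finite and hence render $h$ integrable with respect to $W$. Everything else — choosing $v$, identifying the GCF at $\theta=1$ with the conditional JCGF, and swapping a finite sum with an integral — is routine.
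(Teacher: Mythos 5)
Your proposal is correct and follows essentially the same route as the paper: choose $v$ to be a weighted sum of Dirac masses at $\mathbf{x}_{1},\dots,\mathbf{x}_{n}$, compute $h(\bm{\xi}) = \sigma(\bm{\xi})\sum_{j}\tilde{\theta}_{j}g(\mathbf{x}_{j}-\bm{\xi})$, substitute into the closed form of Theorem \ref{thm:GCF}, and expand the square (the paper writes $v$ with coefficients $\theta_{j}$ and sets $\tilde{\theta}_{j}=\theta\theta_{j}$ rather than evaluating at $\theta=1$, which is a trivially equivalent parameterisation). Your additional verification of the theorem's integrability hypotheses via Condition \ref{con1} and Cauchy--Schwarz is a detail the paper omits, but it does not change the argument.
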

This means that $Y(\mathbf{x}_{1}), \dots, Y(\mathbf{x}_{n})|\mathcal{F}^{\sigma}\sim N_{n}(\mathbf{0}, \Sigma)$ where $\Sigma_{jk} = \int_{\mathbb{R}^{d}}g(\mathbf{x}_{j} - \bm{\xi})g(\mathbf{x}_{k} - \bm{\xi})\sigma^{2}(\bm{\xi})\mathrm{d}\bm{\xi}$ for $j, k = 1, \dots, n$. 
\begin{proof}
We first compute $h(\bm{\xi})$:
\begin{equation*}
h(\bm{\xi}) = \int_{\mathbb{R}^{d}} g(\mathbf{x}-\bm{\xi})\sigma(\bm{\xi}) v(\mathrm{d}\mathbf{x}) = \int_{\mathbb{R}^{d}} g(\mathbf{x}-\bm{\xi})\sigma(\bm{\xi}) \sum_{j = 1}^{n}\theta_{j}\delta_{\mathbf{x}_{j}}(\mathrm{d}\mathbf{x}) = \sum_{j = 1}^{n}\theta_{j}g(\mathbf{x}_{j}-\bm{\xi})\sigma(\bm{\xi}). 
\end{equation*}
With $\tilde{\theta}_{j} = \theta\theta_{j}$, the JCGF of $Y(\mathbf{x}_{1}), \dots, Y(\mathbf{x}_{n})|\mathcal{F}^{\sigma}$ is given by:
\begin{equation}
\log\mathbb{E}\left[\exp\left(i \sum_{j = 1}^{n} \tilde{\theta}_{j}Y(\mathbf{x}_{j})\right)|\mathcal{F}^{\sigma}\right] = - \frac{1}{2}\theta^{2}\int_{\mathbb{R}^{d}} h^{2}(\bm{\xi}) \mathrm{d}\bm{\xi} = - \frac{1}{2}\sum_{j = 1}^{n} \sum_{k = 1}^{n}\tilde{\theta}_{j}\tilde{\theta}_{k}\int_{\mathbb{R}^{d}} g(\mathbf{x}_{j}-\bm{\xi})g(\mathbf{x}_{k} - \bm{\xi})\sigma^{2}(\bm{\xi}) \mathrm{d}\bm{\xi}. \label{eqn:CJCGF}
\end{equation}
This corresponds to a multivariate normal distribution with the parameters stated in the Theorem. 
\end{proof}
From the multivariate normal distribution, we can infer the covariance and correlation structures of our VMMA. Let $\mathbf{x}$ and $\mathbf{x}^{*}$ be two different locations in $\mathbb{R}^{d}$, then $\Cov(Y(\mathbf{x}), Y(\mathbf{x}^{*})|\mathcal{F}^{\sigma}) = \int_{\mathbb{R}^{d}}g(\mathbf{x} - \bm{\xi})g(\mathbf{x}^{*} - \bm{\xi})\sigma^{2}(\bm{\xi})\mathrm{d}\bm{\xi}$ and:
\begin{equation*}
\Cov(Y(\mathbf{x}), Y(\mathbf{x}^{*})) = \int_{\mathbb{R}^{d}}g(\mathbf{x} - \bm{\xi})g(\mathbf{x}^{*} - \bm{\xi})\mathbb{E}\left[\sigma^{2}(\bm{\xi})\right]\mathrm{d}\bm{\xi} = \mathbb{E}\left[\sigma^{2}(\mathbf{0})\right]\int_{\mathbb{R}^{d}}g(\mathbf{w})g(\mathbf{x}^{*} - \mathbf{x} + \mathbf{w})\mathrm{d}\mathbf{w},
\end{equation*}
since $\sigma^{2}(\bm{\xi})$ is stationary and where $\mathbf{w} = \mathbf{x} - \bm{\xi}$. As this is a function of the location difference and not the locations themselves, $Y(\mathbf{x})$ has second-order stationarity. From the covariance function, we also find that the correlation structure does not depend on the stochastic volatility:
\\
\begin{equation*}
\Corr(Y(\mathbf{x}), Y(\mathbf{x}^{*})) = \frac{\int_{\mathbb{R}^{d}}g(\mathbf{w})g(\mathbf{x}^{*} - \mathbf{x} + \mathbf{w})\mathrm{d}\mathbf{w}}{\int_{\mathbb{R}^{d}}g^{2}(\mathbf{w})\mathrm{d}\mathbf{w}}. 
\end{equation*}
An LMA with zero mean and the same kernel will also have this correlation structure. The effects of the stochastic volatility in the VMMA, and hence a difference from the GMA and the LMA, lies in the higher order correlations. 
\\
\begin{Eg}\label{eg:wegCorr}
For Model (\ref{eqn:weg}), we have $g(\mathbf{w}) = \frac{\lambda}{\pi}\exp\left(-\lambda \mathbf{w}^{T}\mathbf{w}\right)$. By completing the squares:
\begin{align*}
\int_{\mathbb{R}^{2}}g(\mathbf{w})g(\mathbf{x}^{*} - \mathbf{x} + \mathbf{w})\mathrm{d}\mathbf{w} &=\frac{\lambda^{2}}{\pi^{2}}\int_{\mathbb{R}^{2}}\exp\left(-\lambda \mathbf{w}^{T}\mathbf{w}\right)\exp\left(-\lambda \left(\mathbf{x}^{*} - \mathbf{x} + \mathbf{w}\right)^{T}\left(\mathbf{x}^{*} - \mathbf{x} + \mathbf{w}\right)\right)\mathrm{d}\mathbf{w} \\
&= \frac{\lambda^{2}}{\pi^{2}}\int_{\mathbb{R}} \exp\left(-\lambda \left[w_{1}^{2} + (w_{1} - x_{1} + x^{*}_{1})^{2}\right]\right) \mathrm{d}w_{1} \int_{\mathbb{R}} \exp\left(-\lambda \left[w_{2}^{2} + (w_{2} - x_{2} + x^{*}_{2})^{2}\right]\right) \mathrm{d}w_{2} \\
&= \frac{\lambda}{2\pi}\exp\left(- \frac{\lambda\left(\mathbf{x}-\mathbf{x}^{*}\right)^{T}\left(\mathbf{x}-\mathbf{x}^{*}\right)}{2}\right). \\
\Rightarrow \Corr(Y(\mathbf{x}), Y(\mathbf{x}^{*})) &= \exp\left(- \frac{\lambda\left(\mathbf{x}-\mathbf{x}^{*}\right)^{T}\left(\mathbf{x}-\mathbf{x}^{*}\right)}{2}\right).
\end{align*}
\end{Eg}
\vspace{5mm}
\begin{Cor}
Let $Y(\mathbf{x})$ be a VMMA and $\mathbf{x}, \mathbf{x}^{*}$ denote two arbitrary locations in $\mathbb{R}^{d}$. Then:
\begin{align*}
\Cov(Y^{2}(\mathbf{x}), Y^{2}(\mathbf{x}^{*})|\mathcal{F}^{\sigma}) &= \mathbb{E}\left[Y^{2}(\mathbf{x})Y^{2}(\mathbf{x}^{*})|\mathcal{F}^{\sigma}\right] - \sigma^{2}_{I}(\mathbf{x})\sigma^{2}_{I}(\mathbf{x}^{*}) = 2\left(\int_{\mathbb{R}^{d}} g(\mathbf{x}-\bm{\xi})g(\mathbf{x}^{*}-\bm{\xi})\sigma^{2}(\bm{\xi})\mathrm{d}\bm{\xi}\right)^{2}, \\
\text{and } \Cov(Y^{2}(\mathbf{x}), Y^{2}(\mathbf{x}^{*})) &= \mathbb{E}\left[\Cov(Y^{2}(\mathbf{x}), Y^{2}(\mathbf{x}^{*})|\mathcal{F}^{\sigma})\right] + \Cov(\sigma^{2}_{I}(\mathbf{x}), \sigma^{2}_{I}(\mathbf{x}^{*})) \\
&= 2\mathbb{E}\left[\left(\int_{\mathbb{R}^{d}} g(\mathbf{x}-\bm{\xi})g(\mathbf{x}^{*}-\bm{\xi})\sigma^{2}(\bm{\xi})\mathrm{d}\bm{\xi}\right)^{2}\right] + \Cov(\sigma^{2}_{I}(\mathbf{x}), \sigma^{2}_{I}(\mathbf{x}^{*})).
\end{align*}
\end{Cor}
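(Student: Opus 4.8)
The plan is to exploit the conditional Gaussianity established in Corollary~\ref{cor:jcum}: conditional on $\mathcal{F}^{\sigma}$, the pair $(Y(\mathbf{x}), Y(\mathbf{x}^{*}))$ is bivariate normal with mean $\mathbf{0}$ and covariance matrix $\Sigma$ whose entries are $\Sigma_{11} = \sigma^{2}_{I}(\mathbf{x})$, $\Sigma_{22} = \sigma^{2}_{I}(\mathbf{x}^{*})$ and $\Sigma_{12} = \Sigma_{21} = \int_{\mathbb{R}^{d}} g(\mathbf{x}-\bm{\xi})g(\mathbf{x}^{*}-\bm{\xi})\sigma^{2}(\bm{\xi})\,\mathrm{d}\bm{\xi}$. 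The first identity then follows from Isserlis' theorem (the Wick formula for Gaussian moments): for a centred bivariate Gaussian vector one has $\mathbb{E}\left[Y^{2}(\mathbf{x})Y^{2}(\mathbf{x}^{*})|\mathcal{F}^{\sigma}\right] = \Sigma_{11}\Sigma_{22} + 2\Sigma_{12}^{2}$, so subtracting $\sigma^{2}_{I}(\mathbf{x})\sigma^{2}_{I}(\mathbf{x}^{*}) = \Sigma_{11}\Sigma_{22}$ leaves $2\Sigma_{12}^{2}$, which is exactly the claimed expression. (Equivalently, the same constant can be read off by twice differentiating the conditional JCGF in~(\ref{eqn:CJCGF}).)

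For the unconditional covariance, I would apply the law of total covariance with the conditioning $\sigma$-algebra $\mathcal{F}^{\sigma}$, namely $\Cov(Y^{2}(\mathbf{x}), Y^{2}(\mathbf{x}^{*})) = \mathbb{E}\left[\Cov(Y^{2}(\mathbf{x}), Y^{2}(\mathbf{x}^{*})|\mathcal{F}^{\sigma})\right] + \Cov\left(\mathbb{E}[Y^{2}(\mathbf{x})|\mathcal{F}^{\sigma}], \mathbb{E}[Y^{2}(\mathbf{x}^{*})|\mathcal{F}^{\sigma}]\right)$. Since $\mathbb{E}[Y^{2}(\mathbf{x})|\mathcal{F}^{\sigma}] = \sigma^{2}_{I}(\mathbf{x})$ is precisely the conditional variance computed at the start of Section~\ref{sec:tprop}, and likewise at $\mathbf{x}^{*}$, the second term equals $\Cov(\sigma^{2}_{I}(\mathbf{x}), \sigma^{2}_{I}(\mathbf{x}^{*}))$; substituting the first identity into the first term and taking the expectation inside yields $2\,\mathbb{E}\left[\left(\int_{\mathbb{R}^{d}} g(\mathbf{x}-\bm{\xi})g(\mathbf{x}^{*}-\bm{\xi})\sigma^{2}(\bm{\xi})\,\mathrm{d}\bm{\xi}\right)^{2}\right]$, completing the proof.

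The computation is essentially mechanical once the conditional law is in hand; the only point requiring care is checking that the expectations involved are finite, so that the total-covariance decomposition and the interchange of expectation with the spatial integral are legitimate. This follows from Condition~\ref{con1} together with the finite-second-moment assumption on $\sigma^{2}$ (and, in the two-tiered settings, on the subordinator $L$), which make $\sigma^{2}_{I}$ square-integrable and hence give $Y^{2}(\mathbf{x})$ a finite variance; Tonelli's theorem then licenses pulling $\mathbb{E}$ through the non-negative integrand. I expect this integrability bookkeeping to be the only mild obstacle — the Wick-formula step and the law of total covariance are both standard.
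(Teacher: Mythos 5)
Your proposal is correct and follows essentially the same route as the paper: the authors obtain $\mathbb{E}\left[Y^{2}(\mathbf{x})Y^{2}(\mathbf{x}^{*})|\mathcal{F}^{\sigma}\right]$ by differentiating the conditional JCGF in (\ref{eqn:CJCGF}) twice in each of $\tilde{\theta}_{1}$ and $\tilde{\theta}_{2}$ at zero, which is exactly the Wick/Isserlis computation you cite, and the remaining step is the law of total covariance you spell out. Your explicit integrability remarks are a minor (and harmless) addition beyond what the paper records.
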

\begin{proof}
We calculate $\mathbb{E}\left[Y^{2}(\mathbf{x})Y^{2}(\mathbf{x}^{*})|\mathcal{F}^{\sigma}\right]$ by setting $n = 2$, differentiating the conditional JCGF in (\ref{eqn:CJCGF}) with respect to $\tilde{\theta}_{1}$ and $\tilde{\theta}_{2}$ twice each, and setting these to be equal to $0$. The rest follows easily.
\end{proof}

\begin{Eg}\label{eg:2oCov}
For Model (\ref{eqn:weg}):
\begin{equation*}
\Cov(Y^{2}(\mathbf{x}), Y^{2}(\mathbf{x}^{*})) =A\exp\left(-\lambda \left(\mathbf{x} - \mathbf{x}^{*}\right)^{T}\left(\mathbf{x} - \mathbf{x}^{*}\right)\right) + B\exp\left(\frac{-\lambda\eta}{2\lambda + \eta}\left(\mathbf{x} - \mathbf{x}^{*}\right)^{T}\left(\mathbf{x} - \mathbf{x}^{*}\right)\right),
\end{equation*}
where $A =  (b\lambda\eta + a^{2}(2\lambda + \eta)\pi)\lambda^{2}(2\pi^{3}(2\lambda+ \eta)^{-1}$ and $B = b\lambda^{3}\eta(4\pi^{3}(2\lambda + \eta))^{-1}$. The details of the computation can be found in the Appendix.
\end{Eg}

\subsection{Unconditional joint distribution and stationarity}

By exponentiating the expression in Corollary \ref{cor:jcum} for the conditional JCGF and taking expectations with respect to $\sigma^{2}$, we obtain the unconditional joint characteristic function (JCF) of $Y(\mathbf{x}_{1}), \dots, Y(\mathbf{x}_{n})$: 
\begin{equation*}
\mathbb{E}\left[\exp\left(i \sum_{j = 1}^{n} \tilde{\theta}_{j}Y(\mathbf{x}_{j})\right)\right] = \mathbb{E}\left[\exp\left(- \frac{1}{2}\sum_{j = 1}^{n} \sum_{k = 1}^{n}\tilde{\theta}_{j}\tilde{\theta}_{k}\int_{\mathbb{R}^{d}} g(\mathbf{x}_{j}-\bm{\xi})g(\mathbf{x}_{k} - \bm{\xi})\sigma^{2}(\bm{\xi}) \mathrm{d}\bm{\xi}\right)\right]. 
\end{equation*}
For specific $g$ and $\sigma^{2}$, this can be expressed as a function of the location differences since $Y$ is stationary:
\\
\begin{thm} \label{thm:station}
Let $Y(\mathbf{x})$ be a VMMA. Then, $Y(\mathbf{x})$ is a stationary process in $\mathbb{R}^{d}$. 
\end{thm}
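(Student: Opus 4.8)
The plan is to establish \emph{strict} stationarity by showing that the finite-dimensional distributions of $Y$ are invariant under an arbitrary spatial shift $\mathbf{h}\in\mathbb{R}^{d}$. Since a probability law on $\mathbb{R}^{n}$ is determined by its characteristic function, it suffices to check, for every collection of locations $\mathbf{x}_{1},\dots,\mathbf{x}_{n}$ and every $\tilde{\theta}_{1},\dots,\tilde{\theta}_{n}$, that the unconditional JCF of $\big(Y(\mathbf{x}_{1}+\mathbf{h}),\dots,Y(\mathbf{x}_{n}+\mathbf{h})\big)$ equals that of $\big(Y(\mathbf{x}_{1}),\dots,Y(\mathbf{x}_{n})\big)$.

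First I would start from the unconditional JCF displayed just above the statement, obtained by exponentiating the conditional JCGF in Corollary \ref{cor:jcum} and taking the expectation over $\sigma^{2}$. Replacing each $\mathbf{x}_{j}$ by $\mathbf{x}_{j}+\mathbf{h}$ and carrying out the change of variables $\bm{\xi}\mapsto\bm{\xi}+\mathbf{h}$ in every Lebesgue integral appearing in the exponent—legitimate because the Lebesgue measure is translation invariant and, for a fixed realisation of $\sigma^{2}$, the integrand $g(\mathbf{x}_{j}+\mathbf{h}-\bm{\xi})g(\mathbf{x}_{k}+\mathbf{h}-\bm{\xi})\sigma^{2}(\bm{\xi})$ is integrable by Condition \ref{con1} together with Cauchy--Schwarz—one rewrites the exponent as
\begin{equation*}
-\frac{1}{2}\sum_{j=1}^{n}\sum_{k=1}^{n}\tilde{\theta}_{j}\tilde{\theta}_{k}\int_{\mathbb{R}^{d}} g(\mathbf{x}_{j}-\bm{\xi})g(\mathbf{x}_{k}-\bm{\xi})\sigma^{2}(\bm{\xi}+\mathbf{h})\,\mathrm{d}\bm{\xi}.
\end{equation*}
Hence the shifted JCF equals $\mathbb{E}\big[\Phi(\{\sigma^{2}(\cdot+\mathbf{h})\})\big]$, where $\Phi$ is the fixed, measurable, modulus-$1$-bounded functional of a volatility path $s(\cdot)$ given by $\Phi(s)=\exp\!\big(-\tfrac12\sum_{j}\sum_{k}\tilde{\theta}_{j}\tilde{\theta}_{k}\int g(\mathbf{x}_{j}-\bm{\xi})g(\mathbf{x}_{k}-\bm{\xi})s(\bm{\xi})\,\mathrm{d}\bm{\xi}\big)$, and the unshifted JCF is $\mathbb{E}\big[\Phi(\{\sigma^{2}(\cdot)\})\big]$.

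I would then invoke the assumed stationarity of the volatility field, namely $\{\sigma^{2}(\bm{\xi}+\mathbf{h}):\bm{\xi}\in\mathbb{R}^{d}\}\stackrel{d}{=}\{\sigma^{2}(\bm{\xi}):\bm{\xi}\in\mathbb{R}^{d}\}$ as random fields. Since $\Phi$ is a fixed measurable functional, the two expectations coincide, so the shifted and unshifted JCFs are equal; as this holds for all locations and all $\tilde{\theta}_{j}$, all finite-dimensional laws of $Y$ are translation invariant, which is precisely strict stationarity.

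The main obstacle is the measure-theoretic bookkeeping rather than the idea: one must verify, on an event of full probability, that the relevant double integrals are finite—using Condition \ref{con1}, $g\in L^{2}(\mathbb{R}^{d})$, and stationarity of $\sigma^{2}$ (which in particular yields $\mathbb{E}[\sigma^{2}(\mathbf{0})]<\infty$ and hence integrability after an application of Fubini)—so that the substitution inside the exponent is licit, and that the map sending the volatility path to the exponent is jointly measurable so that $\Phi$ genuinely defines a functional of the field. Once this is in place, the remainder is the routine translation-invariance computation sketched above.
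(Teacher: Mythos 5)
Your proposal is correct in substance but takes a genuinely different, more top-down route than the paper. You start from the unconditional JCF already displayed after Corollary \ref{cor:jcum}, shift the locations, translate the Lebesgue integral so that the shift lands on $\sigma^{2}$, and then apply stationarity of $\sigma^{2}$ to the fixed bounded functional $\Phi$ of the volatility path; this is clean and short. The paper instead proves the result from first principles: it returns to the $\mathcal{L}_{0}$ construction, writes $Y(x_{j}+u)$ as a limit of simple-function integrals over a partition $(y_{k_p})$, uses the independent scattering and Gaussianity of $W$ to evaluate the conditional characteristic function as a finite product, and only then invokes stationarity of $\sigma^{2}$ --- applied to the \emph{finite-dimensional} vector $(\sigma(z_{1}+u),\dots,\sigma(z_{k_p}+u))$ via a Borel transformation. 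That discretisation is not mere pedantry: it is exactly the device that justifies the step you defer to ``measure-theoretic bookkeeping,'' namely that $\mathbb{E}[\Phi(\sigma^{2}(\cdot+\mathbf{h}))]=\mathbb{E}[\Phi(\sigma^{2}(\cdot))]$ when stationarity of $\sigma^{2}$ is only given in terms of finite-dimensional distributions, since a path-integral functional such as your $\Phi$ is not automatically measurable with respect to the cylinder $\sigma$-algebra. So your approach buys brevity and reuses Theorem \ref{thm:GCF}/Corollary \ref{cor:jcum}, at the cost of needing either a jointly measurable (separable) version of $\sigma^{2}$ or a Riemann-sum approximation of the exponent --- which is precisely what the paper's proof supplies. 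Either way the mathematical content matches, and your change of variables and the reduction to translation invariance of the finite-dimensional laws are correct.
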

\begin{proof}
We present the proof for $\mathbf{x}$ in $\mathbb{R}$. The case for general $\mathbb{R}^{d}$ follows analogously with more involved notation. Let $x_{0} <  \dots < x_{n-1}$ denote $n$ arbitrary locations in $\mathbb{R}$. We show that for any $u\in\mathbb{R}$, the JCF of $Y(x_{0} + u), \dots, Y(x_{n-1} + u)$ is the same as that of $Y(x_{0}), \dots, Y(x_{n-1})$: 
\begin{align}
&\mathbb{E}\left[\exp\left(i \sum_{j = 0}^{n-1}\theta_{j}\int_{\mathbb{R}}g(x_{j} + u - \xi)\sigma(\xi)W(\mathrm{d}\xi)\right)\right] \nonumber\\
&= \mathbb{E}\left[ \mathbb{E}\left[\exp\left(i \sum_{j = 0}^{n-1}\theta_{j}\int_{\mathbb{R}}g(x_{j} + u - \xi)\sigma(\xi)W(\mathrm{d}\xi)\right)|\mathcal{F}^{\sigma}\right] \right] \nonumber \\
&= \mathbb{E}\left[ \mathbb{E}\left[\exp\left(i \sum_{j = 0}^{n-1}\theta_{j}\lim_{p\rightarrow\infty} \sum_{I =2}^{k_{p}-1}  g(x_{j} + u - y_{I}) \sigma(y_{I}) W\left(\left(y_{I} - (y_{I} - y_{I-1})/2, y_{I} + (y_{I+1} - y_{I})/2\right]\right)\right)|\mathcal{F}^{\sigma}\right] \right], 
\end{align}
where we follow the $\mathcal{L}_{0}$ integration theory and use an approximating sequence for $g(x_{j} + u - \xi)\sigma(\xi)$:
\begin{equation*}
f_{p}(\xi) = \sum_{I =2}^{k_{p}-1}  g(x_{j} + u - y_{I}) \sigma(y_{I}) \mathbf{1}_{(y_{I} - (y_{I} - y_{I-1})/2, y_{I} + (y_{I+1} - y_{I})/2]}(\xi),
\end{equation*}
where $(y_{k_{p}})$ is a sequence of partitions such that:
\begin{equation*}
-\infty < y_{1} < y_{2} < \dots < y_{k_{p}} < \infty,
\end{equation*}
$k_{p} \rightarrow \infty$ and $\max_{I \in \{2, \dots, k_{p}\}}\left(y_{I} - y_{I-1}\right) \rightarrow 0$ as $p\rightarrow \infty$. Note that we have assumed that $g$ takes finite values so that this approximating sequence can be evaluated. If $g$ has a finite number of singularities, for example in the case of the Mat\'ern kernel (\ref{eqn:Mker}) with $(\alpha-d)/2) \in (-1/2, 0)$, shifts can be made to the evaluation points to avoid these. \\
Define another sequence of partitions $(z_{k_{p}}) = (y_{k_{p}} - u)$. By changing the order of taking limits and sums, and using $h(z_{I}) = \sum_{j = 0}^{n-1}\theta_{j} g(x_{j} - z_{I}) \sigma(z_{I} + u)$, the JCF can be written as:
\begin{align}
&\lim_{p\rightarrow\infty} \mathbb{E}\left[ \mathbb{E}\left[\exp\left(i \sum_{I =2}^{k_{p}-1} h(z_{I}) W\left(\left(z_{I} + u - (z_{I} - z_{I-1})/2, z_{I} + u + (z_{I+1} - z_{I})/2 \right]\right)\right)|\mathcal{F}^{\sigma}\right] \right] \nonumber \\
&= \lim_{p\rightarrow\infty} \mathbb{E}\left[ \prod\limits_{I =2}^{k_{p}-1} \mathbb{E}\left[\exp\left(i h(z_{I}) W\left(\left(z_{I} + u - (z_{I} - z_{I-1})/2, z_{I} + u + (z_{I+1} - z_{I})/2 \right]\right)\right)|\mathcal{F}^{\sigma}\right] \right] \label{eqn:prodrep} \\
&= \lim_{p\rightarrow\infty} \mathbb{E}\left[ \prod\limits_{I =2}^{k_{p}-1} \exp\left(-\frac{z_{I} + (z_{I+1} - z_{I})/2 - z_{I} + (z_{I} - z_{I-1})/2)}{2} h^{2}(z_{I}) \right)\right] \label{eqn:simpprod}
\end{align}
where (\ref{eqn:prodrep}) and (\ref{eqn:simpprod}) hold because $W$ is independently scattered and homogeneous standard Gaussian.\\
Recall that $\sigma^{2}$ is stationary. Since the term inside the expectation in (\ref{eqn:simpprod}) is a Borel transformation of $(\sigma(z_{1} + u), \dots, \sigma(z_{k_{p}} + u))$, it has the same distribution and expectation as a similar expression with $(\sigma(z_{1}), \dots, \sigma(z_{k_{p}}))$ instead. Since (\ref{eqn:simpprod}) no longer depends on $u$, we conclude that $Y$ is stationary.
\end{proof}

\section{Simulation} \label{sec:sim}

\subsection{A discrete convolution algorithm}

We focus on cases in $\mathbb{R}^{2}$ where the kernel function in the VMMA takes finite values. Let $\mathbf{x} = \{(x_{1}^{i}, x_{2}^{j}) = (x_{1}^{*} + i\triangle, x_{2}^{*} + j\triangle): i, j = -p, \dots, N+p-1\}$ be our simulation grid where $(x_{1}^{*}, x_{2}^{*})$ is the starting point, $\triangle$ is the grid size, $p \in \mathbb{N}$ is a kernel truncation parameter and $N\in\mathbb{N}$ is the number of coordinates in each spatial axis so that the final sample size is $N^{2}$. By discretizing the stochastic integral in (\ref{eqn:VMMA}), we can view a VMMA as a filtered process where $g(\mathbf{x} - \bm{\xi})$ is the kernel or filter and $\sigma(\bm{\xi})W^{*}(\bm{\xi})$ is the signal. Here, $W^{*}(\bm{\xi}) \stackrel{d}{=} N(0, \triangle^{2})$ is a random variable representing the Gaussian noise over the grid square centered at $\bm{\xi}$. These $W^{*}$s are independent across locations. In practice, this means that we approximate our VMMA by:
\begin{equation}
Y(x_{1}^{i}, x_{2}^{j}) \approx \sum_{I = i-p}^{i+p}  \sum_{J = j-p}^{j+p} g({x}_{1}^{i} - x_{1}^{I}, {x}_{2}^{j} - x_{2}^{J}) \sigma(x_{1}^{I}, x_{2}^{J}) W^{*}(x_{1}^{I}, x_{2}^{J}), \label{eqn:DCapprox}
\end{equation}
for $i, j = 0, \dots, N-1$. If $g$ is square-integrable, $g(\mathbf{x}-\bm{\xi})$ typically decreases very fast to $0$ as $|\mathbf{x} - \bm{\xi}| \rightarrow 0$. Thus, only small errors are incurred by truncating the kernel. Following the $\mathcal{L}_{0}$ integration theory and the proof of Theorem \ref{thm:station}, (\ref{eqn:DCapprox}) can also be viewed as an approximation of $Y$ when $g$ is approximated using a particular equispaced partition and the increments of $W$ are replaced by independent, identically distributed random variables $W^{*}$. \\
Suppose that we know the values of $\sigma$ over $\mathbf{x}$, then we can generate values for $\mathbf{Y} = \{Y(x_{1}^{i}, x_{2}^{j}): i, j = 0, \dots, N-1\}$ using (\ref{eqn:DCapprox}). To begin, we create a $(2p+1)\times(2p+1)$ kernel matrix $K$ as follows:
\begin{equation}
K = \begin{pmatrix}
g(p\triangle, -p\triangle) & \dots & g(-p\triangle, -p\triangle) \\
\vdots & \vdots & \vdots \\
\dots & g(0, 0) & \dots \\
\vdots & \vdots & \vdots \\
g(p\triangle, p\triangle) & \dots & g(-p\triangle, p\triangle)
\end{pmatrix}.
\end{equation}
Then, we generate $\{W^{*}(x_{1}^{i}, x_{2}^{j}) \}$ for $i, j = -p, \dots, N+p - 1$. We multiply each of these $W^{*}$s to their corresponding $\sigma$ values and create a $(N + 2p)\times(N+2p)$ signal matrix:
\begin{equation}
\sigma W^{*} = \begin{pmatrix}
\sigma(x_{1}^{-p}, x_{2}^{N+p-1}) W^{*}(x_{1}^{-p}, x_{2}^{N+p-1})  & \dots & \sigma(x_{1}^{N+p-1}, x_{2}^{N+p-1}) W^{*}(x_{1}^{N+p-1}, x_{2}^{N+p-1})\\
\vdots & \vdots & \vdots \\
\sigma(x_{1}^{-p}, x_{2}^{-p}) W^{*}(x_{1}^{-p}, x_{2}^{-p})  & \dots & \sigma(x_{1}^{N+p-1}, x_{2}^{-p}) W^{*}(x_{1}^{N+p-1}, x_{2}^{-p})
\end{pmatrix}.
\end{equation}

From (\ref{eqn:DCapprox}), we see that a matrix of VMMA values $Y = \{Y(x_{1}^{i}, x_{2}^{j}): i, j = 0, ..., N-1\}$ can be obtained through a filtering of $\sigma W^{*}$ by $K$. To compute this efficiently, we can use the convolution theorem and fast Fourier transform (fft) schemes which are available in software such as R. As summarised in Algorithm \ref{alg:DCalg}, this involves zero-padding $K$ and $\sigma W^{*}$ to the size $(N+4p)\times(N+4p)$, taking the Fourier transforms of the resulting matrices and cropping the inverse Fourier transform of their element-wise product. 
\\
For Model (\ref{eqn:weg}), we can compute $\sigma^{2}(\bm{\xi})$ first by replacing $\sigma W^{*}$ in Algorithm \ref{alg:DCalg} by a $(N+2p)\times(N+2p)$ matrix of generated values for the L\'evy noise over each grid square. Figure \ref{fig:GMAvsVMMA}(a) shows a simulated stochastic volatility layer over the region $\mathbf{x} = [-1.5, 11.5] \times [-1.5, 11.5]$ where $\eta = 4$ and $L$ is an IG basis whose seed has mean $a = 1$ and variance $b = 2$. Here, we have chosen $N = 261$, $\triangle = 0.05$ and $p = 30$. After taking square roots of the volatility values and multiplying the results with the Gaussian realisations, we can use Algorithm \ref{alg:DCalg} again to compute $Y$. Assuming that the same kernel truncation parameter $p$ is used, this results in $(261-2p)^{2} = 201^{2}$ values. This scheme takes about two seconds to generate one VMMA data set using a PC with characteristics: Intel$^{\circledR}$ Core\texttrademark i7-3770 CPU Processor @ 3.40GHz; 8GB of RAM; Windows 8.1 64-bit. An example of a simulated realisation is shown in Figure \ref{fig:GMAvsVMMA}(b). 
\\
Such a simulation scheme can also be used for generating data for GMAs. In Figures \ref{fig:GMAvsVMMA}(c), we show the realisation from a GMA with the same underlying Gaussian noise as the VMMA in Plot (b). The VMMA exhibits clustered extremes where the values of its stochastic volatility are high in Plot (a). This in turn has the effect of smoothing the VMMA surface as seen from the contours in Figure \ref{fig:GVContours}(d). The first column of plots in Figure \ref{fig:GVContours} correspond to the same realisations of the stochastic volatility, VMMA and GMA as those in Figure \ref{fig:GMAvsVMMA}.  

\begin{algorithm}[tbp]
\caption{Discrete convolution via Fourier tranforms}\label{alg:DCalg}
\begin{algorithmic}[1]
\State $M1 \gets matrix(0, N+4p, N+4p)$ \Comment{We create a $(N+4p)\times(N+4p)$ matrix of $0$s.}
\State $M1[1:(2p+1), 1:(2p+1)] \gets K$ \Comment{We insert $K$ into the top left corner of $M1$.}
\State $M2 \gets matrix(0, N+4p, N+4p)$ \Comment{We create another $(N+4p)\times(N+4p)$ matrix of $0$s.}
\State $M2[1:(N + 2p), 1:(N+2p)] \gets \sigma W^{*}$ \Comment{We insert $\sigma W^{*}$ into the top left corner of $M2$.}
\State $FM1 \gets fft(M1, inverse = FALSE)$ \Comment{We compute the forward fft of M1.}
\State $FM2 \gets fft(M2, inverse = FALSE)$ \Comment{We compute the forward fft of M2.}
\State $FM \gets FM1*FM2$ \Comment{We multiply FM1 and FM2 element-wise.}
\State $Y \gets Re(fft(FM, inverse = TRUE)/(N+4p)^{2})$ \Comment{We take the real part of the inverse Fourier transform of FM.}
\State $Y \gets Y[(2p+1):(N+2p), (2p+1):(N+2p)]$ \Comment{We crop the matrix $Y$ to obtain the final filtered process.}
\end{algorithmic}
\end{algorithm}

\begin{figure}[tbp]
\centering
\includegraphics[width = 6in, height = 2.2in, trim = 0.7in 0.4in 0.3in 0.1in]{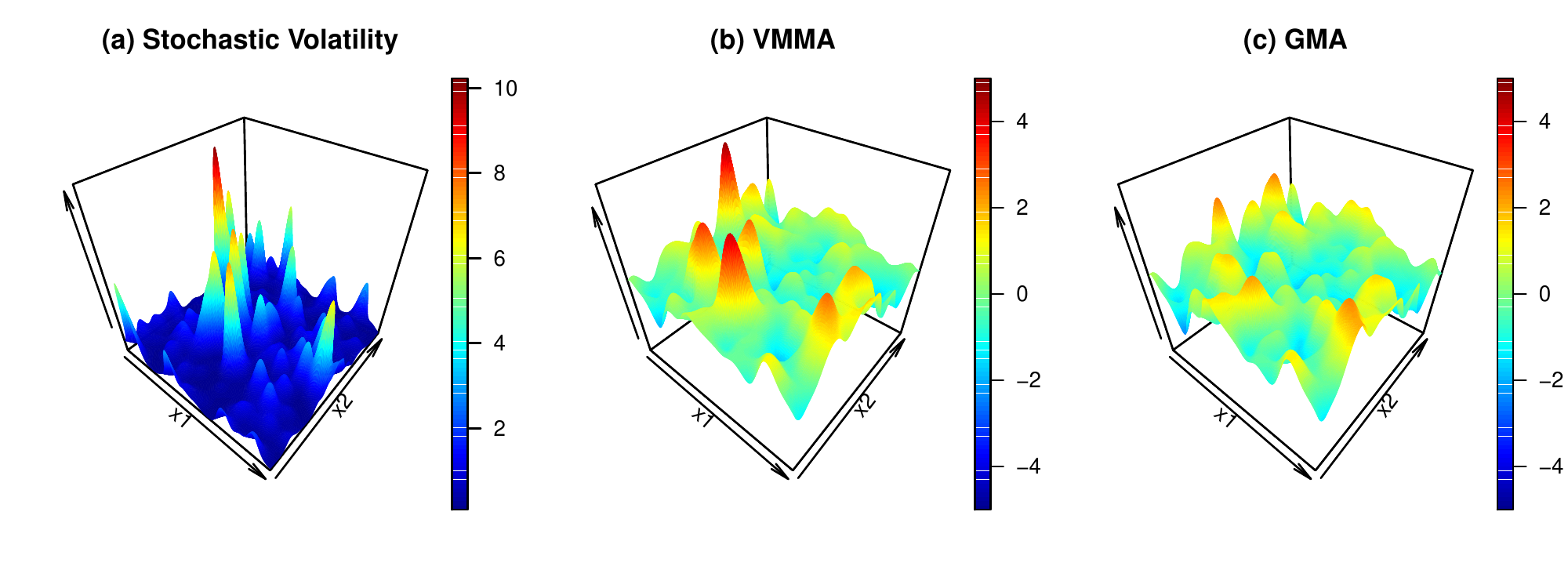}
\caption{Sample paths related to Model (\ref{eqn:weg}): (a) the stochastic volatility with $\eta = 4$; (b) the VMMA with $\lambda = 4$ and IG basis whose seed has mean $a = 1$ and variance $b = 2$; (c) the GMA with the same kernel structure as the VMMA. The same realisation of the Gaussian driving noise is used for the VMMA and GMA to facilitate comparison.}
\label{fig:GMAvsVMMA}
\end{figure}

\begin{figure}[tbp]
\centering
\includegraphics[width = 6in, height = 6.6in]{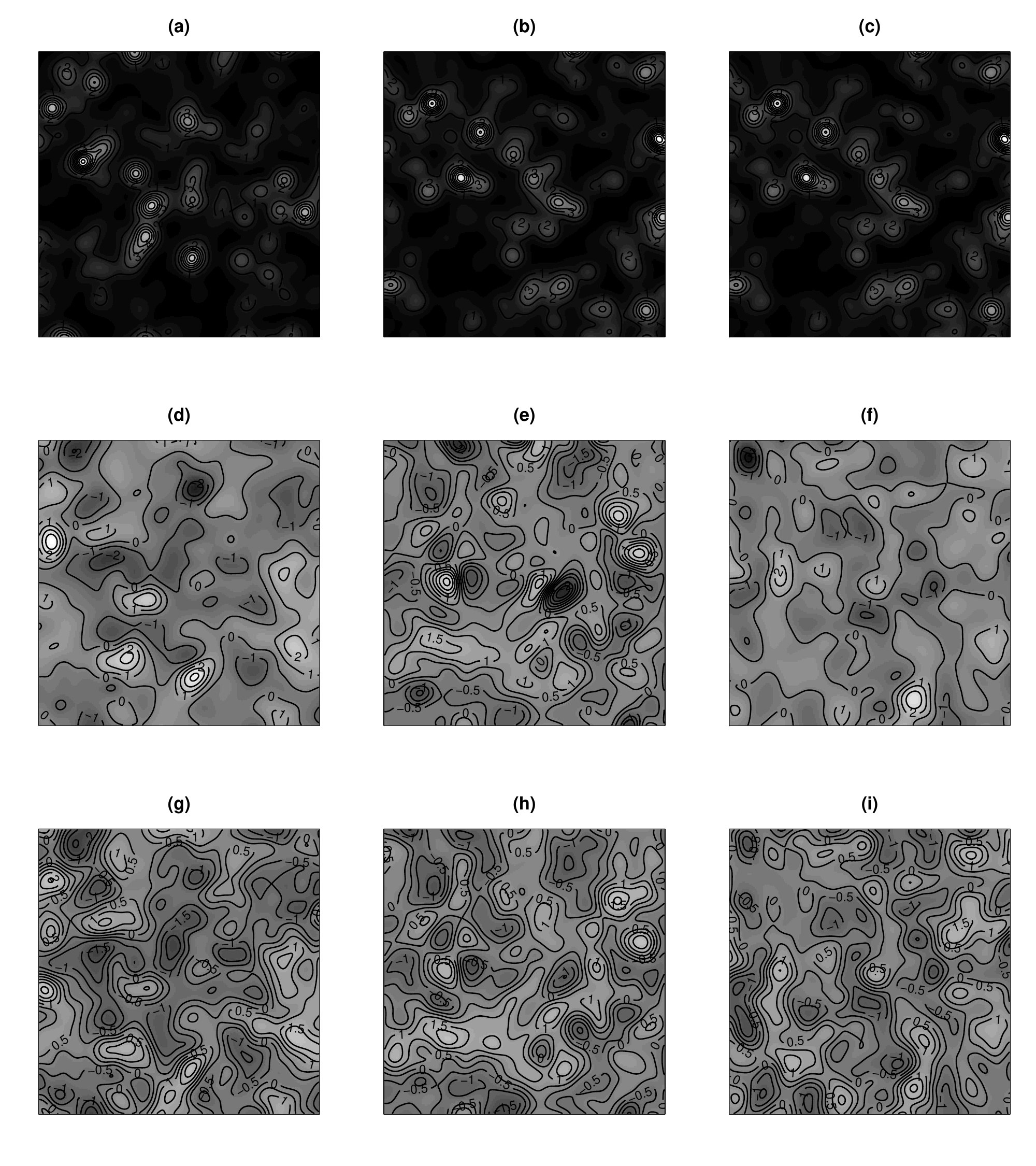}
\caption{Contour plots for three simulations from Model (\ref{eqn:weg}): (a)-(c) the stochastic volatility with $\eta = 4$; (d)-(f) the VMMA with $\lambda = 4$ and IG basis whose seed has mean $a = 1$ and variance $b = 2$; (g)-(i) the GMA with the same kernel structure as the VMMA. For the plots in the same column, the VMMA is constructed using associated stochastic volatility and the same realisation of the Gaussian driving noise is used for the VMMA and GMA to facilitate comparison.}
\label{fig:GVContours}
\end{figure}

\subsection{Mean-square error and its upper bound}

We have a two-step discrete convolution (TSDC) simulation algorithm for the two-tiered VMMA defined in (\ref{eqn:twotier}). Let $Y$ be such a VMMA and $Z =  \{Z(\mathbf{x})\}_{\mathbf{x}\in\mathbb{R}^{2}}$ be its TSDC approximation. Then, we can write $Z(\mathbf{x}) = \int_{\mathbb{R}^{2}} g_{\triangle}(\mathbf{x}, \bm{\xi})\sigma_{\triangle}(\bm{\xi})W(\mathrm{d}\bm{\xi})$, where:
\begin{align*}
g_{\triangle}(\mathbf{x}, \bm{\xi}) &=  \sum_{i = -p}^{p}\sum_{j = -p}^{p} \mathbf{1}_{\left[x_{1} + i\triangle- \frac{\triangle}{2}, x_{1} + i\triangle + \frac{\triangle}{2}\right)}(s_{1}) \mathbf{1}_{\left[x_{2} + j\triangle- \frac{\triangle}{2}, x_{2} + j\triangle + \frac{\triangle}{2}\right)}(s_{2}) g\left(i\triangle, j \triangle\right), \\
\sigma_{\triangle}^{2}(\bm{\xi}) &= \int_{\mathbb{R}^{2}}h_{\triangle}(\bm{\xi}, \mathbf{u})L(\mathrm{d}\mathbf{u}), \\
h_{\triangle}(\bm{\xi},\mathbf{u}) &=  \sum_{i = -\tilde{p}}^{\tilde{p}}\sum_{j = -\tilde{p}}^{\tilde{p}} \mathbf{1}_{\left[\xi_{1} + i\triangle- \frac{\triangle}{2}, \xi_{1} + i\triangle + \frac{\triangle}{2}\right)}(u_{1}) \mathbf{1}_{\left[\xi_{2} + j\triangle- \frac{\triangle}{2}, \xi_{2} + j\triangle + \frac{\triangle}{2}\right)}(u_{2}) h\left(i\triangle, j \triangle\right),
\end{align*}
and $\triangle$ is the grid size while $p$, $\tilde{p}$ are the kernel truncation parameters at the field and volatility layers respectively.
\\
Here, we give an analytical formula for the mean squared error (MSE) involved. Since this is difficult to evaluate in practice, we also give an upper bound which is useful in its own right. The corresponding proofs are given in the Appendix.

\clearpage

\begin{thm} \label{thm:mse}
Let $Y$ be a two-tier VMMA where the mean of the L\'evy seed $L'$ is given by $a > 0$ and let $Z$ be the TSDC approximation of $Y$. Then, $\mathbb{E}\left[|Y(\mathbf{x}) - Z(\mathbf{x})|^{2}\right] = T1 + T2 + T3$ where:
\begin{align*}
T1 &=   a\left(\int_{\mathbb{R}^{2}}g^{2}(\mathbf{x}-\bm{\xi})\mathrm{d}\bm{\xi}\right)\left(\int_{\mathbb{R}^{2}}h(\bm{\xi} - \mathbf{u})\mathrm{d}\mathbf{u} + \sum_{i, j = -\tilde{p}}^{\tilde{p}} h\left(i\triangle, j \triangle\right) \triangle^{2}\right) - 2\int_{\mathbb{R}^{2}}g^{2}(\mathbf{x}-\bm{\xi})\mathbb{E}\left[\sigma(\bm{\xi})\sigma_{\triangle}(\bm{\xi})\right]\mathrm{d}\bm{\xi}, \nonumber\\
T2 &= \left[\left(\int_{\mathbb{R}^{2}}g^{2}(\mathbf{x} - \bm{\xi})\mathrm{d}\bm{\xi} -  \sum_{i,j = -p}^{p} g^{2}\left(i\triangle, j \triangle\right)\triangle^{2}\right) + 2\sum_{i,j = -p}^{p} g\left(i\triangle, j \triangle\right)\left( g\left(i\triangle, j \triangle\right)\triangle^{2} - \int_{i\triangle- \frac{\triangle}{2}}^{i\triangle + \frac{\triangle}{2}} \int_{j\triangle- \frac{\triangle}{2}}^{j\triangle + \frac{\triangle}{2}} g(\mathbf{w})  \mathrm{d}\mathbf{w}\right)\right]\\
&\times \left(a \sum_{i', j' = -\tilde{p}}^{\tilde{p}} h\left(i'\triangle, j' \triangle\right) \triangle^{2}\right),\\
T3 &=   2\left[\int_{\mathbb{R}^{2}}g^{2}(\mathbf{x}-\bm{\xi})\mathbb{E}\left[\sigma(\bm{\xi})\sigma_{\triangle}(\bm{\xi})\right] \mathrm{d}\bm{\xi} - \int_{\mathbb{R}^{2}}g(\mathbf{x} - \bm{\xi})g_{\triangle}(\mathbf{x}, \bm{\xi})\mathbb{E}\left[\sigma(\bm{\xi})\sigma_{\triangle}(\bm{\xi})\right] \mathrm{d}\bm{\xi} \right.  \\ 
&\left. -  a \sum_{i', j'= -\tilde{p}}^{\tilde{p}} h\left(i'\triangle, j'\triangle\right) \triangle^{2}\left(\int_{\mathbb{R}^{2}}g^{2}(\mathbf{x}-\bm{\xi})\mathrm{d}\bm{\xi} - \sum_{i, j = -p}^{p}g\left(i\triangle, j \triangle\right) \int_{i\triangle- \frac{\triangle}{2}}^{i\triangle + \frac{\triangle}{2}} \int_{j\triangle- \frac{\triangle}{2}}^{j\triangle + \frac{\triangle}{2}} g(\mathbf{w})  \mathrm{d}\mathbf{w}\right)\right]. \nonumber
\end{align*}
By letting $\Psi_{L}(\theta)$ denote the Laplace exponent of $L'$ evaluated at $\theta$, we can also express $\mathbb{E}\left[\sigma(\bm{\xi})\sigma_{\triangle}(\bm{\xi})\right]$ as:
\begin{equation*}
\frac{1}{4\pi}\int_{0}^{\infty}\int_{0}^{\infty}\left[1 - e^{\int_{\mathbb{R}^{2}}\Psi_{L}(xh(\bm{\xi} - \mathbf{u}))\mathrm{d}\mathbf{u}} - e^{\int_{\mathbb{R}^{2}}\Psi_{L}(yh_{\triangle}(\bm{\xi}, \mathbf{u}))\mathrm{d}\mathbf{u}} + e^{\int_{\mathbb{R}^{2}}\Psi_{L}(xh(\bm{\xi} - \mathbf{u})+yh_{\triangle}(\bm{\xi}, \mathbf{u}))\mathrm{d}\mathbf{u}} \right] \frac{\mathrm{d} x \mathrm{d}l y}{x^{3/2}y^{3/2}}. 
\end{equation*}
\end{thm}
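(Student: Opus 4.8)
The plan is to compute $\mathbb{E}\left[|Y(\mathbf{x}) - Z(\mathbf{x})|^{2}\right]$ by expanding the square as $\mathbb{E}[Y^{2}(\mathbf{x})] - 2\mathbb{E}[Y(\mathbf{x})Z(\mathbf{x})] + \mathbb{E}[Z^{2}(\mathbf{x})]$ and then conditioning on $\mathcal{F}^{\sigma}$ throughout. First I would note that $Y(\mathbf{x})$ and $Z(\mathbf{x})$ are jointly Gaussian conditional on the volatility fields, so that all three terms reduce to deterministic integrals weighted by the (random) volatilities; specifically $\mathbb{E}[Y^{2}(\mathbf{x})|\mathcal{F}^{\sigma}] = \int g^{2}(\mathbf{x}-\bm{\xi})\sigma^{2}(\bm{\xi})\mathrm{d}\bm{\xi}$, $\mathbb{E}[Z^{2}(\mathbf{x})|\mathcal{F}^{\sigma}] = \int g_{\triangle}^{2}(\mathbf{x},\bm{\xi})\sigma_{\triangle}^{2}(\bm{\xi})\mathrm{d}\bm{\xi}$, and $\mathbb{E}[Y(\mathbf{x})Z(\mathbf{x})|\mathcal{F}^{\sigma}] = \int g(\mathbf{x}-\bm{\xi})g_{\triangle}(\mathbf{x},\bm{\xi})\sigma(\bm{\xi})\sigma_{\triangle}(\bm{\xi})\mathrm{d}\bm{\xi}$, using the isometry for the $\mathcal{L}_{0}$/Gaussian integral exactly as in Corollary \ref{cor:jcum}. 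Taking a further expectation over $\sigma^{2}$ and using Fubini, this produces three spatial integrals involving $\mathbb{E}[\sigma^{2}(\bm{\xi})]$, $\mathbb{E}[\sigma_{\triangle}^{2}(\bm{\xi})]$ and the cross-moment $\mathbb{E}[\sigma(\bm{\xi})\sigma_{\triangle}(\bm{\xi})]$.

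Next I would evaluate the first two moments explicitly: since $\sigma^{2}$ and $\sigma_{\triangle}^{2}$ are LMAs driven by $L$ with $\mathbb{E}[L'] = a$, we get $\mathbb{E}[\sigma^{2}(\bm{\xi})] = a\int_{\mathbb{R}^{2}} h(\bm{\xi}-\mathbf{u})\mathrm{d}\mathbf{u}$ and $\mathbb{E}[\sigma_{\triangle}^{2}(\bm{\xi})] = a\int_{\mathbb{R}^{2}} h_{\triangle}(\bm{\xi},\mathbf{u})\mathrm{d}\mathbf{u} = a\sum_{i,j=-\tilde{p}}^{\tilde{p}} h(i\triangle, j\triangle)\triangle^{2}$, the last equality following from the step-function form of $h_{\triangle}$. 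Likewise $\int g_{\triangle}^{2}(\mathbf{x},\bm{\xi})\mathrm{d}\bm{\xi} = \sum_{i,j=-p}^{p} g^{2}(i\triangle,j\triangle)\triangle^{2}$ and $\int g(\mathbf{x}-\bm{\xi})g_{\triangle}(\mathbf{x},\bm{\xi})\mathrm{d}\bm{\xi} = \sum_{i,j=-p}^{p} g(i\triangle,j\triangle)\int_{i\triangle-\triangle/2}^{i\triangle+\triangle/2}\int_{j\triangle-\triangle/2}^{j\triangle+\triangle/2} g(\mathbf{w})\mathrm{d}\mathbf{w}$. Substituting these into the expansion gives a raw expression which I would then reorganise algebraically: add and subtract the terms $a\int_{\mathbb{R}^{2}}g^{2}(\mathbf{x}-\bm{\xi})\mathrm{d}\bm{\xi}\cdot\int h(\bm{\xi}-\mathbf{u})\mathrm{d}\mathbf{u}$, $2\int g^{2}(\mathbf{x}-\bm{\xi})\mathbb{E}[\sigma(\bm{\xi})\sigma_{\triangle}(\bm{\xi})]\mathrm{d}\bm{\xi}$, and $a\sum h(i'\triangle,j'\triangle)\triangle^{2}\cdot\sum g(i\triangle,j\triangle)\int\int g(\mathbf{w})\mathrm{d}\mathbf{w}$ so that the total groups into the stated $T1$ (isolating the error in the $h$-integral and the cross term), $T2$ (isolating the error in approximating $\int g^{2}$ by the Riemann-type sum, against the discretized volatility mean) and $T3$ (the remaining mixed discretization error). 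This is bookkeeping rather than mathematics, but care is needed to check every term lands in exactly one of the three groups with the right sign.

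The genuinely substantive step is the last one: deriving the closed form for the cross-moment $\mathbb{E}[\sigma(\bm{\xi})\sigma_{\triangle}(\bm{\xi})] = \mathbb{E}\big[\sqrt{\sigma^{2}(\bm{\xi})}\sqrt{\sigma_{\triangle}^{2}(\bm{\xi})}\big]$, i.e.\ the expected product of square roots of two dependent, positive infinitely divisible random variables whose joint Laplace transform is known. My plan here is to use the identity $\sqrt{s} = \frac{1}{2\sqrt{\pi}}\int_{0}^{\infty}(1 - e^{-sx})x^{-3/2}\mathrm{d}x$ for $s > 0$, apply it to both $\sigma^{2}(\bm{\xi})$ and $\sigma_{\triangle}^{2}(\bm{\xi})$, multiply out the resulting double integral, and interchange expectation with the $(x,y)$-integrals by Tonelli (the integrand, suitably signed, is handled by the Frullani-type convergence at $0$ and the exponential decay at $\infty$). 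The four resulting expectations $\mathbb{E}[1]$, $\mathbb{E}[e^{-x\sigma^{2}}]$, $\mathbb{E}[e^{-y\sigma_{\triangle}^{2}}]$, $\mathbb{E}[e^{-x\sigma^{2}-y\sigma_{\triangle}^{2}}]$ are then evaluated via the L\'evy–Khintchine/Laplace functional for LMAs: for a subordinator-driven integral $\int f\,\mathrm{d}L$ with Laplace exponent $\Psi_{L}$, $\mathbb{E}[e^{-\int f\,\mathrm{d}L}] = \exp(\int \Psi_{L}(f(\mathbf{u}))\mathrm{d}\mathbf{u})$, and for the joint transform one uses the single driving basis $L$ with integrand $xh(\bm{\xi}-\mathbf{u}) + yh_{\triangle}(\bm{\xi},\mathbf{u})$. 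Assembling these four pieces and pulling the constant $\frac{1}{4\pi} = \big(\frac{1}{2\sqrt{\pi}}\big)^{2}$ outside yields exactly the claimed formula. The main obstacle I anticipate is justifying the interchange of integration order and the convergence of the iterated $(x,y)$-integral near the origin — the individual terms $1$, $e^{\cdots}$ are not integrable against $x^{-3/2}y^{-3/2}$ separately, so one must keep the full bracket together and verify it is $O(xy)$ as $(x,y)\to 0$ and decays fast enough at infinity before splitting anything; a secondary but routine obstacle is confirming the sign pattern in the $T1$–$T3$ regrouping.
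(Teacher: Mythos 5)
Your proposal is correct and follows essentially the same route as the paper: the conditional Gaussian isometry, the explicit first moments $\mathbb{E}[\sigma^{2}(\bm{\xi})]=a\int h$ and $\mathbb{E}[\sigma_{\triangle}^{2}(\bm{\xi})]=a\sum h(i\triangle,j\triangle)\triangle^{2}$, and, for the cross-moment, the identity $\sqrt{s}=\frac{1}{2\sqrt{\pi}}\int_{0}^{\infty}(1-e^{-sx})x^{-3/2}\,\mathrm{d}x$ combined with Fubini and the Laplace functional of the subordinator-driven LMA. The only difference is organisational: the paper sidesteps your post hoc add-and-subtract regrouping by first writing $Y(\mathbf{x})-Z(\mathbf{x})=\int g(\mathbf{x}-\bm{\xi})\left(\sigma(\bm{\xi})-\sigma_{\triangle}(\bm{\xi})\right)W(\mathrm{d}\bm{\xi})+\int\left(g(\mathbf{x}-\bm{\xi})-g_{\triangle}(\mathbf{x},\bm{\xi})\right)\sigma_{\triangle}(\bm{\xi})W(\mathrm{d}\bm{\xi})$, so that expanding the square lands directly on $T1$, $T2$ and $T3$ with the interpretations given in the subsequent Remark.
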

\begin{Rem}
$T1$ quantifies the part of the MSE that arises from the discrete convolution approximation of $\sigma(\bm{\xi})$ since if $\sigma_{\triangle}\rightarrow\sigma$, we expect it to decrease to zero. On the other hand, $T2$ gives us the part of the MSE that can be attributed to the kernel discretisation and truncation of $g$ since if $g_{\triangle}\rightarrow g$, $T2$ would decrease to zero. The error from the combined effect of simulating $\sigma$ and using $g_{\triangle}$ is represented by $T3$. 
\end{Rem} 
\vspace{4mm}
\begin{Rem}
Theorem \ref{thm:mse} gives us a semi-explicit formula for the MSE involved in our simulations. Although we can approximate this by numerical integrations, in general, it is hard to obtain a full analytic expression for this due to the Laplace exponent of $L'$. For example, if $L'$ has an IG distribution with mean and shape parameter $\mu > 0$ and $\alpha > 0$:
\begin{equation*}
\Psi_{L}(\theta) = \frac{\alpha}{\mu}\left(1 - \sqrt{1 + \frac{2\mu^{2}\theta}{\alpha}}\right) \Rightarrow  e^{\int_{\mathbb{R}^{2}}\Psi_{L}(xh(\bm{\xi} - \mathbf{u}))\mathrm{d}\mathbf{u}} = \exp\left(\frac{\alpha}{\mu} \int_{\mathbb{R}^{2}}\left(1 - \sqrt{1 + \frac{2\mu^{2}xh(\bm{\xi} - \mathbf{u})}{\alpha}}\right)\mathrm{d}\mathbf{u}\right),
\end{equation*}
which is hard to simplify in general.
\end{Rem}
\vspace{4mm} 
\begin{Rem}
If instead of modelling $\sigma^{2}$, we modelled $\sigma$ or $\log(\sigma^{2})$, $\mathbb{E}\left[\sigma(\bm{\xi})\sigma_{\triangle}(\bm{\xi})\right]$ would be easier to evaluate. The disadvantage of the latter strategies is that higher order moments or moments of transformations need to be used for the inference in Section \ref{sec:est}.
\end{Rem}

Although it is hard to calculate the MSE in practice due to the presence of the term $\mathbb{E}\left[\sigma(\bm{\xi})\sigma_{\triangle}(\bm{\xi})\right]$, we can obtain a useful upper bound by using the relationship between the harmonic, arithmetic and geometric means of $\sigma^{2}$ and $\sigma_{\triangle}^{2}$. 
\\
\begin{Cor} \label{cor:mseub}
Let $Y$ be a two-tier VMMA where the mean of the L\'evy seed $L'$ is given by $a > 0$ and let $Z$ be the TSDC approximation of $Y$. Then, $\mathbb{E}\left[|Y(\mathbf{x}) - Z(\mathbf{x})|^{2}\right] \leq T2 + T4 + T5$ where $T2$ is as defined in Theorem \ref{thm:mse}, and:
\begin{align*}
T4 &= \left( a\int_{\mathbb{R}^{2}}g^{2}(\mathbf{x}-\bm{\xi})\mathrm{d}\bm{\xi}\right) \frac{\left(\int_{\mathbb{R}^{2}}h(\bm{\xi} - \mathbf{u})\mathrm{d}\mathbf{u} -  \sum_{i = -\tilde{p}}^{\tilde{p}} \sum_{j = -\tilde{p}}^{\tilde{p}} h\left(i\triangle, j \triangle\right) \triangle^{2}\right)^{2}}{\int_{\mathbb{R}^{2}}h(\bm{\xi} - \mathbf{u})\mathrm{d}\mathbf{u} + \sum_{i = -\tilde{p}}^{\tilde{p}} \sum_{j = -\tilde{p}}^{\tilde{p}} h\left(i\triangle, j \triangle\right) \triangle^{2}}, \\
\text{while } T5 &=a\left[\int_{\mathbb{R}^{2}}h(\bm{\xi} - \mathbf{u})\mathrm{d}\mathbf{u} -  \sum_{i, j = -\tilde{p}}^{\tilde{p}} h\left(i\triangle, j \triangle\right) \triangle^{2}\right]\left[\left(\int_{\mathbb{R}^{2}}g^{2}(\mathbf{x}-\bm{\xi})\mathrm{d}\bm{\xi} - \sum_{i, j = -p}^{p}g^{2}\left(i\triangle, j \triangle\right) \triangle^{2} \right)\right. \\
&\left. + \sum_{i, j = -p}^{p}g\left(i\triangle, j \triangle\right) \left( g\left(i\triangle, j \triangle\right)\triangle^{2} - \int_{i\triangle- \frac{\triangle}{2}}^{i\triangle + \frac{\triangle}{2}} \int_{j\triangle- \frac{\triangle}{2}}^{j\triangle + \frac{\triangle}{2}} g(\mathbf{w})  \mathrm{d}\mathbf{w}\right) \right. \\
&\left.+ \left(\int_{\mathbb{R}^{2}}h(\bm{\xi} - \mathbf{u})\mathrm{d}\mathbf{u} - \sum_{i', j'= -\tilde{p}}^{\tilde{p}} h\left(i'\triangle, j'\triangle\right) \triangle^{2} \right) \left(\sum_{i, j = -p}^{p}g\left(i\triangle, j \triangle\right) \int_{i\triangle- \frac{\triangle}{2}}^{i\triangle + \frac{\triangle}{2}} \int_{j\triangle- \frac{\triangle}{2}}^{j\triangle + \frac{\triangle}{2}} g(\mathbf{w})  \mathrm{d}\mathbf{w}\right)\right]. 
\end{align*}
\end{Cor}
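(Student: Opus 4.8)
The plan is to start from the exact identity $\mathbb{E}[|Y(\mathbf{x})-Z(\mathbf{x})|^{2}]=T1+T2+T3$ of Theorem \ref{thm:mse}, leave $T2$ untouched, and replace the only intractable object, $\mathbb{E}[\sigma(\bm{\xi})\sigma_{\triangle}(\bm{\xi})]$, by the mean inequalities it satisfies. First I would record that, since $L$ is a subordinator with $\mathbb{E}[L']=a$, the quantities $\mu:=\mathbb{E}[\sigma^{2}(\bm{\xi})]=a\int_{\mathbb{R}^{2}}h(\bm{\xi}-\mathbf{u})\mathrm{d}\mathbf{u}$ and $\hat{\mu}:=\mathbb{E}[\sigma_{\triangle}^{2}(\bm{\xi})]=a\sum_{i,j=-\tilde{p}}^{\tilde{p}}h(i\triangle,j\triangle)\triangle^{2}$ are constants in $\bm{\xi}$ (the latter because $h_{\triangle}(\bm{\xi},\cdot)$ is a translate of a fixed step function and $L$ is homogeneous). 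The key observation is that, since $\sigma(\bm{\xi}),\sigma_{\triangle}(\bm{\xi})\ge 0$, the product $\sigma(\bm{\xi})\sigma_{\triangle}(\bm{\xi})=\sqrt{\sigma^{2}(\bm{\xi})\sigma_{\triangle}^{2}(\bm{\xi})}$ is the geometric mean of $\sigma^{2}(\bm{\xi})$ and $\sigma_{\triangle}^{2}(\bm{\xi})$, so pointwise $2\sigma^{2}\sigma_{\triangle}^{2}/(\sigma^{2}+\sigma_{\triangle}^{2})\le\sigma\sigma_{\triangle}\le(\sigma^{2}+\sigma_{\triangle}^{2})/2$.

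Next I would rewrite $T1=\big(\int g^{2}(\mathbf{x}-\bm{\xi})\mathrm{d}\bm{\xi}\big)\big(\mu+\hat{\mu}-2\mathbb{E}[\sigma(\bm{\xi})\sigma_{\triangle}(\bm{\xi})]\big)=\big(\int g^{2}\big)\mathbb{E}[(\sigma(\bm{\xi})-\sigma_{\triangle}(\bm{\xi}))^{2}]\ge 0$, matching the remark that $T1\to 0$ as $\sigma_{\triangle}\to\sigma$, and bound it from above via the harmonic-mean side: provided $\mathbb{E}[\sigma(\bm{\xi})\sigma_{\triangle}(\bm{\xi})]\ge 2\mu\hat{\mu}/(\mu+\hat{\mu})$, one gets $T1\le\big(\int g^{2}\big)\big(\mu+\hat{\mu}-4\mu\hat{\mu}/(\mu+\hat{\mu})\big)=\big(\int g^{2}\big)(\mu-\hat{\mu})^{2}/(\mu+\hat{\mu})=T4$. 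For $T3$, collecting the terms containing $\mathbb{E}[\sigma\sigma_{\triangle}]$ and writing $\int g\,g_{\triangle}:=\sum_{i,j=-p}^{p}g(i\triangle,j\triangle)\int_{i\triangle-\triangle/2}^{i\triangle+\triangle/2}\int_{j\triangle-\triangle/2}^{j\triangle+\triangle/2}g(\mathbf{w})\mathrm{d}\mathbf{w}$, one finds $T3=2\big(\int g^{2}-\int g\,g_{\triangle}\big)\big(\mathbb{E}[\sigma(\bm{\xi})\sigma_{\triangle}(\bm{\xi})]-\hat{\mu}\big)$; for the non-negative square-integrable kernels in question one checks $0\le\int g\,g_{\triangle}\le\int g^{2}$, so the arithmetic-mean bound $\mathbb{E}[\sigma\sigma_{\triangle}]\le(\mu+\hat{\mu})/2$ yields $T3\le\big(\int g^{2}-\int g\,g_{\triangle}\big)(\mu-\hat{\mu})$, and adding the non-negative term $a^{-1}(\mu-\hat{\mu})^{2}\int g\,g_{\triangle}$ and unwinding $\mu,\hat{\mu},\int g\,g_{\triangle}$ gives precisely $T5$. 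Summing, $\mathbb{E}[|Y-Z|^{2}]=T1+T2+T3\le T4+T2+T5$.

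The hard part will be the single genuine inequality $\mathbb{E}[\sigma(\bm{\xi})\sigma_{\triangle}(\bm{\xi})]\ge 2\mu\hat{\mu}/(\mu+\hat{\mu})$, i.e.\ that the expected geometric mean of $\sigma^{2}(\bm{\xi})$ and $\sigma_{\triangle}^{2}(\bm{\xi})$ dominates the harmonic mean of their expectations. This is \emph{false} for arbitrary non-negative random variables, so the argument must exploit that $\sigma^{2}(\bm{\xi})=\int h(\bm{\xi}-\mathbf{u})L(\mathrm{d}\mathbf{u})$ and $\sigma_{\triangle}^{2}(\bm{\xi})=\int h_{\triangle}(\bm{\xi},\mathbf{u})L(\mathrm{d}\mathbf{u})$ are built from the \emph{same} subordinator through \emph{non-negative} kernels. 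My first attempt would approximate both integrals by Riemann sums $\sum_{k}h_{k}L_{k}$ and $\sum_{k}\phi_{k}L_{k}$ over a common partition, use the Cauchy--Schwarz inequality $\big(\sum_{k}\sqrt{h_{k}\phi_{k}}L_{k}\big)^{2}\le\big(\sum_{k}h_{k}L_{k}\big)\big(\sum_{k}\phi_{k}L_{k}\big)$ inside the expectation, pass to the limit, and compare the resulting $a\int\sqrt{h\,h_{\triangle}}\,\mathrm{d}\mathbf{u}$ against $2\mu\hat{\mu}/(\mu+\hat{\mu})$; if that comparison turns out to be too lossy, the fallback is to estimate $\mathbb{E}[\sigma\sigma_{\triangle}]$ directly from the Laplace-transform representation given in Theorem \ref{thm:mse}. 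Everything else --- the sign checks $0\le\int g\,g_{\triangle}\le\int g^{2}$ and the bookkeeping that recasts the bounds in the stated forms $T4$ and $T5$ --- is routine algebra.
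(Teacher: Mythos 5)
Your decomposition and bookkeeping coincide with the paper's: both start from the exact identity $\mathbb{E}[|Y-Z|^{2}]=T1+T2+T3$ of Theorem \ref{thm:mse}, keep $T2$, and control $T1$ and $T3$ through the harmonic--geometric--arithmetic mean inequalities applied to $\sigma^{2}(\bm{\xi})$ and $\sigma_{\triangle}^{2}(\bm{\xi})$. Your treatment of $T3$ is in fact tidier than the paper's: writing $T3=2\bigl(\int g^{2}-\int g\,g_{\triangle}\bigr)\bigl(\mathbb{E}[\sigma\sigma_{\triangle}]-\hat{\mu}\bigr)$ and using only the arithmetic-mean upper bound, you land exactly on the stated $T5$ after adding the non-negative remainder $a^{-1}(\mu-\hat{\mu})^{2}\int g\,g_{\triangle}$, whereas the paper bounds the two pieces of $T3$ separately (using the harmonic-mean lower bound on the piece with negative coefficient) and its final rearrangement into the displayed $T5$ silently drops a factor of $\bigl(\int h+\sum h\triangle^{2}\bigr)^{-1}$. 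Do note, though, that your factored form requires $0\le\int g\,g_{\triangle}\le\int g^{2}$; the lower bound needs $g\ge 0$ and the upper bound is not automatic for an arbitrary discretisation (it can fail if the midpoint rule over-weights $g$), so it deserves a line of justification rather than the label ``routine''.

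The one step you leave open, $\mathbb{E}[\sigma(\bm{\xi})\sigma_{\triangle}(\bm{\xi})]\ge 2\mu\hat{\mu}/(\mu+\hat{\mu})$, is precisely where the paper's own argument is weakest, and your suspicion about it is well founded. The paper derives it from the pointwise bound $\sigma\sigma_{\triangle}\ge 2(\sigma^{-2}+\sigma_{\triangle}^{-2})^{-1}$ followed by Jensen applied to $x\mapsto 1/x$ twice; the first application, $\mathbb{E}[(\sigma^{-2}+\sigma_{\triangle}^{-2})^{-1}]\ge(\mathbb{E}[\sigma^{-2}]+\mathbb{E}[\sigma_{\triangle}^{-2}])^{-1}$, is fine, but the second step passes from $(\mathbb{E}[\sigma^{-2}]+\mathbb{E}[\sigma_{\triangle}^{-2}])^{-1}$ to $(1/\mathbb{E}[\sigma^{2}]+1/\mathbb{E}[\sigma_{\triangle}^{2}])^{-1}$, which runs Jensen in the wrong direction since $\mathbb{E}[1/\sigma^{2}]\ge 1/\mathbb{E}[\sigma^{2}]$. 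So the inequality is not established by the paper either, and, as you observe, it is false for generic non-negative pairs (independent non-degenerate copies already violate it). Your two proposed repairs do not obviously close it: the Cauchy--Schwarz/Riemann-sum route yields $\mathbb{E}[\sigma\sigma_{\triangle}]\ge a\int\sqrt{h\,h_{\triangle}}$, and comparing $\int\sqrt{h\,h_{\triangle}}$ with $2\bigl(\int h\bigr)\bigl(\int h_{\triangle}\bigr)/\bigl(\int h+\int h_{\triangle}\bigr)$ requires a quantitative statement about how well $h_{\triangle}$ approximates $h$ (Cauchy--Schwarz alone only gives the opposite-direction bound $\int\sqrt{h\,h_{\triangle}}\le(\int h)^{1/2}(\int h_{\triangle})^{1/2}$). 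So your proof is genuinely incomplete at this point --- but the gap you have identified is one the published proof shares.
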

By using the following assumption, we can derive a result to help us find corresponding orders of convergence:
\begin{Assum} \label{a:bHm}
$g^{2}(\mathbf{w})$ and  $h^{2}(\mathbf{w})$ have bounded Hessian matrices.
\end{Assum}
\vspace{2mm}
\begin{Lem} \label{lem:kcomp}
Let $R = p\triangle$ and $\widetilde{R} = \tilde{p}\triangle$ be the fixed truncation ranges for $g$ and $h$ respectively. Under Assumption \ref{a:bHm}:
\begin{gather*}
\int_{\mathbb{R}^{2}}g^{2}(\mathbf{w})\mathrm{d}\mathbf{w} -  \sum_{i,j = -p}^{p} g^{2}\left(i\triangle, j \triangle\right)\triangle^{2} =  O(\triangle^{2}), ~~
g\left(i\triangle, j \triangle\right)\triangle^{2} - \int_{i\triangle- \frac{\triangle}{2}}^{i\triangle + \frac{\triangle}{2}} \int_{j\triangle- \frac{\triangle}{2}}^{j\triangle + \frac{\triangle}{2}} g(\mathbf{w})  \mathrm{d}\mathbf{w} = O(\triangle^4),\\ 
\text{and } \int_{\mathbb{R}^{2}}h(\mathbf{w})\mathrm{d}\mathbf{w} -  \sum_{i , j = -\tilde{p}}^{\tilde{p}} h\left(i\triangle, j \triangle\right) \triangle^{2} = O(\triangle^{2}).
\end{gather*}
\end{Lem}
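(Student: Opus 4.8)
\section*{Proof proposal for Lemma~\ref{lem:kcomp}}

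The plan is to recognise all three estimates as instances of the error analysis for the (composite) midpoint quadrature rule on square cells of side $\triangle$, the only regularity input being the boundedness of the Hessian of the relevant integrand. First I would isolate the single-cell estimate: fix $\mathbf{c} = (c_{1}, c_{2})$, let $Q_{\mathbf{c}} = [c_{1} - \tfrac{\triangle}{2}, c_{1} + \tfrac{\triangle}{2}] \times [c_{2} - \tfrac{\triangle}{2}, c_{2} + \tfrac{\triangle}{2}]$, and suppose $f \in C^{2}$ with $M := \sup_{Q_{\mathbf{c}}} \|\nabla^{2} f\|_{op} < \infty$. Expanding $f$ about $\mathbf{c}$ to second order, $f(\mathbf{c} + \mathbf{t}) = f(\mathbf{c}) + \nabla f(\mathbf{c})^{T}\mathbf{t} + \tfrac{1}{2}\mathbf{t}^{T}\nabla^{2}f(\mathbf{c} + s_{\mathbf{t}}\mathbf{t})\mathbf{t}$ for some $s_{\mathbf{t}} \in (0,1)$, and integrating over $\mathbf{t} \in [-\tfrac{\triangle}{2}, \tfrac{\triangle}{2}]^{2}$: the constant term contributes $f(\mathbf{c})\triangle^{2}$, the linear term vanishes by the odd symmetry of the cell about its centre, and the quadratic remainder is bounded in absolute value by $\tfrac{1}{2}M \int_{[-\triangle/2,\triangle/2]^{2}} (t_{1}^{2} + t_{2}^{2})\,\mathrm{d}\mathbf{t} = \tfrac{M}{12}\triangle^{4}$. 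Hence $\bigl|\int_{Q_{\mathbf{c}}} f(\mathbf{w})\,\mathrm{d}\mathbf{w} - f(\mathbf{c})\triangle^{2}\bigr| \le \tfrac{M}{12}\triangle^{4}$. Taking $f = g$ and $\mathbf{c} = (i\triangle, j\triangle)$ is exactly the second estimate, so it is $O(\triangle^{4})$ under boundedness of $\nabla^{2}g$ — which holds for the smooth kernels in our examples and is the kind of control Assumption~\ref{a:bHm} is designed to provide.

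Next I would obtain the first and third estimates by summing the single-cell bound over the grid. The cells $Q_{(i\triangle, j\triangle)}$ with $-p \le i, j \le p$ tile the square $B_{R} := [-(R + \tfrac{\triangle}{2}), R + \tfrac{\triangle}{2}]^{2}$, so
$\int_{B_{R}} g^{2}(\mathbf{w})\,\mathrm{d}\mathbf{w} - \sum_{i,j=-p}^{p} g^{2}(i\triangle, j\triangle)\triangle^{2} = \sum_{i,j=-p}^{p}\bigl(\int_{Q_{(i\triangle,j\triangle)}} g^{2} - g^{2}(i\triangle,j\triangle)\triangle^{2}\bigr)$,
whose absolute value is at most $(2p+1)^{2}\cdot\tfrac{M_{g}}{12}\triangle^{4} = \tfrac{M_{g}}{12}(2R + \triangle)^{2}\triangle^{2} = O(\triangle^{2})$, since $R = p\triangle$ is fixed and $M_{g} := \sup\|\nabla^{2}(g^{2})\|_{op}$ is finite by Assumption~\ref{a:bHm}. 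Writing $\int_{\mathbb{R}^{2}} g^{2} = \int_{B_{R}} g^{2} + \int_{\mathbb{R}^{2}\setminus B_{R}} g^{2}$ then yields the first estimate, the truncation tail $\int_{\mathbb{R}^{2}\setminus B_{R}} g^{2}$ being of no larger order by the rapid decay of the kernels (and, if need be, by taking the truncation range large relative to $\triangle$); the thin boundary layer $B_{R}\setminus[-R,R]^{2}$ is handled the same way. The third estimate is the identical argument with $g^{2}$ replaced by $h$ and $p$ by $\tilde{p}$, using $\sup\|\nabla^{2}(h^{2})\|_{op} < \infty$ (the statement is for $h$ rather than $h^{2}$, so one again invokes smoothness of the kernel here).

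The routine parts are the integral $\int (t_{1}^{2} + t_{2}^{2})\,\mathrm{d}\mathbf{t} = \tfrac{\triangle^{4}}{6}$ and the algebra $(2p+1)^{2}\triangle^{4} = (2R+\triangle)^{2}\triangle^{2}$. The genuinely delicate point, and the one I would spend most care on, is the interplay between $\triangle \downarrow 0$ and $p \to \infty$ at fixed $R = p\triangle$: each cell contributes an $O(\triangle^{4})$ error but the number of cells grows like $\triangle^{-2}$, and the composite bound comes out $O(\triangle^{2})$ precisely because the Hessian bound is uniform over the whole fixed truncation box. One also has to be careful that the contributions from the boundary layer and from the tail outside $B_{R}$ are of no larger order than $\triangle^{2}$ — this is where the decay of $g$ and $h$ (and, implicitly, the choice of a sufficiently large truncation range) is used, and it is the place where the ``$\int_{\mathbb{R}^{2}}$ versus sum over a finite grid'' bookkeeping has to be done honestly.
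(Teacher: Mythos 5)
Your proposal is correct and follows essentially the same route as the paper's proof: a second-order Taylor expansion of the integrand about each cell centre, with the linear term vanishing by the symmetry of the cell, the quadratic remainder giving $O(\triangle^{4})$ per cell, and summation over the $O(\triangle^{-2})$ cells of the fixed truncation box yielding $O(\triangle^{2})$. The only cosmetic difference is that the paper sandwiches the per-cell remainders between infima and suprema of the second derivatives to identify the limiting constant $\tfrac{1}{12}\int\int \partial^{2}g^{2}/\partial w_{1}^{2}$, whereas you use a uniform Hessian bound; your remarks on the truncation tail and on the Hessians of $g$ and $h$ (versus $g^{2}$ and $h^{2}$) flag the same loose ends that the paper itself leaves implicit.
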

The analysis so far has been for fixed $R$ and $\widetilde{R}$. Further suppose that:
\\
\begin{Assum} \label{a:Rdelta}
 $R = O\left(\triangle^{-r}\right)$ and $\widetilde{R} = O\left(\triangle^{-r}\right)$ where $0<r<3$. 
\end{Assum}
Now, $R, \widetilde{R} \rightarrow \infty$ as $\triangle\rightarrow 0$ so that the MSE converges to zero. As shown in the next two examples, the order of this convergence will depend on the forms of $g$ and $h$. 
\\
\begin{Eg} \label{eg:powereg} Since:
\begin{align*}
\int_{\mathbb{R}^{2}}g^{2}(\mathbf{w})\mathrm{d}\mathbf{w} -  \sum_{i,j = -p}^{p} g^{2}\left(i\triangle, j \triangle\right)\triangle^{2} &= \left(\int_{\mathbb{R}^{2}}g^{2}(\mathbf{w})\mathrm{d}\mathbf{w} -  \int_{-\left(R + \triangle/2\right)}^{R + \triangle/2}\int_{-\left(R + \triangle/2\right)}^{R + \triangle/2}g^{2}(\mathbf{w})\mathrm{d}\mathbf{w} \right) \\
&+ \left(\int_{-\left(R + \triangle/2\right)}^{R + \triangle/2}\int_{-\left(R + \triangle/2\right)}^{R + \triangle/2}g^{2}(\mathbf{w})\mathrm{d}\mathbf{w} -  \sum_{i,j = -\lfloor R/\triangle \rfloor}^{\lfloor R/\triangle \rfloor} g^{2}\left(i\triangle, j \triangle\right)\triangle^{2}\right),
\end{align*}
we find the order of convergence of $\int_{\mathbb{R}^{2}}g^{2}(\mathbf{w})\mathrm{d}\mathbf{w} -  \int_{-\left(R + \triangle/2\right)}^{R + \triangle/2}\int_{-\left(R + \triangle/2\right)}^{R + \triangle/2}g^{2}(\mathbf{w})\mathrm{d}\mathbf{w}$ for $R = O\left(\triangle^{-r}\right)$ and $\triangle \rightarrow 0$.
\\
Suppose that $g(w_{1}, w_{2}) = g(-w_{1}, w_{2}) = g(w_{1}, - w_{2})$, i.e. $g$ is symmetric about the axes, and $g$ is bounded over $\mathbb{R}^{2}$ . In addition, for a large and fixed value of $|w_{1}|$, $g(\mathbf{w}) \sim A_{1}(|w_{1}|)|w_{2}|^{-\beta}$ for some square integrable function $A_{1}:\mathbb{R}^{+}\rightarrow\mathbb{R}^{+}$ and $\beta > 1/2$. Similarly, for a large and fixed value of $|w_{2}|$, $g(\mathbf{w}) \sim A_{2}(|w_{2}|)|w_{1}|^{-\alpha}$ for some square integrable function $A_{2}:\mathbb{R}^{+}\rightarrow\mathbb{R}^{+}$ and $\alpha > 1/2$. Notice that this implies that $g$ behaves proportional to $|w_{1}|^{-\alpha}|w_{2}|^{-\beta}$ for large $|w_{1}|$ and $|w_{2}|$. Applying these conditions, we have:
\begin{align*}
&\int_{\mathbb{R}^{2}}g^{2}(\mathbf{w})\mathrm{d}\mathbf{w} -  \int_{-\left(R + \triangle/2\right)}^{R + \triangle/2}\int_{-\left(R + \triangle/2\right)}^{R + \triangle/2}g^{2}(\mathbf{w})\mathrm{d}\mathbf{w} \\
&= 4\int_{0}^{\infty}\int_{R + \triangle/2}^{\infty} g^{2}(\mathbf{w})\mathrm{d}\mathbf{w} + 4 \int_{R + \triangle/2}^{\infty}\int_{0}^{R + \triangle/2} g^{2}(\mathbf{w})\mathrm{d}\mathbf{w} \text{ by the symmetry about the axes,} \\
&=  4\left[\int_{R + \triangle/2}^{\infty}\int_{R + \triangle/2}^{\infty} g^{2}(\mathbf{w})\mathrm{d}\mathbf{w} + \int_{0}^{R + \triangle/2}\int_{R + \triangle/2}^{\infty} g^{2}(\mathbf{w})\mathrm{d}\mathbf{w} + \int_{R + \triangle/2}^{\infty}\int_{0}^{R + \triangle/2} g^{2}(\mathbf{w})\mathrm{d}\mathbf{w} \right] \\
&\sim 4\left[\int_{R + \triangle/2}^{\infty}\int_{R + \triangle/2}^{\infty} C_{1}|w_{1}|^{-2\alpha}|w_{2}|^{-2\beta} \mathrm{d}\mathbf{w} + \int_{0}^{R + \triangle/2}\int_{R + \triangle/2}^{\infty} A^{2}_{2}(|w_{2}|)|w_{1}|^{-2\alpha}\mathrm{d}\mathbf{w} + \int_{R + \triangle/2}^{\infty}\int_{0}^{R + \triangle/2} A^{2}_{1}(|w_{1}|)|w_{2}|^{-2\beta}\mathrm{d}\mathbf{w} \right] \\
&< 4\left[ \frac{C_{1}(R + \triangle/2)^{2 - 2\alpha - 2\beta}}{(2\alpha - 1)(2\beta - 1)} + \frac{C_{2}(R + \triangle/2)^{1 - 2\alpha}}{2\alpha - 1} + \frac{C_{3}(R + \triangle/2)^{1 - 2\beta}}{2\beta - 1} \right] \\
&= O(\triangle^{r(2\min(\alpha, \beta) - 1)}),
\end{align*}
where $C_{1}$ is a finite constant, $C_{2} =\int_{0}^{\infty} A^{2}_{2}(|w_{2}|)\mathrm{d}w_{2}$ and $C_{3} = \int_{0}^{\infty} A^{2}_{1}(|w_{1}|)\mathrm{d}w_{1}$. Adding this to the previous bound that we obtained by assuming a bounded Hessian for $g$, we have:
\begin{equation*}
\int_{\mathbb{R}^{2}}g^{2}(\mathbf{w})\mathrm{d}\mathbf{w} -  \sum_{i,j = -p}^{p} g^{2}\left(i\triangle, j \triangle\right)\triangle^{2} < O(\triangle^{\min(2, r(2\min(\alpha, \beta) - 1))}). 
\end{equation*}
\\
If $h$ is symmetric about the axes and shares the same asymptotic properties as $g$ but with parameters $\tilde{\alpha}$ and $\tilde{\beta}$ in place of $\alpha$ and $\beta$ respectively, we can use an analogous approach to obtain $\int_{\mathbb{R}^{2}}h(\mathbf{w})\mathrm{d}\mathbf{w} -  \sum_{i , j = -\tilde{p}}^{\tilde{p}} h\left(i\triangle, j \triangle\right) \triangle^{2} < O(\triangle^{\min(2, r(2\min(\tilde{\alpha}, \tilde{\beta}) - 1))})$. \\ 
Applying these bounds to $T2$, $T4$ and $T5$, we have:
\begin{align*}
T2 &= O(\triangle^{\min(2, r(2\min(\alpha, \beta) - 1), 3 - r)}), \text{ } T4 = O(\triangle^{2\min(2, r(2\min(\tilde{\alpha}, \tilde{\beta}) - 1))}), \\
\text{and } T5 &= O(\triangle^{\min(4, 2r(2\min(\tilde{\alpha}, \tilde{\beta}) - 1), 2r(\min(\alpha, \beta) + \min(\tilde{\alpha}, \tilde{\beta}) - 1), 2 +  r(2\min(\alpha, \beta) - 1), 5 - r)}).\\
\Rightarrow \mathbb{E}\left[|Y(\mathbf{x}) - Z(\mathbf{x})|^{2}\right] &\leq O(\triangle^{\min(2, r(2\min(\alpha, \beta) - 1), 2r(2\min(\tilde{\alpha}, \tilde{\beta}) - 1), 3 - r)}). 
\end{align*}
Note that since $p = \lfloor R/\triangle \rfloor = O(\triangle^{-r - 1})$, in $T2$ and $T5$, we have:
\begin{align*}
&\sum_{i, j = -p}^{p}g\left(i\triangle, j \triangle\right) \left( g\left(i\triangle, j \triangle\right)\triangle^{2} - \int_{i\triangle- \frac{\triangle}{2}}^{i\triangle + \frac{\triangle}{2}} \int_{j\triangle- \frac{\triangle}{2}}^{j\triangle + \frac{\triangle}{2}} g(\mathbf{w})  \mathrm{d}\mathbf{w}\right) \\
&< \sup_{\mathbf{w}\in\mathbb{R}^{2}}{g(\mathbf{w})}\sum_{i, j = -p}^{p} \left( g\left(i\triangle, j \triangle\right)\triangle^{2} - \int_{i\triangle- \frac{\triangle}{2}}^{i\triangle + \frac{\triangle}{2}} \int_{j\triangle- \frac{\triangle}{2}}^{j\triangle + \frac{\triangle}{2}} g(\mathbf{w})  \mathrm{d}\mathbf{w}\right) = O(\triangle^{3 - r}). 
\end{align*} 
\end{Eg}
\vspace{2mm}
\begin{Eg} \label{eg:isoeg}
If we assume that $g$ and $h$ are isotropic, and $g(\mathbf{w}) \sim |\mathbf{w}|^{-\chi}$ while $h(\mathbf{w}) \sim |\mathbf{w}|^{-\tilde{\chi}}$ for some $\chi, \tilde{\chi} > 1/2$ when $|\mathbf{w}|$ is large, a simpler convergence bound for the MSE can be obtained. With $C$ being a constant:
\begin{align*}
&\int_{\mathbb{R}^{2}}g^{2}(\mathbf{w})\mathrm{d}\mathbf{w} -  \int_{-\left(R + \triangle/2\right)}^{R + \triangle/2}\int_{-\left(R + \triangle/2\right)}^{R + \triangle/2}g^{2}(\mathbf{w})\mathrm{d}\mathbf{w} \leq  \int_{|\mathbf{w}|> R + \triangle/2}g^{2}(|\mathbf{w}|)\mathrm{d}|\mathbf{w}| = \frac{C(R + \triangle/2)^{1 - 2\chi}}{2\chi - 1} = O(\triangle^{r(2\chi - 1)}).  \\
&\Rightarrow \int_{\mathbb{R}^{2}}g^{2}(\mathbf{w})\mathrm{d}\mathbf{w} -  \sum_{i,j = -p}^{p} g^{2}\left(i\triangle, j \triangle\right)\triangle^{2} \leq O(\triangle^{\min(2, r(2\chi - 1))}). 
\end{align*}
Similarly, one can show that $\int_{\mathbb{R}^{2}}h(\mathbf{w})\mathrm{d}\mathbf{w} -  \sum_{i , j = -\tilde{p}}^{\tilde{p}} h\left(i\triangle, j \triangle\right) \triangle^{2} = O(\triangle^{\min(2, r(2\tilde{\chi} - 1))})$. 
\\
Correspondingly, we have:
\begin{gather*}
T2 = O(\triangle^{\min(2, r(2\chi - 1), 3 - r)}), \text{ } T4 = O(\triangle^{2\min(2, r(2\tilde{\chi} - 1))}) \text{ and } T5 = O(\triangle^{\min(4, 2r(2\tilde{\chi} - 1), 2r(\chi + \tilde{\chi} - 1), 2 +  r(2\chi - 1), 5 - r)}) \\
\Rightarrow \mathbb{E}\left[|Y(\mathbf{x}) - Z(\mathbf{x})|^{2}\right] \leq O(\triangle^{\min(2, r(2\chi - 1), 2r(2\tilde{\chi} - 1), 3 - r)}). 
\end{gather*}
\end{Eg}
\vspace{2mm}
\begin{Eg} \label{eg:simeg}
For illustration, we provide the explicit MSE upper bound for Model (\ref{eqn:weg}). With $\Phi(x; y)$ denoting the standard Normal distribution function evaluated between x and y, the contributing terms are:
\begin{align*}
T4 &=  \frac{a\lambda}{2\pi} \left[ 1 + \frac{\eta}{\pi}\left[1 + 2\sum_{i'=1}^{\tilde{p}} \exp\left(-\eta\triangle^{2}i'^{2}\right)\right]^{2}\triangle^{2} - 4\left(1 + \frac{\pi}{\eta\left[1 + 2\sum_{i' = 1}^{\tilde{p}} \exp\left(-\eta\triangle^{2}i'^{2}\right)\right]^{2}\triangle^{2}}\right)^{-1} \right] \\
T2 &= \frac{a\lambda\eta}{\pi^{2}}\triangle^{2}\left[1 + 2\sum_{i' = 1}^{\tilde{p}}\exp\left(-\eta\triangle^{2}i'^{2}\right) \right]^{2} \\
&\times \left[\frac{1}{2} + \frac{\lambda}{\pi}\triangle^{2}\left[1 + 2\sum_{i = 1}^{p} \exp\left(- 2\lambda\triangle^{2}i^{2}\right)\right]^{2}  - 2\left[\sum_{i = -p}^{p}e^{-\lambda\triangle^{2}i^{2}} \Phi\left(\sqrt{2\lambda}\left(i\triangle - \frac{\triangle}{2}\right) , \sqrt{2\lambda}\left(i\triangle + \frac{\triangle}{2}\right)\right)\right]^{2}\right], \\
\text{and } T5 &=  \frac{a\lambda}{\pi}\left[ \frac{1}{2} - \frac{\eta}{2\pi}\left[1 + 2\sum_{i'= 1}^{\tilde{p}} \exp\left(-\eta\triangle^{2}i'^{2}\right)\right]^{2}\triangle^{2} - 4\left[\sum_{i = -p}^{p}e^{-\lambda\triangle^{2}i^{2}} \Phi\left(\sqrt{2\lambda}\left(i\triangle - \frac{\triangle}{2}\right) , \sqrt{2\lambda}\left(i\triangle + \frac{\triangle}{2}\right)\right)\right]^{2} \right.\\
&\left.\times \left( \left(1 + \frac{\pi}{\eta\left[1 + 2\sum_{i' = 1}^{\tilde{p}} \exp\left(-\eta\triangle^{2}i'^{2}\right)\right]^{2}\triangle^{2}}\right)^{-1} -  \frac{\eta}{2\pi}\triangle^{2} \left[1 + 2\sum_{i'= 1}^{\tilde{p}} \exp\left(-\eta\triangle^{2}i'^{2}\right)\right]^{2}\right) \right]. 
\end{align*}
Using our simulation settings, i.e. $\lambda = \eta = 4$, $a = 1$, $b = 2$ and $p = \tilde{p}$, we examine the behaviour of $T2$, $T4$ and $T5$, as well as the resulting upper bound on the MSE for the case: $R = K\triangle^{-1}$ where $K = 0.05^{2}\times 30$ so that as $\triangle$ decreases, $R$ increases. We know from Examples \ref{eg:powereg} and \ref{eg:isoeg} that the MSE converges to zero at most as fast as $O(\triangle^{2})$ when $R = O(\triangle^{-1})$ since the squared exponential dominates any power function. 
\\
Figure \ref{fig:NCase3} shows the upper bound, $T4$, $T2$ and $T5$ values. We notice that $T2$ is much larger in magnitude than $T4$ and $T5$. This indicates that for our choice of $a = 1$ and $\lambda = \eta = 4$, the error due to the kernel truncation and discretisation of $g$ outweighs those due to the simulation of $\sigma$. As such, the asymptotic behaviour of our upper bound is driven largely by the behaviour of $T2$. The vertical dotted lines in each plot indicate our simulation choice of $\triangle = 0.05$ and $R = 1.5$ for the experiments in Section \ref{sec:est}. From the plots, we also see that when $R$ increases and $\triangle$ decreases, $T4$, $T2$ and $T5$ converge smoothly to $0$ so that the MSE upper bound (and the MSE itself) converges to zero as expected.
\end{Eg}

\begin{figure}[tbp]
\centering
\includegraphics[width = 5in, height = 4.5in]{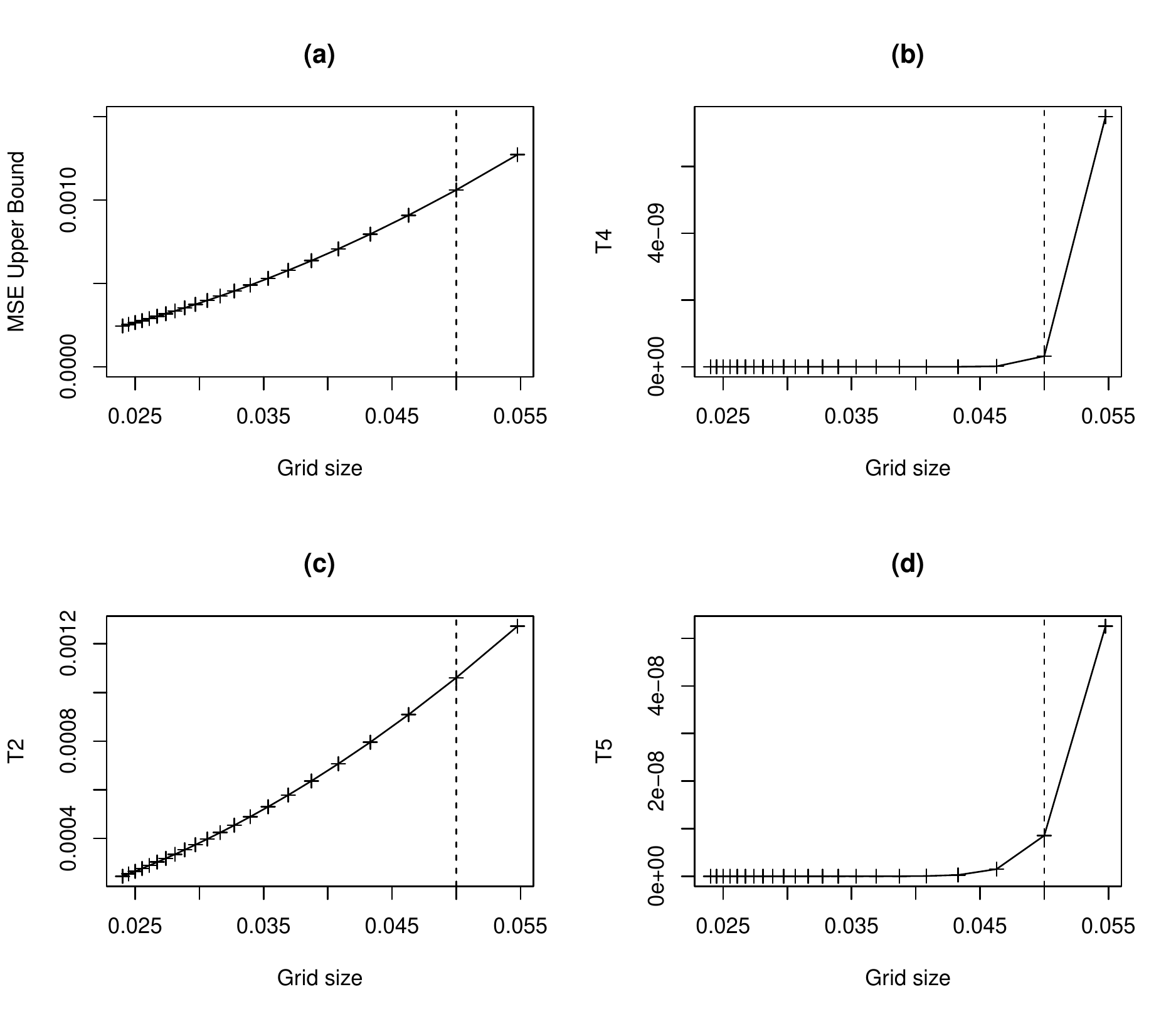}
\caption{(a) MSE upper bound against $\triangle$ for Model (\ref{eqn:weg}) when we set $\lambda = \eta = 4$, $a = 1$, $b = 2$, $p = \tilde{p}$ and $R = K\triangle^{-1}$ where $K = 0.05^{2}\times 30$; and the corresponding values of (b) $T4$, (c) $T2$  and (d) $T5$. The vertical dotted lines mark our chosen simulation setting.}
\label{fig:NCase3}
\end{figure}
\vspace{2mm}
\begin{Rem}
So far, we have assumed that $g$ takes finite values over $\mathbb{R}^{2}$. This holds for many choices of $g$. In the case of a singularity at zero which occurs for the Mat\'ern kernel (\ref{eqn:Mker}) with $(\alpha-d)/2 \in (-1/2, 0)$, a so-called hybrid scheme similar to that in \cite{BLP2015} can be used.
\end{Rem}

\section{Inference} \label{sec:est}

Before we introduce the two-step moments-matching method for VMMAs, we will look at several classical inference approaches and give reasons why it is difficult to implement them for these models.

\subsection{A note on classical methods}

In Econometrics where we have high-frequency financial data, a tool frequently used to estimate the parameters of the volatility is that of realised variance. This involves the sum of the squared increments of the log-price over small time intervals. Recently, this concept has been extended to two-dimensions in \cite{Pakk2014}. Now instead of taking increments over time intervals, we take increments over rectangles in our data region. Under certain assumptions, one can show that this two-dimensional realised variance properly normalised, converges to a weighted integral of the volatility as these rectangles get smaller. This could enable us to estimate the parameters of the volatility by moments-based or quasi-likelihood methods. Unfortunately, it was found that not all the required conditions hold for general VMMAs and it is hard to establish the convergence. 
\\
Another classical approach would be to use likelihood-based or Bayesian inference. In general, however, since we do not know the specific distribution of $\sigma^2$ and an approximation of the conditional variance does not guarantee valid covariance matrices, implementing such strategies for a VMMA is not straightforward. 
\\
Alternatively, one might consider a direct moments-matching method involving higher order moments. As we have found though, this may lead to parameter sign errors. That is, we could estimate a positive parameter as negative if no adjustments are made.
\\
Since these classical approaches are hard to implement, we will develop a two-step moments-matching estimation method for our VMMAs. 

\subsection{Moments-based estimation}

To illustrate our moments-based method, we use data generated from Model (\ref{eqn:weg}). Recall that our VMMA $Y$ is stationary, but when we condition on $\sigma^{2}$, $Y$ is non-stationary. Specifically, when we express its observations as a vector, $Y|\sigma^{2} \sim N(0, V)$ where $V = (V_{ij})$ with $V_{ij} = \int_{\mathbb{R}^{2}}\lambda^{2}\pi^{-2}\exp\left(-\lambda\left(\mathbf{x}_{i}-\bm{\xi}\right)^{T}\left(\mathbf{x}_{i}-\bm{\xi}\right) - \lambda \left(\mathbf{x}_{j}-\bm{\xi}\right)^{T}\left(\mathbf{x}_{j}-\bm{\xi}\right)\right)\sigma^{2}(\bm{\xi})\mathrm{d}\bm{\xi}$ and where $\mathbf{x}_{i}, \mathbf{x}_{j} \in \mathbb{R}^{2}$ are two data locations. Since $\sigma^{2}$ is stationary, we should expect to observe stationarity in $Y$ over a large region. This should allow us to estimate $\lambda$ and $a$ through the empirical normalised variogram and variance.
\\
By definition, the normalised variogram of our VMMA, $Y(\mathbf{x})$, is given by:
\begin{align}
\gamma(d_{\mathbf{x}}) &:= \frac{\mathbb{E}\left[\left(Y\left(\mathbf{x}\right)-Y\left(\mathbf{x}^{*}\right)\right)^{2}\right]}{\Var\left(Y\left(\mathbf{x}\right)\right) }= 2(1 - \Corr\left(Y\left(\mathbf{x}\right), Y\left(\mathbf{x}^{*}\right)\right)) = 2\left(1-\exp\left(-\frac{\lambda d_{\mathbf{x}}^{2}}{2} \right)\right), \label{eqn:nvar}
\end{align}
where $\mathbf{x}, \mathbf{x}^{*} \in \mathbb{R}^{2}$ and $|\mathbf{x} - \mathbf{x}^{*}| = d_{\mathbf{x}}$. 
\\
Let $\left(Y(x^{i}_{1}, x^{j}_{2})\right) \in \mathbb{R}^{M\times M}$ be our data matrix and  $Y\left(\mathbf{x}_{k}\right) \in \mathbb{R}^{M^2}$ be its corresponding data vector. With $N(d_{\mathbf{x}})$ denoting the set containing all the pairs of indices of sites with spatial distance $d_{\mathbf{x}}$, we can estimate the normalised variogram by:
\begin{align}
\hat{\gamma}(d_{\mathbf{x}}) = \frac{1}{|N(d_{\mathbf{x}})|} \sum_{(k, l) \in N(d_{\mathbf{x}})}\frac{(Y(\mathbf{x}_{k}) - Y(\mathbf{x}_{l}))^{2}}{\hat{\kappa}_{2}}, \label{eqn:envar}
\end{align}
where $\hat{\kappa}_{2} =  \frac{1}{M^{2}-1} \sum_{k = 1}^{M^{2}} (Y(\mathbf{x}_{k}) - \overline{Y})^{2}$ and $\overline{Y} = \frac{1}{M^{2}} \sum_{k = 1}^{M^{2}}Y(\mathbf{x}_{k})$. By matching (\ref{eqn:nvar}) and (\ref{eqn:envar}), we can estimate the rate parameter of our field by:
\begin{equation*}
\hat{\lambda} = -\frac{2\log(1 - \hat{\gamma}(\triangle)/2)}{\triangle^{2}},
\end{equation*}
where $\triangle$ is the simulation grid size. For Model (\ref{eqn:weg}), $\kappa_{2} = \frac{a\lambda}{2\pi}$. Thus, with $\hat{\lambda}$ at hand, we can estimate $a$ by:
\begin{equation*}
\hat{a}= \frac{2\pi\hat{\kappa}_{2}}{\hat{\lambda}}.
\end{equation*} 
Next, we want to obtain estimates for the parameters $b$ and $\eta$. These determine the extent of non-stationarity in $Y|\sigma^{2}$ which is shown through differing variance and covariance structures across subregions. Thus, it seems natural to infer about $b$ and $\eta$ by comparing estimated local variances. This requires some sort of subsetting. To retain the correlation between our local variance estimates for the next step of our inference, we use a moving window strategy. 
\\
Figure \ref{fig:EDConcept2} illustrates how the $q\times q$ moving window, which is represented by the small box on the bottom left corner of Plot (a), selects data points for local variance calculations to form a field of estimates in Plot (b). The current location at which the local variance is being calculated is represented by the circle and labelled $\bm{\xi}_{1}$. By moving the window from left to right and then up the rows of the $M\times M$ data matrix, we obtain the $(M-q+1)\times (M-q+1)$ field of local variance estimates, as illustrated in Plot (b). The parameter $q$ is a tuning parameter in our inference method. As will be evident later, different $q$ values lead to different inferred volatility cluster sizes which in turn are related to the values of $b$ and $\eta$.
\\
For $\mathbf{x} \in \mathbb{R}^{2}$, we can index the local variance estimator by $Q = (q-1)/2$ as follows:
\begin{equation*}
\hat{\sigma}_{I}^{2}(\mathbf{x}, Q)= \frac{1}{(2Q+1)^{2}} \sum_{l = -Q}^{Q} \sum_{k = -Q}^{Q} Y^{2}(\mathbf{x} + (l, k)\triangle).
\end{equation*}
The local variances estimates are estimates for the conditional variance at the centres of the subregions marked out by the moving window procedure. The analytical formula for the latter is given by:
\begin{equation}
\sigma^{2}_{I}(\bm{\xi}_{i}) = \int_{\mathbb{R}^{2}}\frac{\lambda^{2}}{\pi^{2}}\exp\left(-2\lambda\left(\bm{\xi}_{i}-\bm{\xi}\right)^{T}\left(\bm{\xi}_{i}-\bm{\xi}\right)\right)\sigma^{2}(\bm{\xi})\mathrm{d}\bm{\xi}, \label{eqn:s2I} 
\end{equation}
where $\bm{\xi}_{i}$ denotes the centre of the subregion $A_{i}$ for $i = 1, \dots, \widetilde{M}$ and $\widetilde{M} = (M-q+1)^2$.  
\\
By comparing the mean of our local variances to $a\lambda(2\pi)^{-1}$ and writing $\overline{\hat{\sigma}^{2}_{I}} = \frac{1}{\widetilde{M}}\sum_{i = 1}^{\widetilde{M}} \hat{\sigma}^{2}_{I}(\bm{\xi}_{i}, Q)$, we get another estimator for $a$:
\begin{equation*}
\hat{a}_{2} = \frac{2\pi\overline{\hat{\sigma}^{2}_{I}}}{\hat{\lambda}}. 
\end{equation*}
From the proof of Example \ref{eg:2oCov}, $\Cov(\sigma^{2}_{I}(\bm{\xi}_{i}), \sigma^{2}_{I}(\bm{\xi}_{j})) = A \exp(-B(\bm{\xi}_{i} - \bm{\xi}_{j})^{T}(\bm{\xi}_{i} - \bm{\xi}_{j}))$, 
where $A = b\lambda^{3}\eta(4\pi^{3}(2\lambda + \eta))^{-1}$ and $B = \lambda\eta(2\lambda + \eta)^{-1}$. Using the empirical variance and normalised variogram at the first lag of $\hat{\sigma}^{2}_{I}$, $\hat{\psi}(\triangle)$, we obtain:
\begin{equation*}
\widehat{A} = \frac{1}{\widetilde{M}-1}\sum_{i = 1}^{\widetilde{M}} ( \hat{\sigma}^{2}_{I}(\bm{\xi}_{i}) - \overline{\hat{\sigma}^{2}_{I}} )^{2} \text{ and } \widehat{B} = -\frac{\log(1 - \hat{\psi}(\triangle)/2)}{\triangle^{2}}.
\end{equation*}
This in turn gives us:
\begin{equation*}
\hat{b} = \frac{4\pi^{3}\hat{A}}{\hat{\lambda}^{2}\widehat{B}} \text{ and } \hat{\eta} = \frac{2\hat{\lambda}\widehat{B}}{\hat{\lambda} - \widehat{B}}.
\end{equation*}
\vspace{2mm}
\begin{Rem}\label{rem:general}
We have used Model (\ref{eqn:weg}) to illustrate our method. More generally, this strategy works when we have parameters representing the variance and correlation of $Y$ and $\sigma^{2}$ respectively, and when an analytical expression for $\Cov(\sigma^{2}_{I}(\bm{\xi}_{i}), \sigma^{2}_{I}(\bm{\xi}_{j}))$ is available. 
\end{Rem}

\begin{figure}[tbp]
\centering
\includegraphics[width =6.4in, height = 3in, trim = 0.4in 1in 0.4in 0.4in]{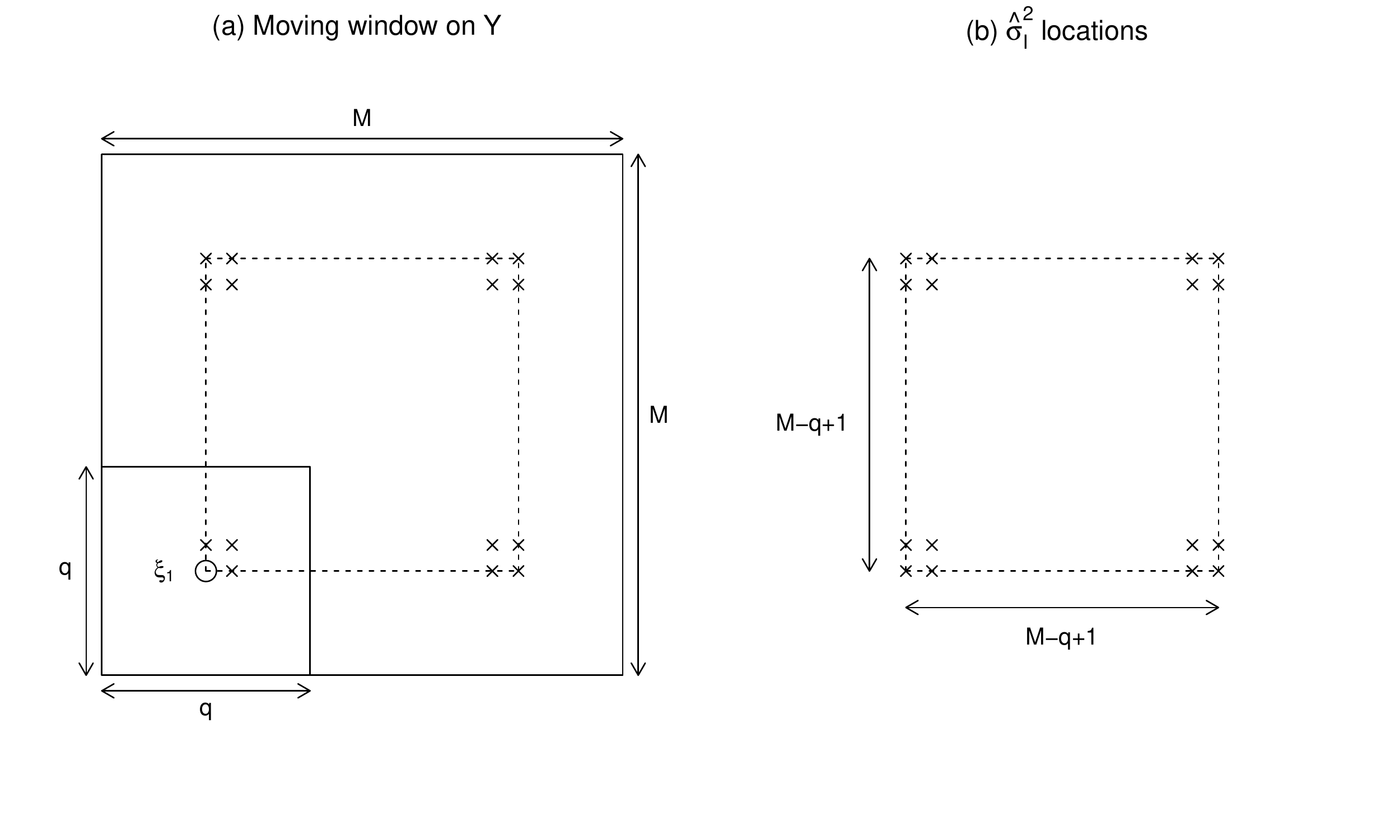}
\caption{Moving window strategy: (a) The $q\times q$ moving window, represented by the smaller box on the bottom left corner, slides across the $M\times M$ data region to select the data points for local variance estimation. (b) A $(M-q+1)\times(M-q+1)$ field of local variance estimates ($\hat{\sigma}^{2}_{I}$) is created.} 
\label{fig:EDConcept2}
\end{figure}

\subsection{Theoretical properties of the estimators}

In this subsection, we derive some properties of the local variance and moments-based parameter estimators. We show that under suitable conditions, the latter are consistent. Proofs of the results, if not shown, can be found in the Appendix. 
\\
\begin{thm} \label{thm:l2con}
Suppose that the following conditions hold:
\begin{enumerate}
\item  \begin{equation*} \frac{\sum_{\substack{l', k', l, k\\ = -Q}}^{Q}  \mathbb{E}\left[\left(\int_{\mathbb{R}^{2}} g(\mathbf{x} + (l, k)\triangle-\bm{\xi})g(\mathbf{x} + (l', k')\triangle-\bm{\xi})\sigma^{2}(\bm{\xi})\mathrm{d}\bm{\xi}\right)^{2}\right]}{(2Q+1)^{4}} \rightarrow 0;
\end{equation*}
\item  $C(d_{x_{1}}, d_{x_{2}}) := \Cov\left(\sigma_{I}^{2}(\mathbf{x}), \sigma_{I}^{2}(\mathbf{x} + (d_{x_{1}}, d_{x_{2}}))\right)$ has a finite gradient $\nabla C = (\partial C/\partial d_{x_{1}} , \partial C/\partial d_{x_{2}})$ over $\mathbb{R}^{2}$ 
\end{enumerate}
Then, the local variance estimator $\hat{\sigma}_{I}^{2}(\mathbf{x}, Q) \stackrel{\mathcal{L}_{2}}{\rightarrow} \sigma_{I}^{2}(\mathbf{x})$ when $Q\rightarrow\infty$ and $\triangle = O(Q^{-\tilde{r}})$ for $\tilde{r}>0$. 
\end{thm}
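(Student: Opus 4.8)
The plan is to split the error by conditioning on $\mathcal{F}^{\sigma}$. Write $\hat{\sigma}_{I}^{2}(\mathbf{x},Q)-\sigma_{I}^{2}(\mathbf{x})=D_{1}+D_{2}$, where $D_{1}=\hat{\sigma}_{I}^{2}(\mathbf{x},Q)-\mathbb{E}[\hat{\sigma}_{I}^{2}(\mathbf{x},Q)\mid\mathcal{F}^{\sigma}]$ and $D_{2}=\mathbb{E}[\hat{\sigma}_{I}^{2}(\mathbf{x},Q)\mid\mathcal{F}^{\sigma}]-\sigma_{I}^{2}(\mathbf{x})$. Since $\sigma_{I}^{2}(\mathbf{x})=\int_{\mathbb{R}^{2}}g^{2}(\mathbf{x}-\bm{\xi})\sigma^{2}(\bm{\xi})\mathrm{d}\bm{\xi}$ is $\mathcal{F}^{\sigma}$-measurable, $D_{2}$ is $\mathcal{F}^{\sigma}$-measurable, while $\mathbb{E}[D_{1}\mid\mathcal{F}^{\sigma}]=0$; hence the cross term vanishes and $\mathbb{E}[|\hat{\sigma}_{I}^{2}(\mathbf{x},Q)-\sigma_{I}^{2}(\mathbf{x})|^{2}]=\mathbb{E}[D_{1}^{2}]+\mathbb{E}[D_{2}^{2}]$. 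It then suffices to show each term vanishes: $\mathbb{E}[D_{1}^{2}]$ via Condition 1, and $\mathbb{E}[D_{2}^{2}]$ via Condition 2 together with stationarity of $\sigma^{2}$ and the coupling $\triangle=O(Q^{-\tilde r})$.

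For $\mathbb{E}[D_{1}^{2}]$: conditionally on $\mathcal{F}^{\sigma}$ the family $\{Y(\mathbf{x}+(l,k)\triangle):|l|,|k|\le Q\}$ is centred Gaussian, so $\mathbb{E}[D_{1}^{2}]=\mathbb{E}[\Var(\hat{\sigma}_{I}^{2}(\mathbf{x},Q)\mid\mathcal{F}^{\sigma})]$. Expanding the conditional variance of the double average and inserting the second-order conditional covariance formula established earlier, $\Cov(Y^{2}(\mathbf{a}),Y^{2}(\mathbf{b})\mid\mathcal{F}^{\sigma})=2(\int_{\mathbb{R}^{2}}g(\mathbf{a}-\bm{\xi})g(\mathbf{b}-\bm{\xi})\sigma^{2}(\bm{\xi})\mathrm{d}\bm{\xi})^{2}$, yields $\mathbb{E}[D_{1}^{2}]=\frac{2}{(2Q+1)^{4}}\sum_{l',k',l,k=-Q}^{Q}\mathbb{E}[(\int_{\mathbb{R}^{2}}g(\mathbf{x}+(l,k)\triangle-\bm{\xi})g(\mathbf{x}+(l',k')\triangle-\bm{\xi})\sigma^{2}(\bm{\xi})\mathrm{d}\bm{\xi})^{2}]$, which is exactly twice the quantity that Condition 1 sends to zero. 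Interchanging $\mathbb{E}$ with the finite sum is harmless and each summand is finite (Cauchy--Schwarz bounds it by $\mathbb{E}[\sigma_{I}^{2}(\mathbf{x}+(l,k)\triangle)\,\sigma_{I}^{2}(\mathbf{x}+(l',k')\triangle)]$, finite since $g$ and $L'$ are square-integrable), so $\mathbb{E}[D_{1}^{2}]\to 0$.

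For $\mathbb{E}[D_{2}^{2}]$: since $\mathbb{E}[Y^{2}(\mathbf{a})\mid\mathcal{F}^{\sigma}]=\sigma_{I}^{2}(\mathbf{a})$, we have $D_{2}=(2Q+1)^{-2}\sum_{l,k}(\sigma_{I}^{2}(\mathbf{x}+(l,k)\triangle)-\sigma_{I}^{2}(\mathbf{x}))$. Because $\sigma^{2}$ is stationary, so is $\sigma_{I}^{2}$ (a fixed moving average of $\sigma^{2}$), hence each increment has mean zero and $\mathbb{E}[D_{2}^{2}]=\Var(D_{2})$; by convexity of $t\mapsto t^{2}$, $\mathbb{E}[D_{2}^{2}]\le(2Q+1)^{-2}\sum_{l,k}\mathbb{E}[(\sigma_{I}^{2}(\mathbf{x}+(l,k)\triangle)-\sigma_{I}^{2}(\mathbf{x}))^{2}]=2(2Q+1)^{-2}\sum_{l,k}(C(0,0)-C(l\triangle,k\triangle))$. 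Condition 2 provides a finite $\nabla C$ on $\mathbb{R}^{2}$, so the mean value inequality gives a global Lipschitz bound $|C(0,0)-C(l\triangle,k\triangle)|\le L\,|(l,k)|\,\triangle\le\sqrt{2}\,L\,Q\triangle$ for $|l|,|k|\le Q$, whence $\mathbb{E}[D_{2}^{2}]=O(Q\triangle)$. Under $\triangle=O(Q^{-\tilde r})$, for which the physical window size $Q\triangle$ shrinks as $Q\to\infty$, this tends to zero; combining with the first bound proves $\mathcal{L}_{2}$-convergence.

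I expect the bias term $D_{2}$ to be the main obstacle: it is where stationarity of $\sigma_{I}^{2}$ is invoked (and where one needs $C$ itself finite, not merely its gradient), and where the rate coupling really matters — the $O(Q\triangle)$ bound makes clear that the moving window must contract in physical size so that $\sigma_{I}^{2}$ is nearly constant over it. Identifying $\mathbb{E}[D_{1}^{2}]$ with the left-hand side of Condition 1, by contrast, is essentially bookkeeping once the conditional-Gaussian covariance of $Y^{2}$ is in hand, and the integrability needed there is standard.
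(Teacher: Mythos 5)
Your decomposition is exactly the paper's: they also split $Y^{2}(\mathbf{x}+(l,k)\triangle)-\sigma_{I}^{2}(\mathbf{x})$ into $\bigl[Y^{2}-\sigma_{I}^{2}\bigr](\mathbf{x}+(l,k)\triangle)$ plus the increment $\sigma_{I}^{2}(\mathbf{x}+(l,k)\triangle)-\sigma_{I}^{2}(\mathbf{x})$, kill the cross term by conditioning on $\mathcal{F}^{\sigma}$, identify the conditional-variance term with Condition 1 via the Gaussian fourth-moment formula, and control the bias term with the Mean Value Theorem and stationarity of $\sigma_{I}^{2}$. Your handling of $D_{1}$ and of the cross term is correct and coincides with theirs.

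Where you diverge is the final estimate of the bias term, and there your arithmetic is the correct one. Indeed $(2Q+1)^{-2}\sum_{l,k=-Q}^{Q}(|l|+|k|) = 2Q(Q+1)/(2Q+1) = O(Q)$, so the Lipschitz bound gives $\mathbb{E}[D_{2}^{2}] = O(Q\triangle)$, as you say. The paper's proof states the same quantity equals $2K\triangle Q(Q+1)/(2Q+1)^{2}=O(\triangle)$ --- a spurious extra factor of $(2Q+1)$ in the denominator --- and on that basis concludes convergence for every $\tilde r>0$, which is what the theorem asserts and what the subsequent Remark (allowing $0<\tilde r<1$, i.e.\ a growing window) relies on. With the correct $O(Q\triangle)$ bound one needs the physical window $Q\triangle$ to shrink, i.e.\ $\tilde r>1$, exactly the caveat you flag. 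This restriction is not an artefact of your Jensen step: if $Q\triangle\to\infty$ and $\sigma_{I}^{2}$ is mean-ergodic and non-degenerate (as in Model (5), where $\Var(\sigma_{I}^{2}(\mathbf{x}))>0$), the window average of $Y^{2}$ converges to the deterministic constant $\mathbb{E}[\sigma_{I}^{2}(\mathbf{x})]$ rather than to the random variable $\sigma_{I}^{2}(\mathbf{x})$, so $\mathcal{L}_{2}$-convergence must fail for $0<\tilde r<1$. One minor shared caveat: both you and the paper promote ``finite gradient'' to a uniform Lipschitz constant; when $Q\triangle\to 0$ it suffices that $\nabla C$ be bounded near the origin. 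In short, your proof is sound and essentially the paper's, but it (correctly) establishes the theorem only under the strengthened rate condition $\tilde r>1$.
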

\vspace{4mm} 
\begin{Rem}
The conditions ``$Q\rightarrow\infty$ and $\triangle = O(Q^{-\tilde{r}})$ for $\tilde{r}>0$ " means that we require infill asymptotics. If $0<\tilde{r}<1$, the range of our moving window, $\tau = Q\triangle$ increases as $Q\rightarrow\infty$. On the other hand, if $\tilde{r} = 1$, we have a fixed range and if $\tilde{r}>1$, $\tau\rightarrow0$ as $Q\rightarrow\infty$. Based on the proof of Theorem \ref{thm:l2con}, the rate of the $\mathcal{L}_{2}$ convergence of our estimator increases as $\tilde{r}$ increases. 
\end{Rem}
\vspace{4mm} 
\begin{Cor}\label{cor:convar}
When Theorem \ref{thm:l2con} holds, the mean, variance and normalised variogram of the estimated local variance field converge to those of the true conditional variance, $\sigma^{2}_{I}$.
\end{Cor}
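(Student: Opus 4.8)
The plan is to read off all three statements from the single $\mathcal{L}_{2}$ convergence in Theorem \ref{thm:l2con}, together with the elementary fact that convergence in $\mathcal{L}_{2}$ propagates to the first two moments. The only preliminary observation needed is that the hypotheses of Theorem \ref{thm:l2con} involve only the stationary field $\sigma^{2}$ and the translation-invariant kernel $g$, so they hold simultaneously at every location; in particular, for any fixed $\mathbf{x},\mathbf{x}^{*}\in\mathbb{R}^{2}$ we have $\hat{\sigma}_{I}^{2}(\mathbf{x},Q)\stackrel{\mathcal{L}_{2}}{\to}\sigma_{I}^{2}(\mathbf{x})$ and $\hat{\sigma}_{I}^{2}(\mathbf{x}^{*},Q)\stackrel{\mathcal{L}_{2}}{\to}\sigma_{I}^{2}(\mathbf{x}^{*})$ as $Q\to\infty$ with $\triangle=O(Q^{-\tilde{r}})$, and the rate is uniform in the location by stationarity.

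For the mean, Jensen's inequality (or Cauchy--Schwarz) gives $|\mathbb{E}[\hat{\sigma}_{I}^{2}(\mathbf{x},Q)]-\mathbb{E}[\sigma_{I}^{2}(\mathbf{x})]|\le\mathbb{E}|\hat{\sigma}_{I}^{2}(\mathbf{x},Q)-\sigma_{I}^{2}(\mathbf{x})|\le\|\hat{\sigma}_{I}^{2}(\mathbf{x},Q)-\sigma_{I}^{2}(\mathbf{x})\|_{2}\to0$. For the variance, the reverse triangle inequality in $\mathcal{L}_{2}$ yields $\|\hat{\sigma}_{I}^{2}(\mathbf{x},Q)\|_{2}\to\|\sigma_{I}^{2}(\mathbf{x})\|_{2}$, i.e. $\mathbb{E}[(\hat{\sigma}_{I}^{2}(\mathbf{x},Q))^{2}]\to\mathbb{E}[(\sigma_{I}^{2}(\mathbf{x}))^{2}]$; combining this with the convergence of the means and the identity $\Var(\cdot)=\mathbb{E}[(\cdot)^{2}]-(\mathbb{E}[\cdot])^{2}$ gives $\Var(\hat{\sigma}_{I}^{2}(\mathbf{x},Q))\to\Var(\sigma_{I}^{2}(\mathbf{x}))$.

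For the normalised variogram, put $D_{Q}=\hat{\sigma}_{I}^{2}(\mathbf{x},Q)-\hat{\sigma}_{I}^{2}(\mathbf{x}^{*},Q)$ and $D=\sigma_{I}^{2}(\mathbf{x})-\sigma_{I}^{2}(\mathbf{x}^{*})$. The triangle inequality in $\mathcal{L}_{2}$ and the preliminary observation give $\|D_{Q}-D\|_{2}\le\|\hat{\sigma}_{I}^{2}(\mathbf{x},Q)-\sigma_{I}^{2}(\mathbf{x})\|_{2}+\|\hat{\sigma}_{I}^{2}(\mathbf{x}^{*},Q)-\sigma_{I}^{2}(\mathbf{x}^{*})\|_{2}\to0$, so, arguing exactly as for the variance, $\mathbb{E}[D_{Q}^{2}]\to\mathbb{E}[D^{2}]$. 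Since the normalised variogram of the $\hat{\sigma}_{I}^{2}$-field at lag $\mathbf{x}-\mathbf{x}^{*}$ is $\mathbb{E}[D_{Q}^{2}]/\Var(\hat{\sigma}_{I}^{2}(\mathbf{x},Q))$ and that of $\sigma_{I}^{2}$ is $\mathbb{E}[D^{2}]/\Var(\sigma_{I}^{2}(\mathbf{x}))$, the convergence of numerators and of denominators gives the claim, provided the limiting denominator $\Var(\sigma_{I}^{2}(\mathbf{x}))$ is strictly positive; if in addition one wants the whole variogram function (not just pointwise in the lag) to converge, one uses that the bounds above are uniform in $\mathbf{x}^{*}$.

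The step I expect to need the most care is the last proviso rather than the $\mathcal{L}_{2}\Rightarrow$ moment-convergence bookkeeping: one must ensure $\Var(\sigma_{I}^{2}(\mathbf{x}))>0$ so that the ratio defining the normalised variogram is eventually well defined and its limit is finite and continuous. This is guaranteed by non-degeneracy of the volatility field; for Model (\ref{eqn:weg}) one reads $\Var(\sigma_{I}^{2}(\mathbf{x}))>0$ directly from the expression for $\kappa_{4}$ in Example \ref{eg:cum} whenever $b>0$. The only other point to check is that invoking Theorem \ref{thm:l2con} at the shifted location $\mathbf{x}^{*}$ is legitimate, which is immediate from the stationarity of $\sigma^{2}$ and the translation invariance of $g$.
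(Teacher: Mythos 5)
Your proof is correct, and it takes a genuinely different route from the paper's. The paper passes from the $\mathcal{L}_{2}$ convergence of Theorem \ref{thm:l2con} to convergence in probability at each location, then to joint convergence in probability of the vector $\left(\hat{\sigma}_{I}^{2}(\mathbf{x}_{1}, Q), \dots, \hat{\sigma}_{I}^{2}(\mathbf{x}_{M}, Q)\right)$, then to convergence of the finite-dimensional distributions, and finally asserts that the mean, variance and normalised variogram converge because the joint distributions do. You instead stay inside $\mathcal{L}_{2}$ throughout: Cauchy--Schwarz gives convergence of the means, the reverse triangle inequality in $\mathcal{L}_{2}$ gives convergence of the second moments (and hence of the variance), and the same two steps applied to the difference $D_{Q}=\hat{\sigma}_{I}^{2}(\mathbf{x},Q)-\hat{\sigma}_{I}^{2}(\mathbf{x}^{*},Q)$ handle the variogram numerator. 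Your route is arguably the safer one: convergence in distribution does not by itself imply convergence of moments (one would need a uniform integrability argument to close that step in the paper's chain), whereas $\mathcal{L}_{2}$ convergence delivers first and second moments directly, which is all the corollary actually needs. Your two extra provisos --- that Theorem \ref{thm:l2con} applies at every location by stationarity of $\sigma^{2}$ and translation invariance of $g$, and that $\Var(\sigma_{I}^{2}(\mathbf{x}))>0$ so the normalised variogram's denominator does not degenerate --- are both sensible and are left implicit in the paper; the second is indeed immediate for Model (\ref{eqn:weg}) from $\Var[\sigma^{2}_{I}(\mathbf{x})]=b\lambda^{3}\eta/(4\pi^{3}(2\lambda+\eta))>0$ when $b>0$. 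What the paper's approach buys in exchange is the stronger statement that all finite-dimensional distributions of the estimated field converge, not just its first two moments, but that extra strength is not used in the corollary.
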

\vspace{4mm} 
\begin{Eg}\label{eg:t4eg}
We show that the assumptions required for Theorem \ref{thm:l2con} hold for Model (\ref{eqn:weg}). From Example \ref{eg:2oCov}, we know that:
\begin{equation*}
\Cov(Y^{2}(\mathbf{x}), Y^{2}(\mathbf{x}^{*})) = \frac{(b\lambda\eta + a^{2}(2\lambda + \eta)\pi)\lambda^{2}}{2\pi^{3}(2\lambda+ \eta)}\exp\left(-\lambda \left(\mathbf{x} - \mathbf{x}^{*}\right)^{T}\left(\mathbf{x} - \mathbf{x}^{*}\right)\right) + A\exp\left(-B \left(\mathbf{x} - \mathbf{x}^{*}\right)^{T}\left(\mathbf{x} - \mathbf{x}^{*}\right)\right). 
\end{equation*}
This implies that $(\ref{eqn:CCcondition})$ in the Appendix is equal to:
\begin{align*}
&\frac{1}{(2Q+1)^{4}}\sum_{\substack{l', k', l, k\\ = -Q}}^{Q} \left[\Cov\left(Y^{2}(\mathbf{x}+ (l, k)\triangle), Y^{2}(\mathbf{x} + (l', k')\triangle) \right) - \Cov\left(\sigma^{2}_{I}(\mathbf{x}+ (l, k)\triangle), \sigma^{2}_{I}(\mathbf{x} + (l', k')\triangle) \right)\right] \\
&= \frac{1}{(2Q+1)^{4}}\sum_{\substack{l', k', l, k\\ = -Q}}^{Q}\frac{(b\lambda\eta + a^{2}(2\lambda + \eta)\pi)\lambda^{2}}{2\pi^{3}(2\lambda+ \eta)}\exp\left(-\lambda \triangle^{2} \left[\left(l - l'\right)^2 + \left(k - k'\right)^2 \right]\right) \\
&= \frac{(b\lambda\eta + a^{2}(2\lambda + \eta)\pi)\lambda^{2}}{2\pi^{3}(2\lambda+ \eta)(2Q+1)^{4}}\left[\sum_{k = -Q}^{Q} \sum_{k'= -Q}^{Q} \exp\left(-\lambda \triangle^{2} \left(k - k'\right)^2 \right) \right]^{2} \\
&=  \frac{(b\lambda\eta + a^{2}(2\lambda + \eta)\pi)\lambda^{2}}{2\pi^{3}(2\lambda+ \eta)(2Q+1)^{4}}\left[(2Q+1) + \sum_{\substack{\left\{k, k' \in\left\{ -Q, Q\right\}:\right.\\ \left. |k - k'| = 1\right\}}} e^{-\lambda \triangle^{2} \left(k - k'\right)^2} + \dots + \sum_{\substack{\left\{k, k' \in\left\{ -Q, Q\right\}:\right.\\ \left. |k - k'| = 2Q\right\}}}  e^{-\lambda \triangle^{2} \left(k - k'\right)^2} \right]^{2} \\
 &=  \frac{(b\lambda\eta + a^{2}(2\lambda + \eta)\pi)\lambda^{2}}{2\pi^{3}(2\lambda+ \eta)(2Q+1)^{4}}\left[(2Q+1) + 4Q \exp\left(-\lambda \triangle^{2}\right) + \dots + 2 \exp\left(-\lambda \triangle^{2} \left(2Q \right)^2 \right) \right]^{2}.
\end{align*}
All of the terms in the square brackets of the last line behave like $O(Q)$ when $Q$ tends to infinity and $\triangle$ behaves like $O(Q^{-\tilde{r}})$ for $\tilde{r}>0$. So, term (\ref{eqn:CCcondition}) behaves like $O(Q^{-2})$ and converges to zero as required.  
\\
Next, we show that the second condition of Theorem \ref{thm:l2con} holds:
\begin{equation*}
C(d_{x_{1}}, d_{x_{2}}) = \frac{b\lambda^{3}\eta}{4\pi^{3}(2\lambda + \eta)}\exp\left(\frac{-\lambda\eta}{2\lambda + \eta} \left(d_{x_{1}}^{2} + d_{x_{2}}^{2}\right)\right) \Rightarrow \frac{\partial C}{\partial d_{x_{i}}} = \frac{-2b\lambda^{4}\eta^{2}d_{t_{i}}}{4\pi^{3}(2\lambda + \eta)^{2}}\exp\left(\frac{-\lambda\eta}{2\lambda + \eta} \left(d_{x_{1}}^{2} + d_{x_{2}}^{2}\right)\right),
\end{equation*}
for $i = 1, 2$. Using L' H\^{o}pital's Rule, $\partial C/\partial d_{x_{i}} \rightarrow 0$ as $d_{x_{i}}\rightarrow\infty$. Thus, $\nabla C = (\partial C/\partial d_{x_{1}}, \partial C/\partial d_{x_{2}})$ is finite over $\mathbb{R}^{2}$.
\end{Eg}
\vspace{4mm}
\begin{thm}
Suppose that Theorem \ref{thm:l2con} holds, i.e. $\hat{\sigma}_{I}^{2}(\mathbf{x}, Q) \stackrel{\mathcal{L}_{2}}{\rightarrow} \sigma_{I}^{2}(\mathbf{x})$ when $Q\rightarrow\infty$ and $\triangle = O(Q^{-\tilde{r}})$ for $\tilde{r}>0$, and that the number of local variance locations, $\widetilde{M} = O(\triangle^{\tilde{\tau}})$ where $\tilde{\tau} > 0$. This means that we have both infill and increasing domain asymptotics. Then if Slutsky's conditions for mean and covariance ergodicity hold for $Y$ and $\sigma^{2}_{I}$, i.e.:
\begin{gather*}
\lim_{X\rightarrow \infty}\frac{1}{X^{2}} \int_{0}^{X}\int_{0}^{X} \Cov\left(Y\left(\mathbf{x}\right), Y\left(\mathbf{x} + \mathbf{h}\right)\right) \mathrm{d}\mathbf{h} = 0, \text{ } \lim_{X\rightarrow \infty}\frac{1}{X^{2}} \int_{0}^{X}\int_{0}^{X}\Cov^{2}\left(Y\left(\mathbf{x}\right), Y\left(\mathbf{x} + \mathbf{h}\right)\right) \mathrm{d}\mathbf{h} = 0 \\
\lim_{X\rightarrow \infty}\frac{1}{X^{2}} \int_{0}^{X}\int_{0}^{X}\Cov\left(\sigma^{2}_{I}\left(\mathbf{x}\right), \sigma^{2}_{I}\left(\mathbf{x} + \mathbf{h}\right)\right) \mathrm{d}\mathbf{h} = 0 \text{ and } \lim_{X\rightarrow \infty} \frac{1}{X^{2}}  \int_{0}^{X}\int_{0}^{X} \Cov^{2}\left(\sigma^{2}_{I}\left(\mathbf{x}\right), \sigma^{2}_{I}\left(\mathbf{x} + \mathbf{h}\right)\right) \mathrm{d}\mathbf{h} = 0,
\end{gather*}
the empirical variance and normalised variogram of $Y$ are consistent. Furthermore, the empirical mean, variance and normalised variogram of $\hat{\sigma}^{2}_{I}$ converge in probability to the respective theoretical quantities of $\sigma^{2}_{I}$. 
\end{thm}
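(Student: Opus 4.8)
The plan is to reduce each assertion to the statement that a spatial sample average converges in probability to its (location-independent) expectation, and then to control the associated variances using the explicit second- and fourth-order moment formulas for VMMAs from Section~\ref{sec:tprop} together with the assumed Slutsky (ergodicity) conditions and the standing integrability assumptions ($g\in L^1\cap L^2$, $\sigma^{2}$ with finite second moment). The two limiting regimes play distinct roles: $Q\to\infty$ is used only to make the local variance estimator accurate in $\mathcal{L}_2$ via Theorem~\ref{thm:l2con}, i.e. $\hat\sigma_I^2(\mathbf x,Q)\xrightarrow{\mathcal{L}_2}\sigma_I^2(\mathbf x)$, whereas the growth of $\widetilde M$ (equivalently of $M$) supplies the increasing domain over which the ergodic averaging is carried out. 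I use throughout that $(Y,\sigma^2)$ is jointly stationary (Theorem~\ref{thm:station} and the shift-invariance in its proof), so the $\mathcal{L}_2$ error $\|\hat\sigma_I^2(\bm\xi_i,Q)-\sigma_I^2(\bm\xi_i)\|_{\mathcal{L}_2}=:\delta_Q$ does not depend on the index $i$.

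For $Y$: first $\bar Y\xrightarrow{\mathcal{L}_2}0$, since $\Var(\bar Y)=M^{-4}\sum_{k,l}\Cov(Y(\mathbf x_k),Y(\mathbf x_l))$ is a lattice approximation to the Ces\`aro average in the first Slutsky condition for $Y$ and hence tends to $0$. Writing $\hat\kappa_2=\tfrac{M^2}{M^2-1}\bigl(M^{-2}\sum_k Y^2(\mathbf x_k)-\bar Y^2\bigr)$ and using $\bar Y^2\xrightarrow{P}0$, it remains to show $M^{-2}\sum_k Y^2(\mathbf x_k)\xrightarrow{P}\kappa_2=\mathbb{E}[\sigma_I^2(\mathbf 0)]$; its mean is $\kappa_2$ by stationarity and its variance is $M^{-4}\sum_{k,l}\Cov(Y^2(\mathbf x_k),Y^2(\mathbf x_l))$. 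I then invoke the corollary giving $\Cov(Y^2(\mathbf x),Y^2(\mathbf x^*))$ and the identity $\mathbb{E}\bigl[(\Cov(Y(\mathbf x),Y(\mathbf x^*)|\mathcal F^\sigma))^2\bigr]=\Var(\Cov(Y(\mathbf x),Y(\mathbf x^*)|\mathcal F^\sigma))+\Cov^2(Y(\mathbf x),Y(\mathbf x^*))$, which splits $\Cov(Y^2(\mathbf x),Y^2(\mathbf x^*))$ into three pieces: $\Cov(\sigma_I^2(\mathbf x),\sigma_I^2(\mathbf x^*))$, killed on Ces\`aro average by the mean-ergodicity condition for $\sigma_I^2$; $\Cov^2(Y(\mathbf x),Y(\mathbf x^*))$, killed by the $\Cov^2$ condition for $Y$; and $\Var(\Cov(Y(\mathbf x),Y(\mathbf x^*)|\mathcal F^\sigma))$, which after bounding $|\Cov(\sigma^2(\bm\xi),\sigma^2(\bm\zeta))|\le\Var(\sigma^2(\mathbf 0))$ is dominated by $\Var(\sigma^2(\mathbf 0))\bigl((|g|\ast|\check g|)(\mathbf x-\mathbf x^*)\bigr)^2$, a summable quantity because the autocorrelation of $g$ lies in $L^1\cap L^\infty$ when $g\in L^1\cap L^2$. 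Hence $\hat\kappa_2\xrightarrow{\mathcal{L}_2}\kappa_2$. The normalised variogram is handled identically: the numerator $|N(d_{\mathbf x})|^{-1}\sum_{(k,l)\in N(d_{\mathbf x})}(Y(\mathbf x_k)-Y(\mathbf x_l))^2$ has mean $2\kappa_2\bigl(1-\Corr(Y(\mathbf x),Y(\mathbf x^*))\bigr)$ with $|\mathbf x-\mathbf x^*|=d_{\mathbf x}$ and a variance that expands into the same three types of pieces, so it is consistent; since $\hat\kappa_2\xrightarrow{P}\kappa_2>0$, Slutsky's theorem gives $\hat\gamma(d_{\mathbf x})\xrightarrow{P}\gamma(d_{\mathbf x})$ of (\ref{eqn:nvar}).

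For the estimated local variance field: write $\hat\sigma_I^2(\bm\xi_i,Q)=\sigma_I^2(\bm\xi_i)+\varepsilon_i$, where by stationarity $\mathbb{E}[\varepsilon_i]=0$ and $\|\varepsilon_i\|_{\mathcal{L}_2}=\delta_Q$ for every $i$, with $\delta_Q\to0$ by Theorem~\ref{thm:l2con}. Then $\overline{\hat\sigma_I^2}=\widetilde M^{-1}\sum_i\sigma_I^2(\bm\xi_i)+\widetilde M^{-1}\sum_i\varepsilon_i$: the first term $\xrightarrow{P}\mathbb{E}[\sigma_I^2(\mathbf 0)]$ by mean-ergodicity of $\sigma_I^2$ and $\widetilde M\to\infty$, while the second has $\mathcal{L}_1$ norm at most $\widetilde M^{-1}\sum_i\|\varepsilon_i\|_{\mathcal{L}_2}=\delta_Q\to0$. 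For $\widehat A=(\widetilde M-1)^{-1}\sum_i(\hat\sigma_I^2(\bm\xi_i,Q)-\overline{\hat\sigma_I^2})^2$ I expand the square into the pure-$\sigma_I^2$ sample variance, which $\xrightarrow{P}\Var(\sigma_I^2(\mathbf 0))$ by the ergodicity conditions for $\sigma_I^2$ and Corollary~\ref{cor:convar}; the pure-$\varepsilon$ sum, whose expectation is at most $\tfrac{\widetilde M}{\widetilde M-1}\delta_Q^2\to0$; and a cross term bounded by Cauchy--Schwarz by twice the product of the square roots of the other two, hence $\xrightarrow{P}0$. This yields $\widehat A\xrightarrow{P}\Var(\sigma_I^2(\mathbf 0))>0$. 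The empirical normalised variogram of $\hat\sigma_I^2$ follows from the same expansion applied to the lag-$d_{\mathbf x}$ average of $(\hat\sigma_I^2(\bm\xi_i,Q)-\hat\sigma_I^2(\bm\xi_j,Q))^2$ and dividing by $\widehat A$, invoking Slutsky's theorem once more.

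I expect the main obstacle to be the variance bounds for $\hat\kappa_2$ and for the variogram numerator of $Y$: because $Y$ is not marginally Gaussian, its fourth-order moments do not factor through its covariance, so one must use the explicit VMMA formula, and the delicate point is that the ``variance of the conditional covariance'' contribution is controlled not by any of the four assumed ergodicity conditions but only by the kernel integrability $g\in L^1\cap L^2$ via the convolution bound above. An analogous care is needed to see that the pure-$\sigma_I^2$ sample variance and variogram are consistent, where one leans on the fourth-order structure of $\sigma_I^2$ implied by its LMA representation rather than on the $\Cov^2$ condition alone. The remaining work is the routine but necessary bookkeeping of passing from lattice sums to integrals in all the Ces\`aro averages, and of coordinating the limits $Q\to\infty$ and $M,\widetilde M\to\infty$ so that the $\mathcal{L}_2$ error $\delta_Q$, being uniform in the spatial index, is negligible against the ergodic averages.
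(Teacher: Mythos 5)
Your proposal is correct and reaches the same conclusions by the same overall strategy as the paper (ergodic averaging for the $Y$-level statistics; Theorem \ref{thm:l2con} plus ergodic averaging for the $\hat{\sigma}^{2}_{I}$-level statistics), but it is substantially more explicit where the paper simply delegates. The paper's proof only verifies that the discrete Ces\`aro sums converge to the continuous ones and then cites Slutsky's ergodic theorem as in \cite{Cressie1993}; for the $\hat{\sigma}^{2}_{I}$ statistics it takes the limits sequentially ($Q\rightarrow\infty$ for fixed $\widetilde{M}$, then $\widetilde{M}\rightarrow\infty$) and appeals to the Continuous Mapping Theorem. You instead (i) carry out the fourth-moment bookkeeping directly, using the conditional Gaussianity of $Y$ and the corollary on $\Cov(Y^{2}(\mathbf{x}),Y^{2}(\mathbf{x}^{*}))$ to split the variance of $\hat{\kappa}_{2}$ and of the variogram numerator into a $\Cov(\sigma^{2}_{I},\sigma^{2}_{I})$ piece, a $\Cov^{2}(Y,Y)$ piece, and a ``variance of the conditional covariance'' piece controlled by $g\in L^{1}\cap L^{2}$; and (ii) replace the sequential limit by a uniform-in-$i$ $\mathcal{L}_{2}$ bound $\delta_{Q}$ on $\hat{\sigma}^{2}_{I}(\bm{\xi}_{i},Q)-\sigma^{2}_{I}(\bm{\xi}_{i})$, which legitimately handles the joint asymptotics $Q,\widetilde{M}\rightarrow\infty$. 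This buys genuine rigour: the $\Cov^{2}$ Slutsky conditions are, strictly, sufficient for covariance ergodicity only for Gaussian processes, and neither $Y$ nor $\sigma^{2}_{I}$ is Gaussian here, so your extra decomposition (and your flagged reliance on the fourth-order structure of the LMA $\sigma^{2}$, which implicitly requires finite fourth moments of $L'$) is not optional pedantry but fills a real gap that the paper's citation-level argument leaves open. The only cosmetic points are that the stated scaling $\widetilde{M}=O(\triangle^{\tilde{\tau}})$ should be read as $\widetilde{M}=O(\triangle^{-\tilde{\tau}})$ for the domain to increase, and that passing from lattice sums of $\bigl(|g|\ast|\check g|\bigr)^{2}$ to its integral needs the mild regularity you already assume; neither affects the argument.
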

\begin{proof}
As suggested on page 57 of \cite{Cressie1993}, under infill and increasing domain asymptotics, Slutsky's conditions for the mean and covariance ergodicity of the discrete process formed by sampling a continuous process converges to those of the continuous process itself. For example:
\begin{align*}
&\lim_{X, N\rightarrow\infty}\frac{1}{N^{2}}\sum_{i = 1}^{N}\sum_{j = 1}^{N}\Cov\left(Y\left(\mathbf{x}\right), Y\left(\mathbf{x} + \left(\frac{iX}{N},  \frac{jX}{N}\right)\right)\right) = 0 \text{ where } X = O(N^{\tilde{s}}) \text{ with } \tilde{s}<1,\\
&\Leftrightarrow  \lim_{X\rightarrow \infty}\frac{1}{X^{2}}\lim_{N\rightarrow\infty}\frac{X^{2}}{N^{2}}\sum_{i = 1}^{N}\sum_{j = 1}^{N}\Cov\left(Y\left(\mathbf{x}\right),  Y\left(\mathbf{x} + \left(\frac{iX}{N}, \frac{jX}{N}\right)\right)\right) = 0 \\
&\Leftrightarrow\lim_{X\rightarrow \infty}\frac{1}{X^{2}} \int_{0}^{X}\int_{0}^{X} \Cov\left(Y\left(\mathbf{x}\right), Y\left(\mathbf{x} + \mathbf{h}\right)\right)\mathrm{d}\mathbf{h} = 0
\end{align*}
since in the second line, $X$ is fixed when we vary $N$. 
\\ 
Thus, if Slutsky's conditions hold, the sample mean and covariances for $Y$ and $\sigma_{I}^{2}$ converge to their theoretical values. Since the normalised variograms are formed from the covariances, the empirical variograms are also consistent. 
\\
Now we show that the empirical mean, variance and normalised variogram of $\hat{\sigma}^{2}_{I}$ converge to the respective theoretical equivalents of $\sigma^{2}_{I}$. Under the conditions of Theorem \ref{thm:l2con},  $\hat{\sigma}_{I}^{2}(\mathbf{x}, Q) \stackrel{p}{\rightarrow} \sigma_{I}^{2}(\mathbf{x})$ for arbitrary $\mathbf{x}\in\mathbb{R}^{2}$. Since for $X_{n}\stackrel{p}{\rightarrow}X$ and $Y_{n}\stackrel{p}{\rightarrow}Y$, we have that $aX_{n} + bY_{n} \stackrel{p}{\rightarrow} aX + bY$, it follows that the sample mean of an estimated local variance surface at $\widetilde{M}$ locations, $\widetilde{M}^{-1}\sum_{i = 1}^{\widetilde{M}}\hat{\sigma}_{I}^{2}(\mathbf{s}_{i})\stackrel{p}{\rightarrow} \widetilde{M}^{-1}\sum_{i = 1}^{\widetilde{M}}\sigma_{I}^{2}(\mathbf{s}_{i})$ as $Q\rightarrow \infty$. Since we have mean ergodicity, this in turn converges in probability to $\mathbb{E}\left[ \sigma^{2}(\mathbf{x})\right]$ as $\widetilde{M}\rightarrow \infty$. Following similar arguments with the appropriate use of the Continuous Mapping Theorem, we can show that the sample variance and normalised variogram of $\hat{\sigma}_{I}^{2}$ converge in probability to the theoretical variance and normalised variogram of $\sigma_{I}^{2}$.
\end{proof}
\vspace{2mm}
\begin{Rem}
Recall that $\Cov\left(Y\left(\mathbf{x}\right), Y\left(\mathbf{x} + \mathbf{h}\right)\right) =  \mathbb{E}\left[\sigma^{2}(\mathbf{0})\right]\int_{\mathbb{R}^{2}}g(\mathbf{w})g(\mathbf{w} + \mathbf{h})\mathrm{d}\mathbf{w}$ and:
\begin{align*}
\Cov\left(\sigma^{2}_{I}\left(\mathbf{x}\right), \sigma^{2}_{I}\left(\mathbf{x} + \mathbf{h}\right)\right)) &= \int_{\mathbb{R}^{2}} \int_{\mathbb{R}^{2}} g^{2}(\mathbf{x}-\bm{\xi})g^{2}(\mathbf{x} + \mathbf{h}-\bm{\xi}^{*})\Cov\left(\sigma^{2}(\bm{\xi}), \sigma^{2}(\bm{\xi}^{*})\right)\mathrm{d}\bm{\xi}\mathrm{d}\bm{\xi}^{*} \\
&= \int_{\mathbb{R}^{2}}\left[ \int_{\mathbb{R}^{2}} g^{2}(\mathbf{u}-\mathbf{w})g^{2}(\mathbf{u} + \mathbf{h})\mathrm{d}\mathbf{u}\right]\Cov\left(\sigma^{2}(\mathbf{w}),\sigma^{2}(\mathbf{0})\right)\mathrm{d}\mathbf{w}, 
\end{align*}
where $\mathbf{w} = \bm{\xi} - \bm{\xi}^{*}$, $\mathbf{u} = \mathbf{x} - \bm{\xi}^{*}$ and we have written the covariance of $\sigma^{2}$ in terms of the lag $\mathbf{w}$. Using these expressions, we can express Slutsky's conditions in terms of $g$ and the first two moments of $\sigma^{2}$. 
\end{Rem}
\vspace{2mm}
\begin{Lem}
For Model (\ref{eqn:weg}), our two-step moments-matching estimators, $\hat{\lambda}$, $\hat{a}$, $\hat{a}_{2}$, $\hat{b}$ and $\hat{\eta}$ are consistent under infill and increasing domain asymptotics.
\end{Lem}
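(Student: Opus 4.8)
The plan is to reduce the claim to the convergence results already in hand: first verify that Model~(\ref{eqn:weg}) meets every hypothesis used in Theorem~\ref{thm:l2con} and in the ergodicity theorem that immediately precedes this Lemma, and then recognise each of $\hat{\lambda},\hat{a},\hat{a}_{2},\hat{b},\hat{\eta}$ as a fixed continuous (indeed rational) transformation of a short list of empirical functionals that those results certify to be consistent, so that the Continuous Mapping Theorem and Slutsky's theorem finish the argument. For the hypothesis check, Example~\ref{eg:t4eg} already verifies the two conditions of Theorem~\ref{thm:l2con}, so it remains only to confirm Slutsky's mean- and covariance-ergodicity conditions for $Y$ and for $\sigma^{2}_{I}$. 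Here I would use the explicit second-order structure of Model~(\ref{eqn:weg}): by Example~\ref{eg:wegCorr} the covariance of $Y$ is $\tfrac{a\lambda}{2\pi}\exp(-\tfrac{\lambda}{2}\mathbf{h}^{T}\mathbf{h})$ and by the proof of Example~\ref{eg:2oCov} the covariance of $\sigma^{2}_{I}$ is $A\exp(-B\,\mathbf{h}^{T}\mathbf{h})$ with $A=b\lambda^{3}\eta/(4\pi^{3}(2\lambda+\eta))$ and $B=\lambda\eta/(2\lambda+\eta)$; each of these and their squares is Lebesgue-integrable over $\mathbb{R}^{2}$, so every Ces\`aro average $X^{-2}\int_{0}^{X}\!\int_{0}^{X}(\cdot)\,\mathrm{d}\mathbf{h}$ is $O(X^{-2})\to 0$. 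Granting this, the preceding theorem delivers $\hat{\kappa}_{2}\stackrel{p}{\rightarrow}\kappa_{2}=\tfrac{a\lambda}{2\pi}$, $\overline{\hat{\sigma}^{2}_{I}}\stackrel{p}{\rightarrow}\mathbb{E}[\sigma^{2}_{I}]=\kappa_{2}$, $\widehat{A}\stackrel{p}{\rightarrow}\Var(\sigma^{2}_{I})=A$, and consistency of the empirical normalised variograms of $Y$ and of $\hat{\sigma}^{2}_{I}$.

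The one point that requires genuine work is that $\hat{\lambda}$ and $\widehat{B}$ use variogram estimates at the \emph{shrinking} lag $\triangle$, not at a fixed lag, so a bare appeal to consistency of the empirical variogram is not enough. Writing $\hat{\lambda}=-2\triangle^{-2}\log(1-\hat{\gamma}(\triangle)/2)$ and noting from~(\ref{eqn:nvar}) that $-2\triangle^{-2}\log(1-\gamma(\triangle)/2)=\lambda$ identically, a first-order expansion of the logarithm reduces the task to showing $\hat{\gamma}(\triangle)-\gamma(\triangle)=o_{p}(\triangle^{2})$. I would prove this by a direct second-moment bound on~(\ref{eqn:envar}): each summand $(Y(\mathbf{x}_{k})-Y(\mathbf{x}_{l}))^{2}$ is the square of a centred Gaussian with variance $\kappa_{2}\gamma(\triangle)=O(\triangle^{2})$, hence has mean $O(\triangle^{2})$ and variance $O(\triangle^{4})$; because the covariance of $Y$ is smooth and Gaussian-decaying, distinct lag-$\triangle$ pairs are near-independent beyond an $O(1)$ range, so the variance of the average over the $|N(\triangle)|=\Theta(M^{2})$ pairs is $O(\triangle^{2}/M^{2})$, i.e. $\hat{\gamma}(\triangle)-\gamma(\triangle)=O_{p}(\triangle/M)$, and dividing by $\hat{\kappa}_{2}\stackrel{p}{\rightarrow}\kappa_{2}>0$ leaves this order intact; since the domain grows, $M\triangle\to\infty$ and $(\hat{\gamma}(\triangle)-\gamma(\triangle))/\triangle^{2}=O_{p}((M\triangle)^{-1})\to 0$, giving $\hat{\lambda}\stackrel{p}{\rightarrow}\lambda$. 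The same scheme applied to $\hat{\psi}(\triangle)$, together with $-\triangle^{-2}\log(1-\psi(\triangle)/2)=B$ for $\sigma^{2}_{I}$, gives $\widehat{B}\stackrel{p}{\rightarrow}B$ — but now one must also carry the local-variance error $\hat{\sigma}^{2}_{I}-\sigma^{2}_{I}$, whose $\mathcal{L}_{2}$-size is controlled by Theorem~\ref{thm:l2con}, through the shrinking-lag variogram before the concentration bound can be run. Propagating this second, compounded source of error is the main obstacle; everything else is bookkeeping.

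Finally, I would conclude by observing that $\hat{a}=2\pi\hat{\kappa}_{2}/\hat{\lambda}$, $\hat{a}_{2}=2\pi\overline{\hat{\sigma}^{2}_{I}}/\hat{\lambda}$, $\hat{b}=4\pi^{3}\widehat{A}/(\hat{\lambda}^{2}\widehat{B})$ and $\hat{\eta}=2\hat{\lambda}\widehat{B}/(\hat{\lambda}-\widehat{B})$ are rational functions of $(\hat{\kappa}_{2},\hat{\lambda},\overline{\hat{\sigma}^{2}_{I}},\widehat{A},\widehat{B})$ that are continuous at the limit point $(\kappa_{2},\lambda,\kappa_{2},A,B)$, the only denominators being $\lambda>0$ and $\lambda-B=2\lambda^{2}/(2\lambda+\eta)>0$, neither of which vanishes. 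Hence, by the Continuous Mapping Theorem and Slutsky's theorem, they converge in probability to $2\pi\kappa_{2}/\lambda$, $2\pi\kappa_{2}/\lambda$, $4\pi^{3}A/(\lambda^{2}B)$ and $2\lambda B/(\lambda-B)$ respectively; substituting $\kappa_{2}=\tfrac{a\lambda}{2\pi}$, $A=\tfrac{b\lambda^{3}\eta}{4\pi^{3}(2\lambda+\eta)}$ and $B=\tfrac{\lambda\eta}{2\lambda+\eta}$ collapses these limits to $a$, $a$, $b$ and $\eta$, which is exactly the asserted consistency.
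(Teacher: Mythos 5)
Your proposal follows the same overall route as the paper's proof: verify Slutsky's mean- and covariance-ergodicity conditions for $Y$ and $\sigma^{2}_{I}$ (which hold because both covariances are Gaussian in the lag, hence integrable together with their squares), invoke Example~\ref{eg:t4eg} to activate Theorem~\ref{thm:l2con}, and then pass the resulting consistency of $\hat{\kappa}_{2}$, $\overline{\hat{\sigma}^{2}_{I}}$, $\widehat{A}$ and the empirical normalised variograms through the Continuous Mapping Theorem and Slutsky's theorem, checking that the denominators $\lambda$ and $\lambda-\widehat{B}\to 2\lambda^{2}/(2\lambda+\eta)>0$ stay away from zero. Where you go beyond the paper is in the treatment of $\hat{\lambda}$ and $\widehat{B}$: the paper's proof ends with ``by repeated use of Slutsky's Theorem and the Continuous Mapping Theorem, it is easy to show that our parameter estimators are consistent,'' which implicitly treats $\hat{\gamma}(\triangle)$ as the empirical variogram at a fixed lag, whereas under infill asymptotics the lag $\triangle$ shrinks and the formula $\hat{\lambda}=-2\triangle^{-2}\log(1-\hat{\gamma}(\triangle)/2)$ divides by $\triangle^{2}$, so plain consistency of the variogram is not by itself sufficient --- one needs $\hat{\gamma}(\triangle)-\gamma(\triangle)=o_{p}(\triangle^{2})$. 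Your identification of this point, and the second-moment bound giving $O_{p}((M\triangle)^{-1})$ for the normalised error under a growing domain, is a genuine and correct refinement; your candid remark that the analogous bound for $\hat{\psi}(\triangle)$ must additionally absorb the $\mathcal{L}_{2}$-error $\hat{\sigma}^{2}_{I}-\sigma^{2}_{I}$ from Theorem~\ref{thm:l2con} flags the one place where real work remains and which the paper also leaves implicit. In short, your argument is the paper's argument done more carefully, and the extra care is warranted.
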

\begin{proof}
Slutsky's ergodic conditions hold for Model (\ref{eqn:weg}) since the covariances of $Y$ and $\sigma^{2}_{I}$ can be written in terms of Gaussian densities. From Example \ref{eg:t4eg}, we also know that Theorem \ref{thm:l2con} holds. So, the empirical variance and normalised variogram of $Y$ are consistent, and the empirical mean, variance and normalised variogram of $\hat{\sigma}^{2}_{I}$ converge in probability to the theoretical quantities of $\sigma^{2}_{I}$. By repeated use of Slutsky's Theorem and the Continuous Mapping Theorem, it is easy to show that our parameter estimators are consistent. 
\end{proof}
\begin{Rem}
We proved the consistency of our parameter estimators for Model (\ref{eqn:weg}). As hinted at in Remark \ref{rem:general}, these consistency conditions can be checked for the two-step moments-matching estimators of other VMMAs when they have parameters representative of the variance and correlation of $Y$ and those of $\sigma^{2}$, and when an analytical expression for $\Cov(\sigma^{2}_{I}(\bm{\xi}_{i}), \sigma^{2}_{I}(\bm{\xi}_{j}))$ is available. 
\end{Rem}

\subsection{Estimation in practice}

Under infill and increasing domain asymptotics, our parameter estimators are consistent when the range of our moving window behaves in an appropriate manner with respect to the grid size. To use the moments-based method in practice, we need to fix this range by choosing our tuning parameter $q$. In this subsection, we introduce a way to select q by the so-called maximum regional variance, examine the potential and limitations of such a method as well as illustrate why we chose to define the local variance estimator by the mean of the squared data values instead of a sample variance formula. 

\subsubsection{Choosing $q$ via the maximum regional variance}

The size of the volatility clusters, i.e. the regions of high volatility, gives us information about the variance and correlation parameters of $\sigma^{2}$. In the case of Model (\ref{eg:weg}), these are $b$ and $\eta$ respectively. Thus, we choose $q$ to identify the volatility cluster size. Let us define the empirical regional variance at $\bm{\xi}_{i}\in\mathbb{R}^{2}$ by:
\begin{equation*}
\hat{\kappa}_{2}^{I}(\bm{\xi}_{i}) = \frac{1}{|Y_{A_{i}}|-1}\sum_{\mathbf{x}\in A_{i}}\left(Y(\mathbf{x}) - \overline{Y}_{A_{i}}\right)^{2},
\end{equation*}
where $Y_{A_{i}}$ denotes the data values within the subregion $A_{i}$ whose centre is $\bm{\xi}_{i}$ and $\overline{Y}_{A_{i}}$, the empirical mean of $Y_{A_{i}}$.
\\
If $q$ is too small, the similarity of the $Y$s within the capture windows will cause the $\hat{\kappa}_{2}^{I}$ values to be small. As we increase $q$ to the radius of a volatility cluster, there will be capture windows for which the $Y$s within have increasingly different values. This causes the maximum regional variance (MRV), i.e. the maximum of the $\hat{\kappa}_{2}^{I}$ values over the $\bm{\xi}_{i}$ values, to increase. If we increase $q$ further, the unconditional stationarity of our VMMA and the stationarity of the $\sigma^{2}$ layer will cause the $\hat{\kappa}_{2}^{I}$ values and hence its maximum over the subregions to drop. Thus, we can identify appropriate values of $q$ by calculating the MRV for a range of values and choosing the $q$ values at the MRV peaks. 
\\
Figure \ref{fig:MrvCompare} shows the MRV values of a data set, Data set 1, simulated from Model (\ref{eqn:weg}) with $\lambda = \eta = 4$, $a = 1$ and $b =2$ in Plot (a) and that simulated from a GMA with the same correlation structure and underlying Gaussian noise in Plot (b) for $q$ ranging from $9$ to $51$. To generate each data set, we chose $\triangle = 0.05$ and $M = 201$. We see that although both plots feature peaks in MRV, the peak for the VMMA is larger in magnitude. This indicates the presence of stochastic volatility: the higher the magnitude of the peak, the more prominent the associated volatility cluster. On the other hand, while the peak occurs at $q = 21$ for the VMMA, it occurs at $q = 33$ for the GMA. The ``volatility clusters'' identified in the GMA data have a larger radius. Since larger volatility clusters are less distinct over the same data region, this is indicative of its constant volatility. From this observation, in cases of multiple peaks, we prioritise peaks at lower $q$ values.

\subsubsection{Choice of the local variance estimator}

Here, we give some reasoning behind our choice of $\hat{\sigma}_{I}^{2}$ as the local variance estimator instead of $\hat{\kappa}_{2}^{I}$. 
\\
Although computing $\hat{\kappa}_{2}^{I}$ allows us to identify the volatility cluster sizes, these estimates do not work well as local variance estimates. In particular, they underestimate the local variances near the centres of volatility clusters. The key difference between $\hat{\kappa}_{2}^{I}$ and $\hat{\sigma}^{2}_{I}(\bm{\xi}_{i})$ is that instead of using $\overline{Y}_{A_{i}}$ to estimate regional mean, we set it to the theoretical zero.
\\
Figures \ref{fig:k2SigmaI}(b) and (e) show the fields of $\hat{\kappa}_{2}^{I}$ and $\hat{\sigma}^{2}_{I}$ obtained from Data set $1$ for $q = 21$. Comparing Plots (b) and (e) (and similar plots for other $q$ values), we notice that while high $\hat{\kappa}_{2}^{I}$ values mean that we are at the boundaries of the volatility clusters, high $\hat{\sigma}^{2}_{I}$ values mean that we are in the clusters themselves. Using the MRV to choose $q$ enables us to choose a capture window size so that we find the size of the highest volatility cluster which lies away from the boundaries (so as to be captured by the capture window). From the location of the red diamond in Plot (e), we see that the point whose regional variance is the MRV for $q = 21$ lies at the slope of this cluster. Note that the white border in the heat plots denote areas for which no estimates are obtained from the moving window approach and this widens as $q$ increases.

\subsubsection{Inference results and discussion} 

We apply our inference procedure to $100$ simulated data sets for Model (\ref{eqn:weg}). These were generated with random seeds $1$ to $100$. Figure \ref{fig:BP100} shows the estimates and selected $q$ values from the $100$ data sets. The outliers are labelled by their data set indices. Since the medians of the estimates (denoted by the bold black horizontal lines) lie close to the true parameter values (represented by the red horizontal lines) in all cases, our inference method works reasonably well.
\\
Despite the promising results, we note that the moving window approach is sensitive to the most prominent estimated volatility cluster which is in turn influenced by the realisation of the underlying Gaussian noise. In individual cases, special care is also required when we have overlapping clusters and sudden surges in amplitude. 
\\
In Figure \ref{fig:hatSIVG}(a), we show the true conditional variance surface of Data set 1. Plots (b) and (c) show its estimated local variance surface as well as that for its corresponding GMA data set. For the VMMA, $\hat{\sigma}^{2}_{I}$ locates the volatility clusters reasonably well. In addition, as can be seen from the absolute difference between $\sigma^{2}(\bm{\xi})$ and $\hat{\sigma}^{2}_{I}$ in Figure \ref{fig:hatSIVG}(d), the regions of higher error occur at areas of high conditional variances. Attributing this error to the realisation of the background Gaussian noise is also consistent with the fact that volatility clusters are also identified for the GMA in Figure \ref{fig:hatSIVG}(c).
\\
To illustrate the other limitations of the method which we mentioned, we examine several outliers in Figure \ref{fig:BP100}. First, we focus on the outliers in the values of $q$ such as that corresponding to Data set $38$. Figure \ref{fig:Seed38} shows its MRV chart, true conditional variance surface, and estimated local variance surface for the selected $q$ values and the median $q$ value calculated over the $100$ data sets, $23$. We see that the most prominent variance cluster (i.e. the brightest spot in Plot (b) which lies away from the boundaries) do not translate into the region of highest estimated local variance (i.e. the brightest spot in Plot (d)). This means that our inference method focuses its attention on theoretically less prominent clusters, in this case those at the bottom left corner of Plot (b). Since the estimates for these clusters happen to be concentrated together without much distinction from each other, our inference method eventually groups them together to form one big cluster and selects a larger $q$ value. This leads to a lower $\widehat{B}$ value since the overlaps in the capture window when computing $\hat{\sigma}_{I}^{2}$ determine the amount of correlation the values have.

\clearpage

\begin{figure}[tbp]
\centering
\includegraphics[width =5.8in, height = 3in, trim = 0.1in 0.4in 0.1in 0.1in]{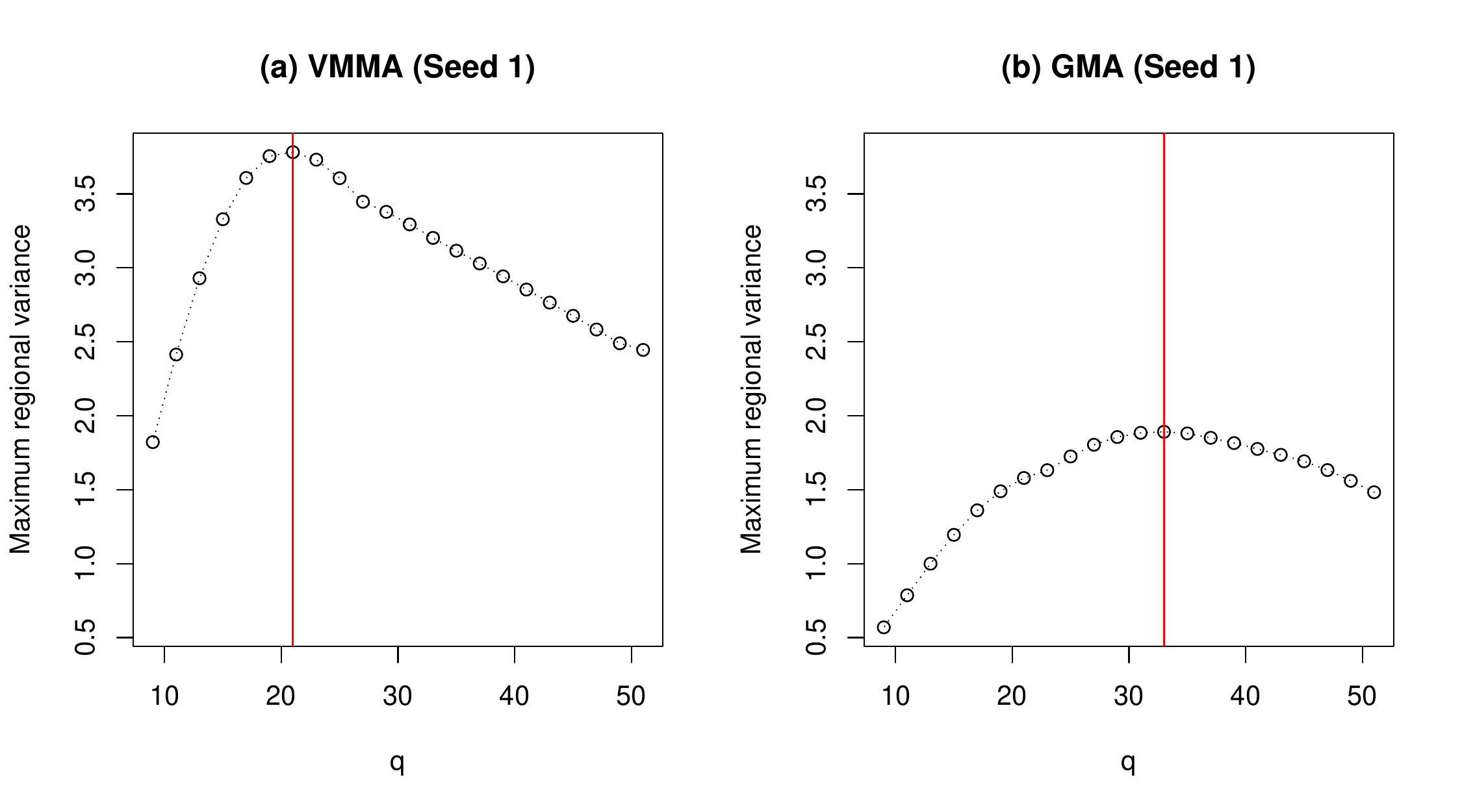}
\caption{MRV as a function of the capture window size $q$ for: (a) the VMMA data set; (b) the GMA data set. The red vertical lines denote the peaks of the MRVs.} 
\label{fig:MrvCompare}
\end{figure}

\begin{figure}[tbp]
\centering
\includegraphics[width =6in, height = 4.1in, trim = 0.5in 0.4in 0.5in 0.4in]{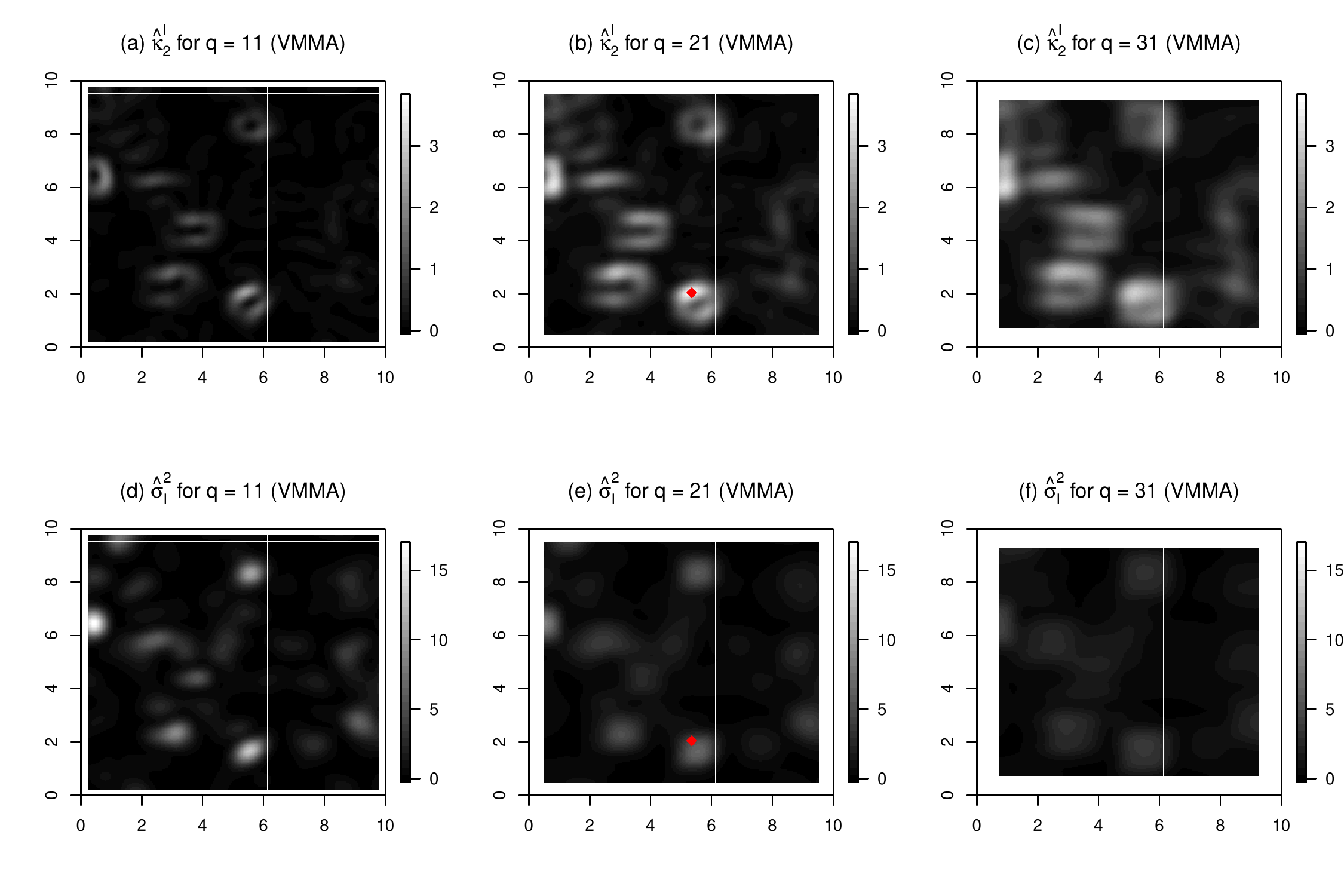}
\caption{Data set 1: The empirical regional variance ($\hat{\kappa}_{2}^{I}$) and the estimated local variance ($\hat{\sigma}^{2}_{I}$) surfaces for $q = 11$, $21$ and $31$. The red diamonds in Plots (b) and (e) denote the point whose regional variance is the MRV for $q = 21$.} 
\label{fig:k2SigmaI}
\end{figure}

\clearpage

\begin{figure}[tbp]
\centering
\includegraphics[width =5.4in, height = 2in, trim = 1in 0.1in 1in 0in]{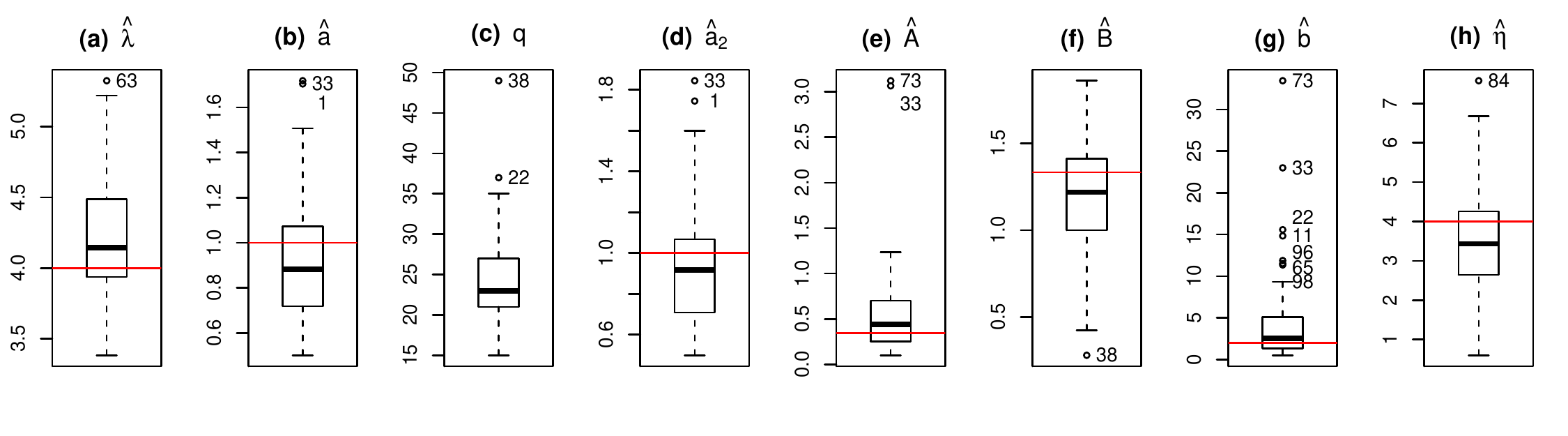}
\caption{Box plots of estimates and selected $q$ values from $100$ VMMA data sets. The red horizontal lines denote the true parameter values and the numbers in the box plots represent the index of the data sets which give rise to the corresponding outliers.} 
\label{fig:BP100}
\end{figure}

\begin{figure}[tbp]
\centering
\includegraphics[width =5in, height = 5.2in, trim = 1in 0.2in 1in 0.2in]{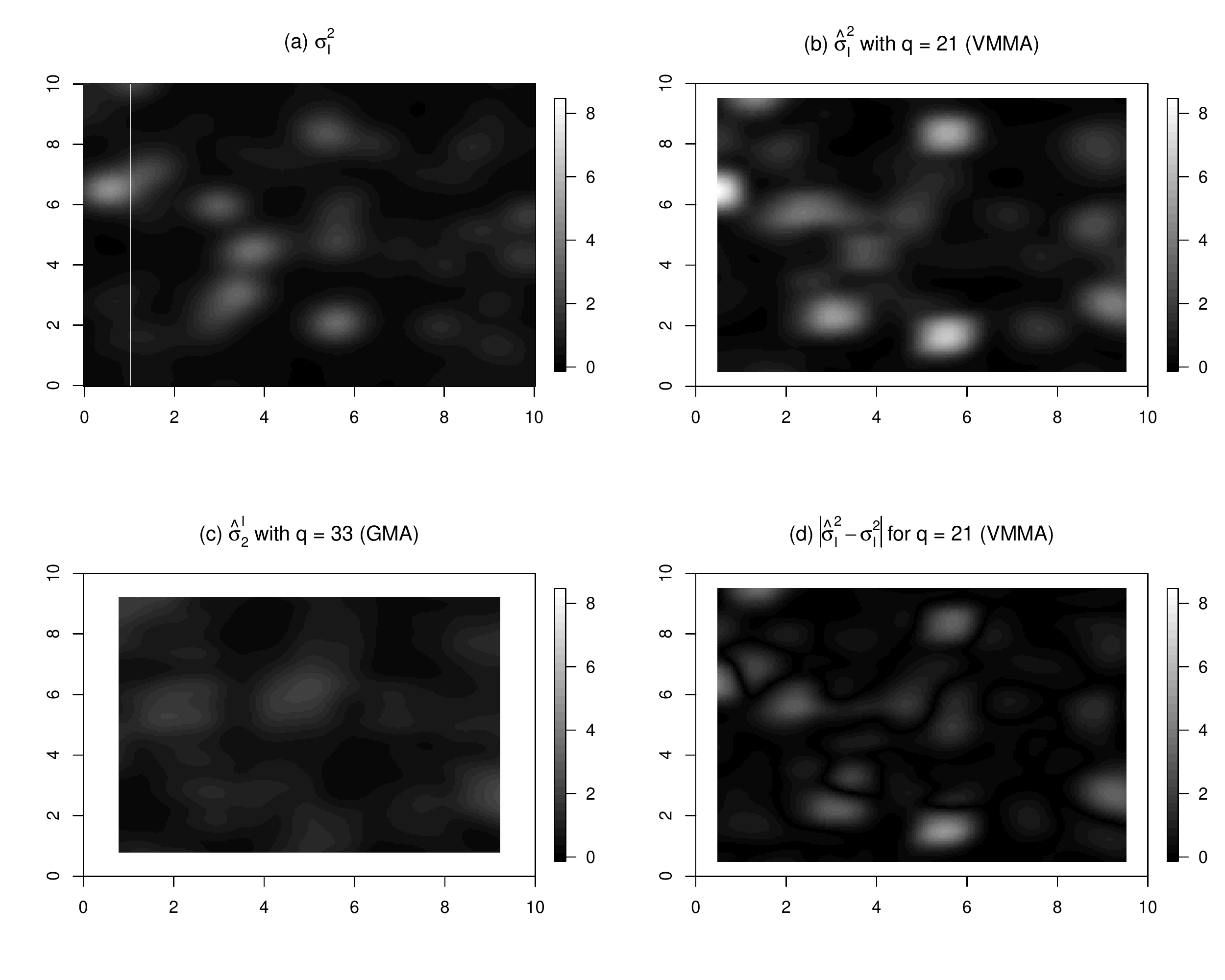}
\caption{Data set 1: (a) The true conditional variance surface ($\hat{\sigma}^{2}_{I}$); (b)-(c) the estimated local variance ($\hat{\sigma}^{2}_{I}$) surfaces for the VMMA data set with $q = 21$ and the GMA data set with $q = 33$ respectively; and (d) the absolute difference between $\hat{\sigma}^{2}_{I}$ and $\hat{\sigma}^{2}_{I}$.} 
\label{fig:hatSIVG}
\end{figure}

\clearpage

\begin{figure}[tbp]
\centering
\includegraphics[width =5in, height = 5.2in, trim = 1in 0.2in 1in 0.2in]{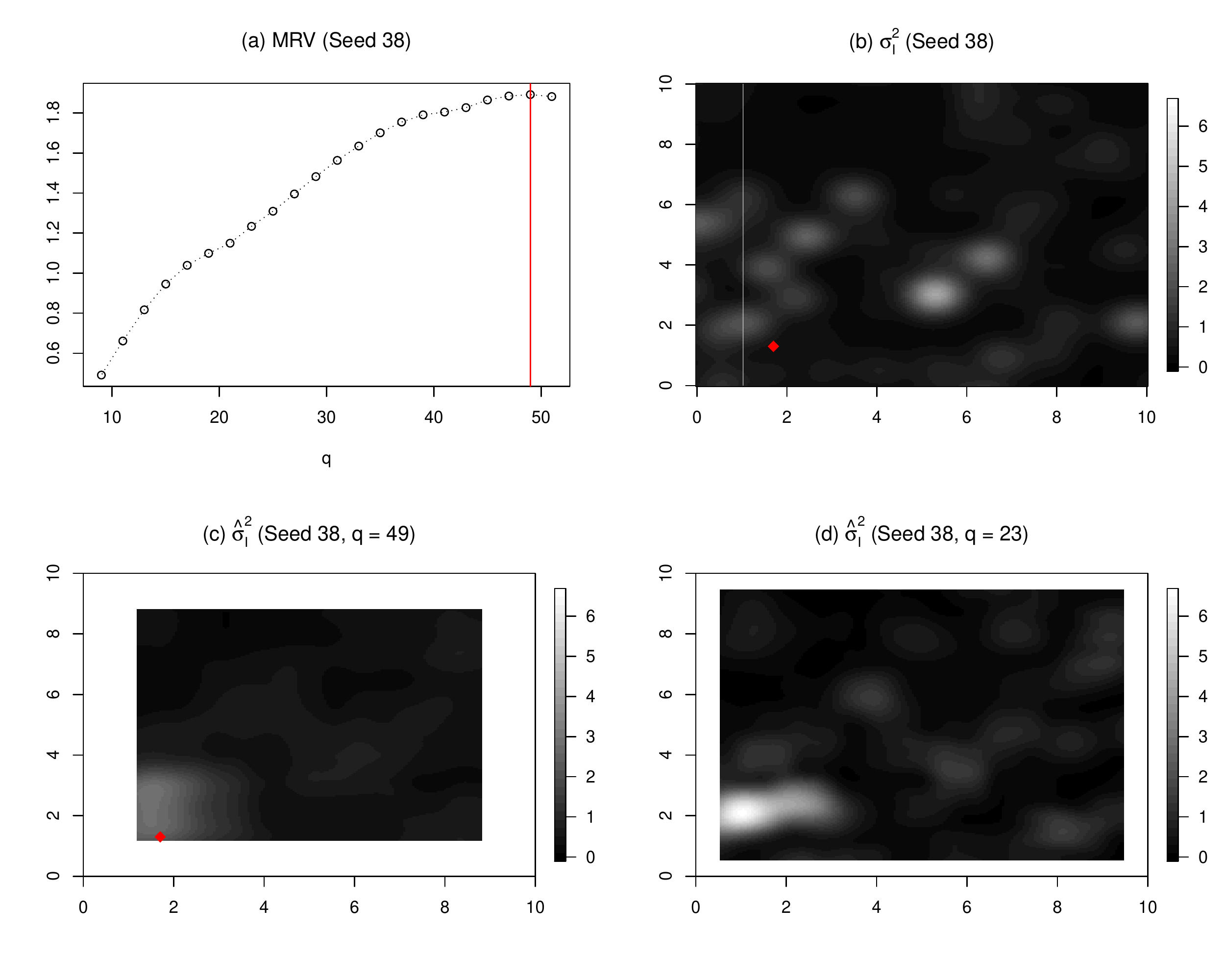}
\caption{VMMA data set 38: (a) The MRV chart for $q$ ranging from $9$ to $51$; (b) the true conditional variance ($\hat{\sigma}^{2}_{I}$) surface; (c)-(d) the estimated local variance ($\hat{\sigma}^{2}_{I}$) surfaces with the selected $q$ value ($49$) and the median $q$ value ($23$).}
\label{fig:Seed38}
\end{figure}

\begin{figure}[tbp]
\centering
\includegraphics[width =5.6in, height = 2.25in, trim = 1in 0.2in 1in 0.2in]{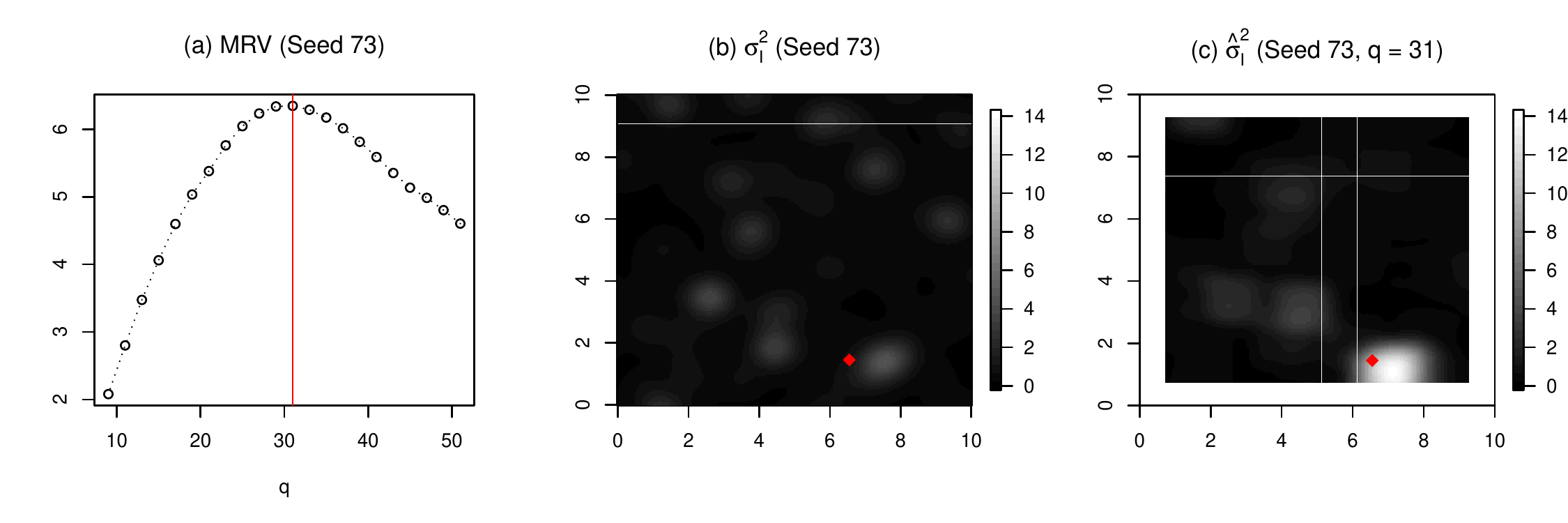}
\caption{VMMA data set 73: (a) The MRV chart for $q$ ranging from $9$ to $51$; (b) the true conditional variance ($\hat{\sigma}^{2}_{I}$) surface; (c) the estimated local variance ($\hat{\sigma}^{2}_{I}$) surface with the selected $q$ value. In Plot (a), the red vertical lines denote the peaks in the MRV chart. The lighter regions in the heat plots of (b) and (c) denote higher values.} 
\label{fig:Seed73}
\end{figure}

\clearpage

\begin{figure}[tbp]
\centering
\includegraphics[width = 3.2in, height = 2.2in, trim = 1in 0.4in 1in 0in]{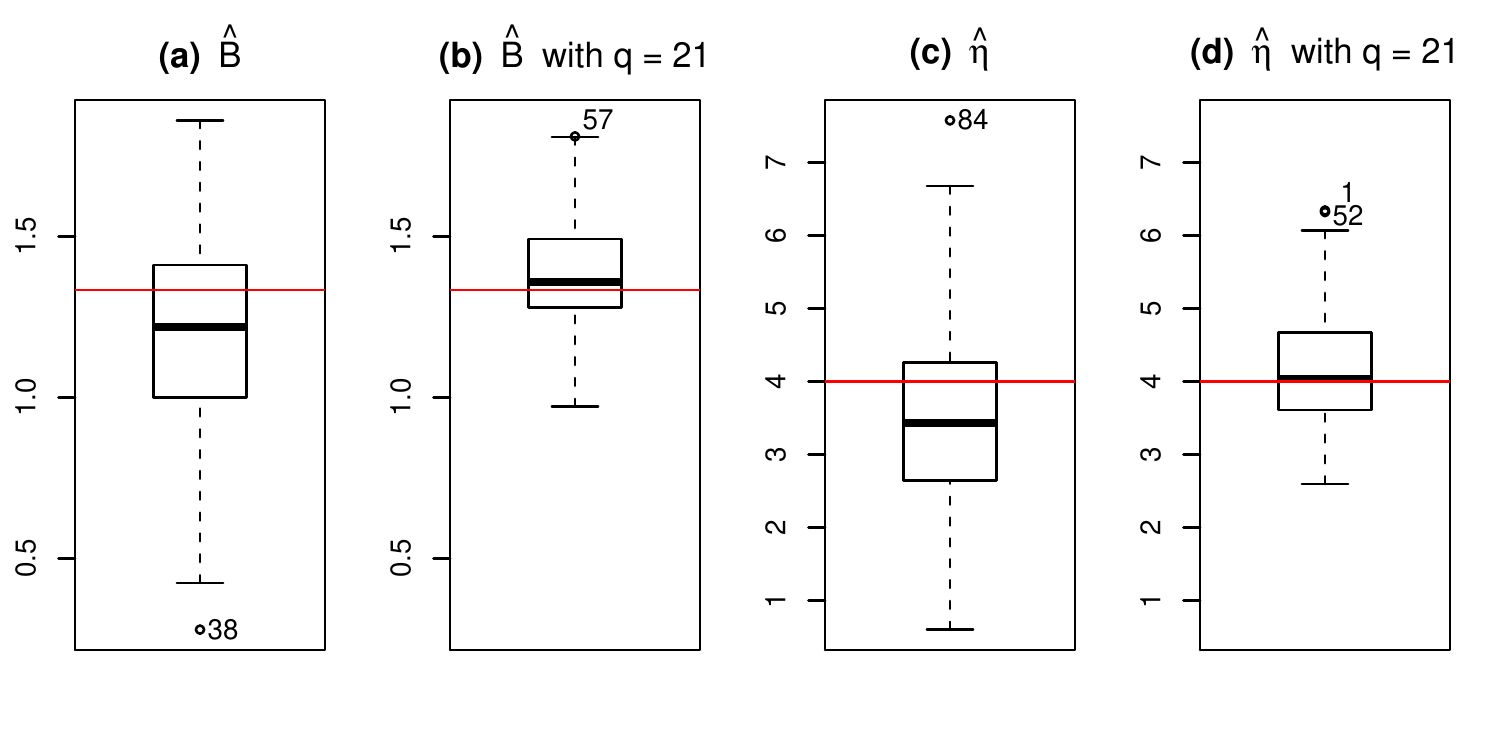}
\caption{Box plots of estimates from $100$ VMMA data sets when we set $q = 21$. The red horizontal lines denote the true parameter values.} 
\label{fig:BPfixq}
\end{figure}

The next interesting outlier corresponds to Data set $73$. In this case, $A$ which is the variance of $\sigma_{I}^{2}$ is overestimated. Looking at the $\hat{\sigma}_{I}^{2}$ surface in Figure \ref{fig:Seed73}(c), we see that this is because one sharp peak in $\sigma_{I}^{2}$ is very distinct from the rest of the values. Due to the overestimation of $A$, the parameter $b$ is also overestimated. 
\\
The two limitations mentioned are related to the multiplicative Gaussian noise in our simulations and further work is required to overcome these. Nevertheless, the potential of the estimation method, given a well-chosen $q$ value, can be seen from the better accuracy and precision in the $b$ and $\eta$ estimates in Figure \ref{fig:BPfixq} when $q = 21$. 

\section{Empirical example} \label{sec:emp}

In the previous section, promising inference results were obtained under strong heteroskedasticity. For Model (\ref{eqn:weg}), this means that the parameters $b$ and $\eta$ are of comparable magnitude to $\lambda$ so that there is sufficient variation in the conditional variance surface to identify the volatility clusters. In general, it is harder to estimate the parameters of $\sigma^{2}$ under low heteroskedasticity. In this section, we show that even in this case, modelling the data by a VMMA instead of a GMA can be beneficial. Specifically, better prediction intervals can be obtained by using the estimated local variance surface. 
\\
We illustrate this with a data set of sea surface temperature anomalies (SSTA) for the week $29^{\text{th}}$ May 2016 to $4^{\text{th}}$ May 2016 \cite[]{website:SSTA}. These are calculated with respect to the 1971-2000 climatology and thus indicate how SST has changed at different spatial locations. The data, which is pictured in Figure \ref{fig:SSTAmp}(a), is given in $^{\circ}$C and lies on a $1^{\circ}$ latitude/longitude grid in the Pacific Ocean between $150.5^{\circ}$E and $234.5^{\circ}$E, and $-69.5^{\circ}$N and $59.5^{\circ}$N.  
\\
Before we start our analysis, we randomly choose $100$ test points away from the boundaries (denoted by the black circles in Figure \ref{fig:SSTAmp}(a)) and remove them from our data. Next, we apply median polishing on the remaining data to obtain a spatial trend. Note that this also gives trend estimates where we had missing values. The median polish algorithm has been used for various data sets in \cite{Cressie1993}. From the estimated median polish surface in Figure \ref{fig:SSTAmp}(b), we see that the trend is more prominent in the direction of the latitude and captures some of the extreme values near $60^{\circ}$N. 
\\
By treating $10^{\circ}$ latitude/longitude as one unit, we fit Model (\ref{eqn:weg}) to the median polish residuals in Figure \ref{fig:SSTAmp}(c). All the averaging required in for example, mean and variance calculations, have been adapted to deal with the missing data. Recall that Model (\ref{eqn:weg}) is given by:
\begin{equation*}
\left.\begin{aligned}
Y(\mathbf{x})&= \int_{\mathbb{R}^{2}}\frac{\lambda}{\pi}\exp\left(-\lambda\left(\mathbf{x}-\bm{\xi}\right)^{T}\left(\mathbf{x}-\bm{\xi}\right)\right)\sigma(\bm{\xi})W(\mathrm{d}\bm{\xi}),\\
\text{where } \sigma^{2}(\bm{\xi})&= \int_{\mathbb{R}^{2}}\frac{\eta}{\pi}\exp\left(-\eta\left(\bm{\xi}-\mathbf{u}\right)^{T}\left(\bm{\xi}-\mathbf{u}\right)\right)L(\mathrm{d}\mathbf{u}).
\end{aligned}
\right\}
\qquad
\end{equation*}

\clearpage

\begin{figure}[tbp]
\centering
\includegraphics[width =5.8in, height = 2in, trim = 1in 0.2in 1in 0.3in]{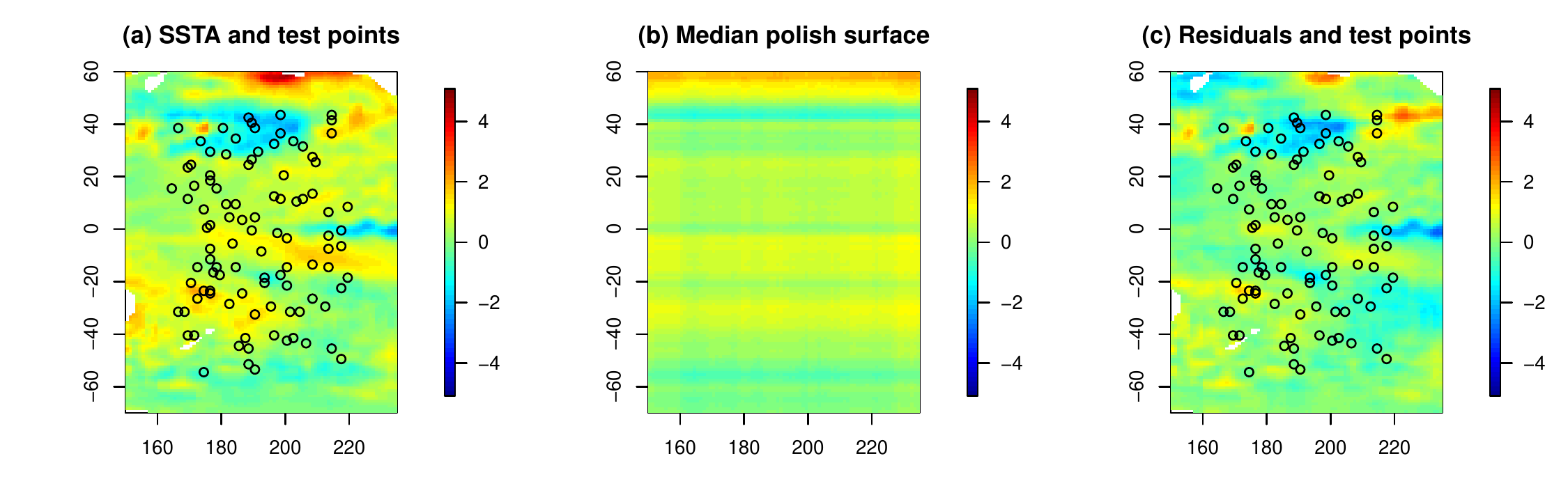}
\caption{(a) The SSTA data set (in $^{\circ}$C) where the black circles mark the $100$ test points. The white regions denote missing data due to land mass; (b) the fitted median polish surface; and (c) the median polish residuals.}
\label{fig:SSTAmp}
\end{figure}

\begin{figure}[tbp]
\centering
\includegraphics[width =3.6in, height = 4.2in, trim = 1in 0.2in 1in 0.2in]{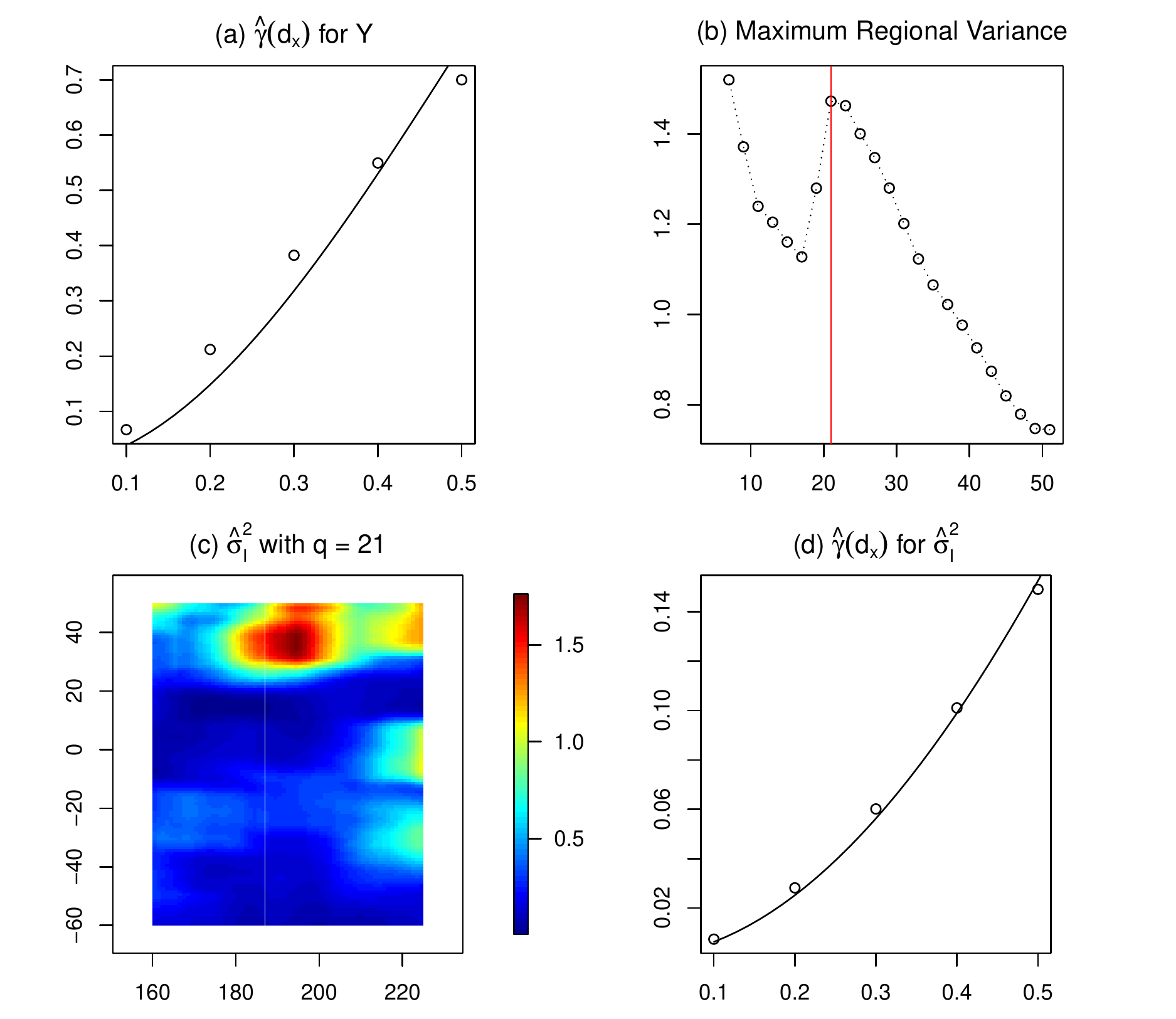}
\caption{(a) The least-squares fit of the normalised variogram of $Y$; (b) the MRV plot; (c) the estimated local variance ($\hat{\sigma}^{2}_{I}$) surface with the selected $q$ value ($21$); and (d) the least-squares fit of the normalised variogram of $\hat{\sigma}^{2}_{I}$.}
\label{fig:Fitplots}
\end{figure}

\clearpage
Here, the rate parameters $\lambda, \eta > 0$, $W$ is a homogeneous standard Gaussian basis independent of the stochastic volatility $\sigma(\bm{\xi})$ and $L$ is a homogeneous subordinator basis with the mean and variance of its seed being $a$ and $b>0$. In this case, it seems reasonable to use Gaussian kernels because of tractable covariances and since they can be inferred from physical ocean dynamics \cite[]{Higdon1998, Barnes1964}. 
\\
In our simulation experiments in Section \ref{sec:est}, we used the first spatial lag to fit the normalised variograms. For empirical data, however, better fits to to the empirical normalised variograms can be obtained by fitting a curve using least-squares. Figure \ref{fig:Fitplots}(a) shows the least squares fit for the estimated normalised variogram of the median polish residuals (i.e. $Y$) when we use $5$ spatial lags. The consistency of the resultant estimators can be derived by applying an edited version of Theorem 3.1 on page 70 of \cite{LLC2002} to our original proofs. 
\\
Based on the MRV plot in Figure \ref{fig:Fitplots}(b), we use $q = 21$ in the moving window step of the inference. The presence of a peak as well as the eventual decrease to the global variance in the MRV plot support the use of the VMMA for this data set. From the estimated local variance surface in Figure \ref{fig:Fitplots}(c), we identify a prominent volatility cluster near $40^{\circ}$N/$190^{\circ}$E. This corresponds to the region in Figure \ref{fig:SSTAmp}(a) where we observe adjacent high and low anomalies.
\\
With the estimated local variance surface, we continue our inference method. The least squares fit for the estimated normalised variogram of $\hat{\sigma}^{2}_{I}$ is shown in Figure \ref{fig:Fitplots}(d). This is much better than the variogram fit for $Y$. The final parameter estimates obtained are $\hat{\lambda} = 3.848, \hat{a} = 0.676, \hat{a}_{2} = 0.636, \hat{b} = 3.499 \text{ and } \hat{\eta} = 0.691$. 
\\
Despite the low value for $\hat{\eta}$ which implies large, diffuse volatility clusters, there is still some advantage of modelling the median polish residuals by a VMMA instead of a GMA. Using the estimated local variance surface and the median polish surface, we can construct $95\%$ confidence intervals for the SSTA values at our test points since we know that $Y(\mathbf{x})|\mathcal{F}^{\sigma} \sim N(0, \sigma^{2}_{I}(\mathbf{x}))$. At the same time, we can compute $95\%$ confidence intervals from the corresponding GMA model for which $\sigma^{2}_{I}(\mathbf{x}) = \hat{a}\hat{\lambda}/2\pi$ for all test points. Although the VMMA gives more narrow intervals when volatility is low (and wider intervals when volatility is high), we find that $93/100$ test points lie within the confidence intervals constructed by the VMMA while only $89/100$ of them lie within those constructed by the GMA. The benefit of modelling with a VMMA is expected to be enhanced under stronger heteroskedastic behaviour.

\section{Conclusion and further work} \label{sec:confurther}

In this paper, we focused on volatility modulated moving averages (VMMAs) and their ability to exhibit spatial heteroskedasticity. These processes extend the definition of a Gaussian moving average or process convolution by introducing a stationary stochastic volatility field in the integral. 
\\
After providing a summary of how such an integral is constructed in Section \ref{sec:btheory}, we derived several distributional properties of a VMMA in Section \ref{sec:tprop}. These were used to develop a two-step moments-matching estimation procedure in  Section \ref{sec:est}. In the first step, we relied on the stationarity of a VMMA together with its second-order properties; in the second step, we examined the conditional non-stationarity in the form of local variances and used the second-order properties of the conditional variance process. Consistency of the resulting estimators can be proved under suitable double asymptotics. 
\\
In Section \ref{sec:sim}, we developed a discrete convolution simulation algorithm for VMMAs and provided semi-explicit formulas for the mean squared error. We also derived an explicit formula for an upper bound which proved to be more useful in practice for deriving orders of convergence. Through experiments with simulated data, we tested our inference procedure and found that promising results were achieved under strong heteroskedasticity. As expected, the outlier analysis also revealed a small degree of sensitivity to the realisation of the multiplicative Gaussian noise in our VMMA.
\\
The application to sea surface temperature anomaly data in Section \ref{sec:emp} illustrates the benefits of using a VMMA instead of a GMA when spatial heteroskedasticity is present. Better prediction for missing values can be achieved through the estimated trend and the estimated local variances. This empirical example also shines light on the many ways one can adapt and improve the two-step moments-matching estimation method. Using least-squares in the fitting of the normalised variograms is one such extension. To further improve the variogram fits, generalising the method to the popular Mat\'ern kernel as well as anisotropic kernels will be helpful. One may also consider experimenting with different shapes of the moving window to capture that of the anisotropic volatility clusters. Incorporating kernel averaging in the moving window approach, such as that done for the local variogram estimation in \cite{FDR2016}, could also help to reduce the occurrence of overlapping clusters.
\\
So far, only point estimates of our parameters are considered and the uncertainty in these is not reflected in our prediction intervals. A key direction for further work is building a Bayesian framework around our moments-matching method to provide credible intervals. In this case, the moments-matching estimates could be useful as starting values or to motivate prior distributions. Composite likelihoods could also be used to reduce computational effort.

\section*{Appendix:}

\begin{proof}[Proof of Example \ref{eg:2oCov}]

For Model (\ref{eqn:weg}), we have $\Cov\left(\sigma^{2}(\bm{\xi}), \sigma^{2}(\bm{\xi}^{*})\right) =\frac{b\eta}{2\pi}\exp\left(-\frac{\eta\left(\bm{\xi}-\bm{\xi}^{*}\right)^{T}\left(\bm{\xi}-\bm{\xi}^{*}\right)}{2}\right)$, where $\Var\left(L'\right) = b$. So, with $\mathbf{w} = \mathbf{x}-\bm{\xi}$, $\mathbf{w}^{*} = \mathbf{x}^{*}-\bm{\xi}^{*}$ and $\mathbf{u} = \mathbf{x} - \mathbf{x}^{*}$:
\begin{align*}
\Cov(\sigma^{2}_{I}(\mathbf{x}), \sigma^{2}_{I}(\mathbf{x}^{*})) &= \int_{\mathbb{R}^{2}} \int_{\mathbb{R}^{2}} g^{2}(\mathbf{x}-\bm{\xi})g^{2}(\mathbf{x}^{*}-\bm{\xi}^{*})\Cov\left(\sigma^{2}(\bm{\xi}), \sigma^{2}(\bm{\xi}^{*})\right)\mathrm{d}\bm{\xi}\mathrm{d}\mathbf{s^{*}} \\
&= \frac{b\lambda^{4}\eta}{2\pi^{5}}\int_{\mathbb{R}^{2}} \int_{\mathbb{R}^{2}}\exp\left(-2\lambda\left[\mathbf{w}^{T}\mathbf{w}+\mathbf{w}^{*T}\mathbf{w}^{*}\right]-\frac{\eta\left(\mathbf{w}^{*}-\mathbf{w} + \mathbf{u}\right)^{T}\left(\mathbf{w}^{*}-\mathbf{w} + \mathbf{u}\right)}{2}\right) \mathrm{d}\mathbf{w}\mathrm{d}\mathbf{w}^{*}. 
\end{align*}
The exponent of the integrand is equal to:
\begin{equation*}
 -2\lambda(w_{1}^{2} + w_{1}^{*2}) - \frac{\eta}{2}\left[\left(w_{1}^{*} - w_{1}\right)^{2} + 2u_{1}\left(w_{1}^{*} - w_{1}\right)\right] -2\lambda(w_{2}^{2} + w_{2}^{*2}) - \frac{\eta}{2}\left[\left(w_{2}^{*} - w_{2}\right)^{2} + 2u_{2}\left(w_{2}^{*} - w_{2}\right)\right] -\frac{\eta}{2}\left(u_{1}^{2}+ u_{2}^{2}\right).
\end{equation*}
Focusing on the terms in $w_{1}$ and $w_{1}^{*}$:
\begin{align*}
&\int_{\mathbb{R}} \int_{\mathbb{R}}\exp\left(-2\lambda(w_{1}^{2} + w_{1}^{*2}) - \frac{\eta}{2}\left[\left(w_{1}^{*} - w_{1}\right)^{2} + 2u_{1}\left(w_{1}^{*} - w_{1}\right)\right]\right) \mathrm{d}w_{1}\mathrm{d}w_{1}^{*} \\
&= \int_{\mathbb{R}} \exp\left(-\left[2\lambda + \frac{\eta}{2}\right] w_{1}^{*2} -\eta u_{1} w_{1}^{*} + \frac{\left(\eta \left[u_{1} + w_{1}^{*}\right]\right)^{2}}{4\left(2\lambda + \frac{\eta}{2}\right)}\right) \int_{\mathbb{R}}\exp\left(-\left[2\lambda + \frac{\eta}{2}\right]\left[w_{1} -\frac{\eta \left[u_{1} + w_{1}^{*}\right]}{2\left(2\lambda + \frac{\eta}{2}\right)}\right]^{2}\right) \mathrm{d}w_{1} \mathrm{d}w_{1}^{*}\\
&= \sqrt{\frac{\pi}{2\lambda + \frac{\eta}{2}}} \int_{\mathbb{R}} \exp\left(-\left[2\lambda + \frac{\eta}{2}\right] w_{1}^{*2} -\eta u_{1} w_{1}^{*} + \frac{\left(\eta \left[u_{1} + w_{1}^{*}\right]\right)^{2}}{4\left(2\lambda + \frac{\eta}{2}\right)}\right) \mathrm{d}w_{1}^{*} \\
&= \sqrt{\frac{\pi}{2\lambda + \frac{\eta}{2}}} \exp\left(\frac{\eta^{2}u_{1}^{2}}{4\left(2\lambda + \frac{\eta}{2}\right)} + \frac{\lambda\left(4\lambda +2\eta\right)}{2\lambda + \frac{\eta}{2}} \left[\frac{\eta u_{1}}{4\lambda+2\eta}\right]^{2}\right) \int_{\mathbb{R}} \exp\left(-\frac{\lambda\left(4\lambda + 2\eta\right)}{2\lambda + \frac{\eta}{2}} \left[w_{1}^{*} - \frac{\eta u_{1}}{4\lambda+2\eta}\right]^{2}\right) \mathrm{d}w_{1}^{*} \\
&= \frac{\pi}{\sqrt{\lambda\left(4\lambda + 2\eta\right)}} \exp\left(\frac{\eta^{2}u_{1}^{2}}{4\lambda + 2\eta}\right).
\end{align*}
As the terms in $w_{2}$ and $w_{2}^{*}$ follow a similar structure, we have:
\begin{align}
\Cov(\sigma^{2}_{I}(\mathbf{x}), \sigma^{2}_{I}(\mathbf{x}^{*})) &= \frac{b\lambda^{3}\eta}{4\pi^{3}(2\lambda + \eta)}\exp\left(\frac{-\lambda\eta}{2\lambda + \eta} \left(\mathbf{x} - \mathbf{x}^{*}\right)^{T}\left(\mathbf{x} - \mathbf{x}^{*}\right)\right). \label{eqn:VCov}
\end{align}
On the other hand:
\begin{align*}
\mathbb{E}\left[\left(\int_{\mathbb{R}^{2}} g(\mathbf{x}-\bm{\xi})g(\mathbf{x}^{*}-\bm{\xi})\sigma^{2}(\bm{\xi})\mathrm{d}\bm{\xi}\right)^{2}\right] &=\int_{\mathbb{R}^{2}}\int_{\mathbb{R}^{2}} g(\mathbf{x}-\bm{\xi})g(\mathbf{x}^{*}-\bm{\xi})g(\mathbf{x}-\bm{\xi}^{*})g(\mathbf{x}^{*}-\bm{\xi}^{*})\mathbb{E}\left[\sigma^{2}(\bm{\xi})\sigma^{2}(\bm{\xi}^{*})\right]\mathrm{d}\bm{\xi}\mathrm{d}\bm{\xi}^{*} \\
&= \frac{\lambda^{4}}{\pi^{4}} \left(\frac{b\eta A'}{2\pi} + a^{2}B'\right), 
\end{align*}
where $A' = \int_{\mathbb{R}^{2}}\int_{\mathbb{R}^{2}} e^{-\lambda\left[\mathbf{w}^{T}\mathbf{w} + \mathbf{w}^{*T}\mathbf{w*} + \left(\mathbf{w}^{*} + \mathbf{u}\right)^{T}\left(\mathbf{w}^{*} + \mathbf{u}\right) + \left(\mathbf{w} - \mathbf{u}\right)^{T}\left(\mathbf{w} - \mathbf{u}\right)\right]-\frac{\eta}{2}\left(\mathbf{w}^{*} - \mathbf{w} + \mathbf{u}\right)^{T}\left(\mathbf{w}^{*} - \mathbf{w} + \mathbf{u}\right)}\mathrm{d}\mathbf{w}\mathrm{d}\mathbf{w}^{*}$, and \\
$B' = \int_{\mathbb{R}^{2}}\int_{\mathbb{R}^{2}} e^{-\lambda\left[\mathbf{w}^{T}\mathbf{w} + \mathbf{w}^{*T}\mathbf{w*} + \left(\mathbf{w}^{*} + \mathbf{u}\right)^{T}\left(\mathbf{w}^{*} + \mathbf{u}\right) + \left(\mathbf{w} - \mathbf{u}\right)^{T}\left(\mathbf{w} - \mathbf{u}\right)\right]}\mathrm{d}\mathbf{w}\mathrm{d}\mathbf{w}^{*}$.
\\
The exponent of the $A'$'s integrand is equal to:
\begin{align*}
&-2\lambda\left[w_{1}^{2} + w_{1}^{*2} + u_{1}w_{1}^{*} - u_{1}w_{1}\right] - \frac{\eta}{2}\left[\left(w_{1}^{*} - w_{1}\right)^{2} + 2 u_{1}\left(w_{1}^{*} - w_{1}\right)\right] -\left[2\lambda + \frac{\eta}{2}\right]u_{1}^{2}\\
& -2\lambda\left[w_{2}^{2} + w_{2}^{*2} + u_{2}w_{2}^{*} - u_{2}w_{2}\right] - \frac{\eta}{2}\left[\left(w_{2}^{*} - w_{2}\right)^{2} + 2 u_{2}\left(w_{2}^{*} - w_{2}\right)\right] -\left[2\lambda + \frac{\eta}{2}\right]u_{2}^{2}.
\end{align*}
Focusing on the terms in $w_{1}$ and $w_{1}^{*}$:
\begin{align*}
&\int_{\mathbb{R}}\int_{\mathbb{R}} \exp\left(-2\lambda\left[w_{1}^{2} + w_{1}^{*2} + u_{1}w_{1}^{*} - u_{1}w_{1}\right] - \frac{\eta}{2}\left[\left(w_{1}^{*} - w_{1}\right)^{2} + 2 u_{1}\left(w_{1}^{*} - w_{1}\right)\right]\right) \mathrm{d}w_{1}\mathrm{d}w_{1}^{*} \\
&= \int_{\mathbb{R}}\exp\left(-\left[2\lambda + \frac{\eta}{2}\right] w_{1}^{*2} -\left[2\lambda + \eta\right] u_{1}w_{1}^{*}\right)\int_{\mathbb{R}} \exp\left(-\left[2\lambda + \frac{\eta}{2}\right] w_{1}^{2}+\left[\left(2\lambda + \eta\right)u_{1} + \eta w_{1}^{*}\right]w_{1} \right) \mathrm{d}w_{1}\mathrm{d}w_{1}^{*} \\
&= \sqrt{\frac{\pi}{2\lambda + \frac{\eta}{2}}}\exp\left(\frac{\left(2\lambda + \eta\right)u_{1}^{2}}{2}\right)\int_{\mathbb{R}}\exp\left(- \frac{4\lambda^{2} + 2\lambda\eta}{2\lambda + \frac{\eta}{2}} \left[w_{1} - \frac{ u_{1}}{2}\right]^{2}\right) \mathrm{d}w_{1}^{*} \\
&= \frac{\pi}{\sqrt{4\lambda^{2}+ 2\lambda\eta}}\exp\left(\frac{\left(2\lambda + \eta\right)u_{1}^{2}}{2}\right).
\end{align*}
Since the terms in $w_{2}$ and $w_{2}^{*}$ follow the same form:
\begin{equation*}
A' = \frac{\pi^{2}}{4\lambda^{2}+ 2\lambda\eta}\exp\left(\left[\frac{\left(2\lambda + \eta\right)}{2}-\left(2\lambda + \frac{\eta}{2}\right)\right]\left(\mathbf{x}-\mathbf{x}^{*}\right)^{T}\left(\mathbf{x}-\mathbf{x}^{*}\right)\right) =\frac{\pi^{2}}{4\lambda^{2}+ 2\lambda\eta}\exp\left(-\lambda\left(\mathbf{x}-\mathbf{x}^{*}\right)^{T}\left(\mathbf{x}-\mathbf{x}^{*}\right)\right) .
\end{equation*}
The exponent of the $B'$'s integrand is equal to:
\begin{equation*}
-2\lambda\left[\left(w_{1}^{2} - u_{1}w_{1}\right) + \left(w_{1}^{*2} + u_{1}w_{1}^{*}\right) +\left(w_{2}^{2} - u_{2}w_{2}\right) + \left(w_{2}^{*2} + u_{2}w_{2}^{*}\right) + \left[u_{1}^{2}+ u_{2}^{2}\right]\right].
\end{equation*}
Focusing on the terms in $w_{1}$:
\begin{equation*}
\int_{\mathbb{R}} \exp\left(-2\lambda\left(w_{1}^{2} - u_{1}w_{1}\right) \right) \mathrm{d}w_{1} = \exp\left(\frac{\lambda u_{1}^{2}}{2}\right)\int_{\mathbb{R}} \exp\left(-2\lambda\left(w_{1} - \frac{u_{1}}{2}\right)^{2} \right) \mathrm{d}w_{1} = \sqrt{\frac{\pi}{2\lambda}}\exp\left(\frac{\lambda u_{1}^{2}}{2}\right).
\end{equation*}
Similarly, $\int_{\mathbb{R}} \exp\left(-2\lambda\left(w_{1}^{*2} + u_{1}w_{1}^{*}\right)\right) = \sqrt{\frac{\pi}{2\lambda}}\exp\left(\frac{\lambda u_{1}^{2}}{2}\right)$. Thus, $B' = \frac{\pi^{2}}{4\lambda^{2}}\exp\left(-\lambda \left(\mathbf{x} - \mathbf{x}^{*}\right)^{T}\left(\mathbf{x} - \mathbf{x}^{*}\right)\right)$.
\\
Since $\Cov(Y^{2}(\mathbf{x}), Y^{2}(\mathbf{x}^{*})) = \frac{2\lambda^{4}}{\pi^{4}} \left(\frac{b\eta A'}{2\pi} + a^{2}B'\right) + \Cov(\sigma^{2}_{I}(\mathbf{x}), \sigma^{2}_{I}(\mathbf{x}^{*}))$, we obtain the required result. 
\end{proof}

\begin{proof}[Proof of Theorem \ref{thm:mse}]
\begin{align*}
\mathbb{E}\left[|Y(\mathbf{x}) - Z(\mathbf{x})|^{2}\right] &= \mathbb{E}\left[\left|\int_{\mathbb{R}^{2}}g(\mathbf{x}-\bm{\xi})\sigma(\bm{\xi})W(\mathrm{d}\bm{\xi}) - \int_{\mathbb{R}^{2}}g_{\triangle}(\mathbf{x}, \bm{\xi})\sigma_{\triangle}(\bm{\xi})W(\mathrm{d}\bm{\xi})\right|^{2}\right] \\
&= \mathbb{E}\left[\left|\int_{\mathbb{R}^{2}}g(\mathbf{x}-\bm{\xi})\left(\sigma(\bm{\xi}) -  \sigma_{\triangle}(\bm{\xi})\right)W(\mathrm{d}\bm{\xi}) + \int_{\mathbb{R}^{2}}\left(g(\mathbf{x} - \bm{\xi})- g_{\triangle}(\mathbf{x}, \bm{\xi})\right)\sigma_{\triangle}(\bm{\xi}) W(\mathrm{d}\bm{\xi})\right|^{2}\right] \\
&= T1 + T2 + T3,
\end{align*}
where $T1 := \int_{\mathbb{R}^{2}}g^{2}(\mathbf{x}-\bm{\xi})\mathbb{E}\left[\left(\sigma(\bm{\xi}) -  \sigma_{\triangle}(\bm{\xi})\right)^{2}\right]\mathrm{d}\bm{\xi}$, $T2 :=  \int_{\mathbb{R}^{2}}\left(g(\mathbf{x} - \bm{\xi})- g_{\triangle}(\mathbf{x}, \bm{\xi})\right)^{2}\mathbb{E}\left[\sigma^{2}_{\triangle}(\bm{\xi})\right] \mathrm{d}\bm{\xi}$ and $T3:= 2\int_{\mathbb{R}^{2}}g(\mathbf{x}-\bm{\xi})\left(g(\mathbf{x} - \bm{\xi})- g_{\triangle}(\mathbf{x}, \bm{\xi})\right)\mathbb{E}\left[\sigma_{\triangle}(\bm{\xi})\left(\sigma(\bm{\xi}) -  \sigma_{\triangle}(\bm{\xi})\right)\right]\mathrm{d}\bm{\xi}$. In the calculations, the third equality follows from the independence of $\sigma^{2}$ and $W$ as well as the fact that $W$ is a homogeneous standard Gaussian basis. We simplify the three terms separately:
\begin{align*}
T1 &= \int_{\mathbb{R}^{2}}g^{2}(\mathbf{x}-\bm{\xi})\mathbb{E}\left[\sigma^{2}(\bm{\xi})\right]\mathrm{d}\bm{\xi}  +  \int_{\mathbb{R}^{2}}g^{2}(\mathbf{x}-\bm{\xi})\mathbb{E}\left[\sigma^{2}_{\triangle}(\bm{\xi})\right]\mathrm{d}\bm{\xi} - 2\int_{\mathbb{R}^{2}}g^{2}(\mathbf{x}-\bm{\xi})\mathbb{E}\left[\sigma(\bm{\xi})\sigma_{\triangle}(\bm{\xi})\right]\mathrm{d}\bm{\xi} \\
&=a\left(\int_{\mathbb{R}^{2}}g^{2}(\mathbf{x}-\bm{\xi})\mathrm{d}\bm{\xi}\right)\left(\int_{\mathbb{R}^{2}}h(\bm{\xi} - \mathbf{u})\mathrm{d}\mathbf{u} + \sum_{i, j = -\tilde{p}}^{\tilde{p}} h\left(i\triangle, j \triangle\right) \triangle^{2}\right) - 2\int_{\mathbb{R}^{2}}g^{2}(\mathbf{x}-\bm{\xi})\mathbb{E}\left[\sigma(\bm{\xi})\sigma_{\triangle}(\bm{\xi})\right]\mathrm{d}\bm{\xi}
\end{align*}
since $\mathbb{E}\left[\sigma^{2}(\bm{\xi})\right] =  \int_{\mathbb{R}^{2}}h(\bm{\xi} - \mathbf{u})\mathbb{E}\left[L(\mathrm{d}\mathbf{u}\right] = a\int_{\mathbb{R}^{2}}h(\bm{\xi} - \mathbf{u})\mathrm{d}\mathbf{u}$, and:
\begin{equation*}
\mathbb{E}\left[\sigma_{\triangle}^{2}(\bm{\xi})\right]= \int_{\mathbb{R}^{2}}h_{\triangle}(\bm{\xi}, \mathbf{u}) \mathbb{E}\left[L(\mathrm{d}\mathbf{u}) \right] =  a \sum_{i', j' = -\tilde{p}}^{\tilde{p}} h\left(i'\triangle, j' \triangle\right) \triangle^{2}.
\end{equation*}
Next, we focus on $T2$: 
\begin{align*}
T2 &= \left(a \sum_{i', j' = -\tilde{p}}^{\tilde{p}} h\left(i'\triangle, j' \triangle\right) \triangle^{2}\right) \left[\int_{\mathbb{R}^{2}}g^{2}(\mathbf{x} - \bm{\xi})\mathrm{d}\bm{\xi}  + \int_{\mathbb{R}^{2}}g_{\triangle}^{2}(\mathbf{x}, \bm{\xi}) \mathrm{d}\bm{\xi} - 2\int_{\mathbb{R}^{2}}g(\mathbf{x} - \bm{\xi})g_{\triangle}(\mathbf{x}, \bm{\xi}) \mathrm{d}\bm{\xi} \right]\\
&= \left[\left(\int_{\mathbb{R}^{2}}g^{2}(\mathbf{x} - \bm{\xi})\mathrm{d}\bm{\xi} -  \sum_{i,j = -p}^{p} g^{2}\left(i\triangle, j \triangle\right)\triangle^{2}\right) \right. \\
&\left.+ 2\sum_{i,j = -p}^{p} g\left(i\triangle, j \triangle\right)\left( g\left(i\triangle, j \triangle\right)\triangle^{2} - \int_{i\triangle- \frac{\triangle}{2}}^{i\triangle + \frac{\triangle}{2}} \int_{j\triangle- \frac{\triangle}{2}}^{j\triangle + \frac{\triangle}{2}} g(\mathbf{w})  \mathrm{d}\mathbf{w}\right)\right]\times \left(a \sum_{i', j' = -\tilde{p}}^{\tilde{p}} h\left(i'\triangle, j' \triangle\right) \triangle^{2}\right) ,
\end{align*}
by letting $ \mathbf{w} = \bm{\xi} - \mathbf{x}$ and since:
\begin{align*}
\int_{\mathbb{R}^{2}} g_{\triangle}^{2}(\mathbf{x}, \bm{\xi})\mathrm{d}\bm{\xi} &= \sum_{i,j = -p}^{p}\sum_{i',j' = -p}^{p}\int_{\mathbb{R}^{2}}\mathbf{1}_{\left[x_{1} + i\triangle- \frac{\triangle}{2}, x_{1} + i\triangle + \frac{\triangle}{2}\right)}(s_{1}) \mathbf{1}_{\left[x_{2} + j\triangle- \frac{\triangle}{2}, x_{2} + j\triangle + \frac{\triangle}{2}\right)}(s_{2}) \\
&\mathbf{1}_{\left[x_{1} + i'\triangle- \frac{\triangle}{2}, x_{1} + i'\triangle + \frac{\triangle}{2}\right)}(s_{1})\mathbf{1}_{\left[x_{2} + j'\triangle- \frac{\triangle}{2}, x_{2} + j'\triangle + \frac{\triangle}{2}\right)}(s_{2}) g\left(i\triangle, j \triangle\right)g\left(i'\triangle, j' \triangle\right)\mathrm{d}\bm{\xi} \\ 
&= \sum_{i,j = -p}^{p} g^{2}\left(i\triangle, j \triangle\right)\int_{\mathbb{R}^{2}}\mathbf{1}_{\left[x_{1} + i\triangle- \frac{\triangle}{2}, x_{1} + i\triangle + \frac{\triangle}{2}\right)}(s_{1}) \mathbf{1}_{\left[x_{2} + j\triangle- \frac{\triangle}{2}, x_{2} + j\triangle + \frac{\triangle}{2}\right)}(s_{2})\mathrm{d}\bm{\xi}\\ 
&= \sum_{i,j = -p}^{p} g^{2}\left(i\triangle, j \triangle\right)\triangle^{2}, \\
\text{and } \int_{\mathbb{R}^{2}}g(\mathbf{x} - \bm{\xi})g_{\triangle}(\mathbf{x}, \bm{\xi}) \mathrm{d}\bm{\xi} &= \int_{\mathbb{R}^{2}}g(\mathbf{x} - \bm{\xi})\sum_{i,j = -p}^{p} \mathbf{1}_{\left[x_{1} + i\triangle- \frac{\triangle}{2}, x_{1} + i\triangle + \frac{\triangle}{2}\right)}(s_{1}) \mathbf{1}_{\left[x_{2} + j\triangle- \frac{\triangle}{2}, x_{2} + j\triangle + \frac{\triangle}{2}\right)}(s_{2}) \\
&\times g\left(i\triangle, j \triangle\right) \mathrm{d}\bm{\xi} \\
&=  \sum_{i,j = -p}^{p} g\left(i\triangle, j \triangle\right) \int_{i\triangle- \frac{\triangle}{2}}^{i\triangle + \frac{\triangle}{2}} \int_{j\triangle- \frac{\triangle}{2}}^{j\triangle + \frac{\triangle}{2}} g(\mathbf{w})  \mathrm{d}w_{1}\mathrm{d}w_{2}. 
\end{align*}
Finally, we look at $T3$:
\begin{align*}
T3 &= 2\int_{\mathbb{R}^{2}}\left[g^{2}(\mathbf{x}-\bm{\xi}) - g(\mathbf{x} - \bm{\xi})g_{\triangle}(\mathbf{x}, \bm{\xi})\right]\left(\mathbb{E}\left[\sigma(\bm{\xi})\sigma_{\triangle}(\bm{\xi})\right] -  \mathbb{E}\left[ \sigma^{2}_{\triangle}(\bm{\xi})\right]\right)\mathrm{d}\bm{\xi} \\
&= 2\left[\int_{\mathbb{R}^{2}}g^{2}(\mathbf{x}-\bm{\xi})\mathbb{E}\left[\sigma(\bm{\xi})\sigma_{\triangle}(\bm{\xi})\right] \mathrm{d}\bm{\xi} - \int_{\mathbb{R}^{2}}g(\mathbf{x} - \bm{\xi})g_{\triangle}(\mathbf{x}, \bm{\xi})\mathbb{E}\left[\sigma(\bm{\xi})\sigma_{\triangle}(\bm{\xi})\right] \mathrm{d}\bm{\xi} \right. \\ &\left. -  a \sum_{i', j'= -\tilde{p}}^{\tilde{p}} h\left(i'\triangle, j'\triangle\right) \triangle^{2}\left(\int_{\mathbb{R}^{2}}g^{2}(\mathbf{x}-\bm{\xi})\mathrm{d}\bm{\xi} - \sum_{i, j = -p}^{p}g\left(i\triangle, j \triangle\right) \int_{i\triangle- \frac{\triangle}{2}}^{i\triangle + \frac{\triangle}{2}} \int_{j\triangle- \frac{\triangle}{2}}^{j\triangle + \frac{\triangle}{2}} g(\mathbf{w})  \mathrm{d}\mathbf{w}\right)\right]. 
\end{align*}
To obtain the required expression for $\mathbb{E}\left[\sigma(\bm{\xi})\sigma_{\triangle}(\bm{\xi})\right]$, we apply the following equality from Section 1.7 of \cite{Applebaum2009}:
$u^{\alpha} = \frac{\alpha}{\Gamma(1-\alpha)} \int_{0}^{\infty} (1 - e^{-ux})\frac{\mathrm{d} x}{x^{1+\alpha}}$, where $u\geq 0$ and $0<\alpha<1$. By setting $u = \sigma(\bm{\xi})$ and $u = \sigma_{\triangle}(\bm{\xi})$ separately with $\alpha = 1/2$, and using Fubini's Theorem:
\begin{align*}
\mathbb{E}\left[\sigma(\bm{\xi})\sigma_{\triangle}(\bm{\xi})\right] &= \mathbb{E}\left[\frac{1}{4\pi} \int_{0}^{\infty} \int_{0}^{\infty} \left(1 - e^{-\sigma^{2}\left(\bm{\xi}\right)x}\right) \left(1 - e^{-\sigma_{\triangle}^{2}\left(\bm{\xi}\right)y}\right)\frac{\mathrm{d}x}{x^{3/2}}\frac{\mathrm{d}y}{y^{3/2}}\right] \\ 
&=\frac{1}{4\pi}  \int_{0}^{\infty} \int_{0}^{\infty} \left(1 - \mathbb{E}\left[e^{-\sigma^{2}\left(\bm{\xi}\right)x}\right] - \mathbb{E}\left[ e^{-\sigma_{\triangle}^{2}\left(\bm{\xi}\right)y}\right] + \mathbb{E}\left[ e^{-\sigma^{2}\left(\bm{\xi}\right)x-\sigma_{\triangle}^{2}\left(\bm{\xi}\right)y}\right]  \right)\frac{\mathrm{d}x}{x^{3/2}}\frac{\mathrm{d}y}{y^{3/2}}. 
\end{align*}
Since $\sigma^{2}(\bm{\xi}) =\int_{\mathbb{R}^{2}} h(\bm{\xi} - \mathbf{u})L(\mathrm{d}\mathbf{u})$, its CGF can be expressed as: $C(\theta ; \sigma^{2}(\bm{\xi})) = \int_{\mathbb{R}^{2}}C(\theta h(\bm{\xi} - \mathbf{u}) ; L') \mathrm{d}\mathbf{u}$.   \\
By replacing $\theta$ by $ix$, we find that the Laplace exponent of $\sigma^{2}(\bm{\xi})$ is equal to $\int_{\mathbb{R}^{2}}\Psi_{L}(xh(\bm{\xi} - \mathbf{u}))d\mathbf{u}$. Thus, $\mathbb{E}\left[e^{-\sigma^{2}(\bm{\xi})x}\right] = e^{\int_{\mathbb{R}^{2}}\Psi_{L}(xh(\bm{\xi} - \mathbf{u}))d\mathbf{u}}$. The expressions for $\mathbb{E}\left[e^{-\sigma_{\triangle}^{2}(\bm{\xi})y}\right]$ and $\mathbb{E}\left[e^{-\sigma^{2}(\bm{\xi})x -\sigma_{\triangle}^{2}(\bm{\xi})y}\right]$ can be found analogously.
\end{proof}
\begin{proof}[Proof of Corollary \ref{cor:mseub}]
We obtain an upper bound for $T1$ in Theorem \ref{thm:mse}. By using the fact that the harmonic mean of $\sigma^{2}$ and $\sigma_{\triangle}^{2}$ is less than or equal to their geometric mean, and by applying Jensen’s inequality since $1/x$ is a convex function of $x$ for $x > 0$:
\begin{align*}
\mathbb{E}\left[\sigma(\bm{\xi})\sigma_{\triangle}(\bm{\xi})\right] &\geq \mathbb{E}\left[2\left(\frac{1}{\sigma^{2}(\bm{\xi})} + \frac{1}{\sigma_{\triangle}^{2}(\bm{\xi})}\right)^{-1}\right] \geq 2\left( \mathbb{E}\left[\frac{1}{\sigma^{2}(\bm{\xi})}\right] +  \mathbb{E}\left[\frac{1}{\sigma_{\triangle}^{2}(\bm{\xi})}\right]\right)^{-1} \geq 2 \left(\frac{1}{\mathbb{E}\left[\sigma^{2}(\bm{\xi})\right]} + \frac{1}{\mathbb{E}\left[\sigma_{\triangle}^{2}(\bm{\xi})\right]}\right)^{-1} \\ 
\Rightarrow  T1 &\leq  \left( a\int_{\mathbb{R}^{2}}g^{2}(\mathbf{x}-\bm{\xi})\mathrm{d}\bm{\xi}\right)\left[\int_{\mathbb{R}^{2}}h(\bm{\xi} - \mathbf{u})\mathrm{d}\mathbf{u} + \sum_{i, j = -\tilde{p}}^{\tilde{p}} h\left(i\triangle, j \triangle\right) \triangle^{2} \right. \\
&\left.- 4\left(\frac{1}{\int_{\mathbb{R}^{2}}h(\bm{\xi} - \mathbf{u})\mathrm{d}\mathbf{u}} + \frac{1}{ \sum_{i = -\tilde{p}}^{\tilde{p}} \sum_{j = -\tilde{p}}^{\tilde{p}} h\left(i\triangle, j \triangle\right) \triangle^{2}}\right)^{-1} \mathrm{d}\bm{\xi} \right]  \\
&= \left( a\int_{\mathbb{R}^{2}}g^{2}(\mathbf{x}-\bm{\xi})\mathrm{d}\bm{\xi}\right) \frac{\left(\int_{\mathbb{R}^{2}}h(\bm{\xi} - \mathbf{u})\mathrm{d}\mathbf{u} -  \sum_{i = -\tilde{p}}^{\tilde{p}} \sum_{j = -\tilde{p}}^{\tilde{p}} h\left(i\triangle, j \triangle\right) \triangle^{2}\right)^{2}}{\int_{\mathbb{R}^{2}}h(\bm{\xi} - \mathbf{u})\mathrm{d}\mathbf{u} + \sum_{i = -\tilde{p}}^{\tilde{p}} \sum_{j = -\tilde{p}}^{\tilde{p}} h\left(i\triangle, j \triangle\right) \triangle^{2}} \\
&:= T4. 
\end{align*}
Likewise, we obtain an upper bound for $T3$ in Theorem \ref{thm:mse}. By using the lower bound for $\mathbb{E}\left[\sigma(\bm{\xi})\sigma_{\triangle}(\bm{\xi})\right]$ attained previously as well as the fact that the arithmetic mean of $\sigma^{2}$ and $\sigma_{\triangle}^{2}$ is greater than or equal to their geometric mean, we obtain:
\begin{align*}
T3 &\leq    2\left[\frac{a}{2}\int_{\mathbb{R}^{2}}g^{2}(\mathbf{x}-\bm{\xi})\left[\int_{\mathbb{R}^{2}}h(\bm{\xi} - \mathbf{u})\mathrm{d}\mathbf{u} +  \sum_{i, j = -\tilde{p}}^{\tilde{p}} h\left(i\triangle, j \triangle\right) \triangle^{2}\right] \mathrm{d}\bm{\xi}  \right. \\ 
&\left. - 2 a\sum_{i, j = -p}^{p} g\left(i\triangle, j \triangle\right) \int_{x_{1} + i\triangle- \frac{\triangle}{2}}^{x_{1} + i\triangle + \frac{\triangle}{2}} \int_{x_{2} + j\triangle- \frac{\triangle}{2}}^{x_{2} + j\triangle + \frac{\triangle}{2}} g(\mathbf{x} - \bm{\xi})\left(\frac{1}{\int_{\mathbb{R}^{2}}h(\bm{\xi} - \mathbf{u})\mathrm{d}\mathbf{u}} + \frac{1}{ \sum_{i', j' = -\tilde{p}}^{\tilde{p}} h\left(i'\triangle, j' \triangle\right) \triangle^{2}}\right)^{-1}  \mathrm{d}\bm{\xi} \right. \\ 
&\left.-  a \sum_{i', j'= -\tilde{p}}^{\tilde{p}} h\left(i'\triangle, j'\triangle\right) \triangle^{2}\left(\int_{\mathbb{R}^{2}}g^{2}(\mathbf{x}-\bm{\xi})\mathrm{d}\bm{\xi} -  \sum_{i, j = -p}^{p}g\left(i\triangle, j \triangle\right) \int_{i\triangle- \frac{\triangle}{2}}^{i\triangle + \frac{\triangle}{2}} \int_{j\triangle- \frac{\triangle}{2}}^{j\triangle + \frac{\triangle}{2}} g(\mathbf{w})  \mathrm{d}\mathbf{w}\right)\right] \\
&=a\left[\int_{\mathbb{R}^{2}}h(\bm{\xi} - \mathbf{u})\mathrm{d}\mathbf{u} -  \sum_{i, j = -\tilde{p}}^{\tilde{p}} h\left(i\triangle, j \triangle\right) \triangle^{2}\right]\left[\int_{\mathbb{R}^{2}}g^{2}(\mathbf{x}-\bm{\xi})\mathrm{d}\bm{\xi}   - \frac{2\sum_{i', j'= -\tilde{p}}^{\tilde{p}} h\left(i'\triangle, j'\triangle\right) \triangle^{2}}{\sum_{i', j'= -\tilde{p}}^{\tilde{p}} h\left(i'\triangle, j'\triangle\right) \triangle^{2} + \int_{\mathbb{R}^{2}}h(\bm{\xi} - \mathbf{u})\mathrm{d}\mathbf{u}}\right. \\
&\left. \left(\sum_{i, j = -p}^{p}g\left(i\triangle, j \triangle\right) \int_{i\triangle- \frac{\triangle}{2}}^{i\triangle + \frac{\triangle}{2}} \int_{j\triangle- \frac{\triangle}{2}}^{j\triangle + \frac{\triangle}{2}} g(\mathbf{w})  \mathrm{d}\mathbf{w}\right)\right] \\
&=a\left[\int_{\mathbb{R}^{2}}h(\bm{\xi} - \mathbf{u})\mathrm{d}\mathbf{u} -  \sum_{i, j = -\tilde{p}}^{\tilde{p}} h\left(i\triangle, j \triangle\right) \triangle^{2}\right]\left[\left(\int_{\mathbb{R}^{2}}g^{2}(\mathbf{x}-\bm{\xi})\mathrm{d}\bm{\xi} - \sum_{i, j = -p}^{p}g^{2}\left(i\triangle, j \triangle\right) \triangle^{2} \right)\right.\\
&\left. + \sum_{i, j = -p}^{p}g\left(i\triangle, j \triangle\right) \left( g\left(i\triangle, j \triangle\right)\triangle^{2} - \int_{i\triangle- \frac{\triangle}{2}}^{i\triangle + \frac{\triangle}{2}} \int_{j\triangle- \frac{\triangle}{2}}^{j\triangle + \frac{\triangle}{2}} g(\mathbf{w})  \mathrm{d}\mathbf{w}\right) \right.\\
&\left.+ \left(\int_{\mathbb{R}^{2}}h(\bm{\xi} - \mathbf{u})\mathrm{d}\mathbf{u} - \sum_{i', j'= -\tilde{p}}^{\tilde{p}} h\left(i'\triangle, j'\triangle\right) \triangle^{2} \right) \left(\sum_{i, j = -p}^{p}g\left(i\triangle, j \triangle\right) \int_{i\triangle- \frac{\triangle}{2}}^{i\triangle + \frac{\triangle}{2}} \int_{j\triangle- \frac{\triangle}{2}}^{j\triangle + \frac{\triangle}{2}} g(\mathbf{w})  \mathrm{d}\mathbf{w}\right)\right] \\
&:= T5.
\end{align*}
By combining the upper bounds of $T1$ and $T3$ with $T2$, we obtain an upper bound for the simulation MSE. 
\end{proof}

\begin{proof}[Proof of Lemma \ref{lem:kcomp}]
Using Assumption \ref{a:bHm}, we can apply second-order Taylor expansions of $g^{2}(\mathbf{w})$ around $(i\triangle, j\triangle)$ for $-\lfloor R/\triangle \rfloor \leq i, j \leq \lfloor R/\triangle \rfloor$:
\begin{align*}
&\int_{-\left(R + \triangle/2\right)}^{R + \triangle/2}\int_{-\left(R + \triangle/2\right)}^{R + \triangle/2}g^{2}(\mathbf{w})\mathrm{d}\mathbf{w} -  \sum_{i,j = -\lfloor R/\triangle \rfloor}^{\lfloor R/\triangle \rfloor} g^{2}\left(i\triangle, j \triangle\right)\triangle^{2} \\
&=  \sum_{i,j = -\lfloor R/\triangle \rfloor}^{\lfloor R/\triangle \rfloor}\int_{i\triangle - \triangle/2}^{i\triangle + \triangle/2}\int_{j\triangle - \triangle/2}^{j\triangle + \triangle/2}\left(g^{2}(\mathbf{w}) -  g^{2}\left(i\triangle, j \triangle\right)\right)\mathrm{d}\mathbf{w} \\
&=  \frac{1}{2}\sum_{i,j = -\lfloor R/\triangle \rfloor}^{\lfloor R/\triangle \rfloor}\left(\int_{i\triangle - \triangle/2}^{i\triangle + \triangle/2}\int_{j\triangle - \triangle/2}^{j\triangle + \triangle/2} \left[\left(w_{1} - i\triangle\right)^{2}\left. \frac{\partial^{2}g^{2}(\mathbf{w})}{\partial^{2}w_{1}}\right|_{\mathbf{w} = \zeta_{1}(i, j, \triangle)} + \left(w_{2} - j\triangle\right)^{2} \left. \frac{\partial^{2}g^{2}(\mathbf{w})}{\partial^{2}w_{2}}\right|_{\mathbf{w} = \zeta_{2}(i, j, \triangle)} \right]\mathrm{d}\mathbf{w} \right),
\end{align*}
where $\zeta_{k}(i, j, \triangle) \in I(i, j, \triangle) = \left(i\triangle - \triangle/2, i\triangle + \triangle/2\right)\times \left(j\triangle - \triangle/2, j\triangle + \triangle/2\right)$ for $k = 1, 2$, and since the terms involving single powers of $\left(w_{1} - i\triangle\right)$ and $\left(w_{2} - j\triangle\right)$ integrate to zero.
\\
Let $s_{1}(i, j, \triangle)$ and $S_{1}(i, j, \triangle)$ be the infimum and supremum of $\frac{\partial^{2}g^{2}(\mathbf{w})}{\partial^{2}w_{1}}$ over $I(i, j, \triangle)$. Then: 
\begin{align*}
\frac{\triangle^{4}}{12}\sum_{i,j = -\lfloor R/\triangle \rfloor}^{\lfloor R/\triangle \rfloor}s_{1}(i, j, \triangle) &\leq \sum_{i,j = -\lfloor R/\triangle \rfloor}^{\lfloor R/\triangle \rfloor}\int_{i\triangle - \triangle/2}^{i\triangle + \triangle/2}\int_{j\triangle - \triangle/2}^{j\triangle + \triangle/2} \left(w_{1} - i\triangle\right)^{2}\left. \frac{\partial^{2}g^{2}(\mathbf{w})}{\partial^{2}w_{1}}\right|_{\mathbf{w} = \zeta_{1}(i, j, \triangle)} \mathrm{d}\mathbf{w} \\
&\leq \frac{\triangle^{4}}{12}\sum_{i,j = -\lfloor R/\triangle \rfloor}^{\lfloor R/\triangle \rfloor}S_{1}(i, j, \triangle),
\end{align*}
since $\int_{i\triangle - \triangle/2}^{i\triangle + \triangle/2}\int_{j\triangle - \triangle/2}^{j\triangle + \triangle/2} \left(w_{1} - i\triangle\right)^{2}\mathrm{d}\mathbf{w} = \triangle \left[\frac{\triangle^{3}}{12}\right] = \frac{\triangle^{4}}{12}$. 
\\
Since $\lim_{\triangle\rightarrow 0} \triangle^{2}\sum_{i,j = -\lfloor R/\triangle \rfloor}^{\lfloor R/\triangle \rfloor}s_{1}(i, j, \triangle) = \lim_{\triangle\rightarrow 0} \triangle^{2}\sum_{i,j = -\lfloor R/\triangle \rfloor}^{\lfloor R/\triangle \rfloor}S_{1}(i, j, \triangle) = \int_{-R}^{R} \int_{-R}^{R}  \frac{\partial^{2}g^{2}(\mathbf{w})}{\partial^{2}w_{1}}\mathrm{d}\mathbf{w} $:
\begin{equation*}
\lim_{\triangle\rightarrow 0} \frac{1}{\triangle^{2}}\sum_{i,j = -\lfloor R/\triangle \rfloor}^{\lfloor R/\triangle \rfloor}\int_{i\triangle - \triangle/2}^{i\triangle + \triangle/2}\int_{j\triangle - \triangle/2}^{j\triangle + \triangle/2} \left(w_{1} - i\triangle\right)^{2}\left. \frac{\partial^{2}g^{2}(\mathbf{w})}{\partial^{2}w_{1}}\right|_{\mathbf{w} = \zeta_{1}(i, j, \triangle)} \mathrm{d}\mathbf{w} = \frac{1}{12}\int_{-R}^{R} \int_{-R}^{R}  \frac{\partial^{2}g^{2}(\mathbf{w})}{\partial^{2}w_{1}}\mathrm{d}\mathbf{w},
\end{equation*}
by the Sandwich Theorem. Using similar arguments for $\sum_{i,j = -\lfloor R/\triangle \rfloor}^{\lfloor R/\triangle \rfloor}\int_{i\triangle - \triangle/2}^{i\triangle + \triangle/2}\int_{j\triangle - \triangle/2}^{j\triangle + \triangle/2} \left(w_{2} - j\triangle\right)^{2}\left. \frac{\partial^{2}g^{2}(\mathbf{w})}{\partial^{2}w_{2}}\right|_{\mathbf{w} = \zeta_{2}(i, j, \triangle)} \mathrm{d}\mathbf{w}$:
\begin{equation*}
\int_{-\left(R + \triangle/2\right)}^{R + \triangle/2}\int_{-\left(R + \triangle/2\right)}^{R + \triangle/2}g^{2}(\mathbf{w})\mathrm{d}\mathbf{w} -  \sum_{i,j = -\lfloor R/\triangle \rfloor}^{\lfloor R/\triangle \rfloor} g^{2}\left(i\triangle, j \triangle\right)\triangle^{2} = O(\triangle^{2}). 
\end{equation*}
By analogous arguments, one obtain the results for $g\left(i\triangle, j \triangle\right)\triangle^{2} - \int_{i\triangle- \frac{\triangle}{2}}^{i\triangle + \frac{\triangle}{2}} \int_{j\triangle- \frac{\triangle}{2}}^{j\triangle + \frac{\triangle}{2}} g(\mathbf{w})  \mathrm{d}\mathbf{w}$ and \\ $\int_{-\left(\widetilde{R} + \triangle/2\right)}^{\widetilde{R} + \triangle/2}\int_{-\left(\widetilde{R} + \triangle/2\right)}^{\widetilde{R} + \triangle/2}h(\mathbf{w})\mathrm{d}\mathbf{w} -  \sum_{i,j = -\widetilde{R}/\triangle}^{\widetilde{R}/\triangle}  h\left(i\triangle, j \triangle\right) \triangle^{2}$. 
\end{proof}

\begin{proof}[Proof of Theorem \ref{thm:l2con}]
With $\mathbb{E}[\cdot|\mathcal{F}^{\sigma}]$ denoting the expectation conditional on $\sigma^{2}$,  the MSE of our estimator is:
\\
\begin{align}
&\mathbb{E}\left[\left| \frac{1}{(2Q+1)^{2}} \sum_{l = -Q}^{Q} \sum_{k = -Q}^{Q} Y^{2}(\mathbf{x} + (l, k)\triangle)-\sigma_{I}^{2}(\mathbf{x}) \right|^{2}\right] \nonumber\\
&= \mathbb{E}\left[\left| \frac{1}{(2Q+1)^{2}} \sum_{l = -Q}^{Q} \sum_{k = -Q}^{Q} \left[Y^{2}(\mathbf{x} + (l, k)\triangle)-\sigma_{I}^{2}(\mathbf{x} + (l, k)\triangle) - \left(\sigma_{I}^{2}(\mathbf{x})-\sigma_{I}^{2}(\mathbf{x} + (l, k)\triangle) \right) \right]\right|^{2}\right] \nonumber\\
&=  \frac{1}{(2Q+1)^{4}} \mathbb{E}\left[\left| \sum_{l = -Q}^{Q} \sum_{k = -Q}^{Q} \left[Y^{2}(\mathbf{x} + (l, k)\triangle)-\sigma_{I}^{2}(\mathbf{x} + (l, k)\triangle) \right]\right|^{2}\right]  \label{eqn:MSEt1}\\
& - \frac{2}{(2Q+1)^{4}} \mathbb{E}\left[ \left(\sum_{l = -Q}^{Q} \sum_{k = -Q}^{Q} \left(Y^{2}(\mathbf{x} + (l, k)\triangle)-\sigma_{I}^{2}(\mathbf{x} + (l, k)\triangle)  \right)\right)\sum_{l = -Q}^{Q} \sum_{k = -Q}^{Q} \left[\sigma_{I}^{2}(\mathbf{x})-\sigma_{I}^{2}(\mathbf{x} + (l, k)\triangle) \right] \right] \label{eqn:MSEt2}\\
&+  \frac{1}{(2Q+1)^{4}} \mathbb{E}\left[ \left| \sum_{l = -Q}^{Q} \sum_{k = -Q}^{Q}  \left[\sigma_{I}^{2}(\mathbf{x})-\sigma_{I}^{2}(\mathbf{x} + (l, k)\triangle) \right]\right|^{2}\right]. \label{eqn:MSEt3}
\end{align}
When we simplify term (\ref{eqn:MSEt1}), we have:
\begin{align}
&\frac{1}{(2Q+1)^{4}} \mathbb{E}\left[ \sum_{\substack{l', k', l, k\\ = -Q}}^{Q} \left[Y^{2}(\mathbf{x} + (l, k)\triangle)-\sigma_{I}^{2}(\mathbf{x} + (l, k)\triangle) \right] \left[Y^{2}(\mathbf{x} + (l', k')\triangle)-\sigma_{I}^{2}(\mathbf{x} + (l', k')\triangle) \right]\right] \nonumber\\
&= \frac{1}{(2Q+1)^{4}}\sum_{\substack{l', k', l, k\\ = -Q}}^{Q} \mathbb{E}\left[\mathbb{E}\left[\left[ Y^{2}(\mathbf{x} + (l, k)\triangle)-\sigma_{I}^{2}(\mathbf{x} + (l, k)\triangle) \right] \left[Y^{2}(\mathbf{x} + (l', k')\triangle)-\sigma_{I}^{2}(\mathbf{x} + (l', k')\triangle)\right] |\mathcal{F}^{\sigma}\right]\right] \nonumber\\
&=  \frac{1}{(2Q+1)^{4}}\sum_{\substack{l', k', l, k\\ = -Q}}^{Q} \mathbb{E}\left[\Cov\left(Y^{2}(\mathbf{x}+ (l, k)\triangle), Y^{2}(\mathbf{x} + (l', k')\triangle) |\mathcal{F}^{\sigma}\right) \right] \label{eqn:CCcondition}\\
&= \frac{2\sum_{\substack{l', k', l, k\\ = -Q}}^{Q}  \mathbb{E}\left[\left(\int_{\mathbb{R}^{2}} g(\mathbf{x} + (l, k)\triangle-\bm{\xi})g(\mathbf{x} + (l', k')\triangle-\bm{\xi})\sigma^{2}(\bm{\xi})\mathrm{d}\bm{\xi}\right)^{2}\right]}{(2Q+1)^{4}} \rightarrow 0, \nonumber
\end{align}
by our assumption and Corollary 2.
\\
Since $\mathbb{E}\left[Y^{2}(\mathbf{x} + (l, k)\triangle) |\mathcal{F}^{\sigma}\right] = \sigma_{I}^{2}(\mathbf{x} + (l, k)\triangle)$ for $-Q \leq l, k \leq Q$, we also find that term (\ref{eqn:MSEt2}) is equal to:
\begin{equation*}
- \frac{2}{(2Q+1)^{4}} \mathbb{E}\left[\left(\sum_{l = -Q}^{Q} \sum_{k = -Q}^{Q} \left[\sigma_{I}^{2}(\mathbf{x})-\sigma_{I}^{2}(\mathbf{x} + (l, k)\triangle) \right]\right)\sum_{l = -Q}^{Q} \sum_{k = -Q}^{Q}\mathbb{E}\left[  \left(Y^{2}(\mathbf{x} + (l, k)\triangle)-\sigma_{I}^{2}(\mathbf{x} + (l, k)\triangle)  \right)|\mathcal{F}^{\sigma}\right] \right], 
\end{equation*}
which vanishes to zero. Simplifying term (\ref{eqn:MSEt3}), we get:
\begin{align}
&\frac{\sum_{\substack{l', k', l, k\\ = -Q}}^{Q}\left[\mathbb{E}\left[\sigma_{I}^{4}(\mathbf{x})\right] -  \mathbb{E}\left[ \sigma_{I}^{2}(\mathbf{x})\sigma_{I}^{2}(\mathbf{x} + (l, k)\triangle) \right]  -  \mathbb{E}\left[ \sigma_{I}^{2}(\mathbf{x})\sigma_{I}^{2}(\mathbf{x} + (l', k')\triangle) \right]+ \mathbb{E}\left[\sigma_{I}^{2}(\mathbf{x} + (l, k)\triangle) \sigma_{I}^{2}(\mathbf{x} + (l', k')\triangle) \right]\right]}{(2Q+1)^{4}} \nonumber \\
&=  \frac{1}{(2Q+1)^{4}}\sum_{\substack{l', k', l, k\\ = -Q}}^{Q}\left[\Var\left[\sigma_{I}^{2}(\mathbf{x})\right] -  \Cov\left(\sigma_{I}^{2}(\mathbf{x}), \sigma_{I}^{2}(\mathbf{x} + (l, k)\triangle) \right)  -  \Cov\left( \sigma_{I}^{2}(\mathbf{x}), \sigma_{I}^{2}(\mathbf{x} + (l', k')\triangle) \right)\right. \nonumber \\
&\left.+ \Cov\left(\sigma_{I}^{2}(\mathbf{x} + (l, k)\triangle), \sigma_{I}^{2}(\mathbf{x} + (l', k')\triangle) \right)\right] \text{ since } \sigma^{2}_{I} \text{ is stationary,} \nonumber\\
&=  \frac{1}{(2Q+1)^{4}}\sum_{\substack{l', k', l, k\\ = -Q}}^{Q}\left[\Var\left[\sigma_{I}^{2}(\mathbf{x})\right] -  \Cov\left(\sigma_{I}^{2}(\mathbf{x}), \sigma_{I}^{2}(\mathbf{x} + (l, k)\triangle) \right)\right] \label{eqn:MSEt4} \\
&+ \frac{1}{(2Q+1)^{4}}\sum_{\substack{l', k', l, k\\ = -Q}}^{Q} \left[ \Cov\left(\sigma_{I}^{2}(\mathbf{x} + (l, k)\triangle), \sigma_{I}^{2}(\mathbf{x} + (l', k')\triangle) \right)-  \Cov\left( \sigma_{I}^{2}(\mathbf{x}), \sigma_{I}^{2}(\mathbf{x} + (l', k')\triangle) \right) \right] \label{eqn:MSEt5} 
\end{align}
Since we assumed that $C(d_{x_{1}}, d_{x_{2}}) := \Cov\left(\sigma_{I}^{2}(\mathbf{x}), \sigma_{I}^{2}(\mathbf{x} + (d_{x_{1}}, d_{x_{2}}))\right)$ has a finite gradient over $\mathbb{R}^{2}$, i.e. $\sigma_{I}^{2}$ has a second order mean squared derivative (page 27 of \cite{Adler2010}), we can use the Mean Value Theorem and there exists points $\{\mathbf{c}_{(l, k)}: -Q \leq l, k \leq Q\}$ in $\mathbb{R}^{2}$ such that: 
\begin{align*}
&\left|\frac{1}{(2Q+1)^{4}}\sum_{\substack{l', k', l, k\\ = -Q}}^{Q}\left[\Var\left[\sigma_{I}^{2}(\mathbf{x})\right] -  \Cov\left(\sigma_{I}^{2}(\mathbf{x}), \sigma_{I}^{2}(\mathbf{x} + (l, k)\triangle) \right)\right]\right| \\
&\leq \frac{1}{(2Q+1)^{2}}\sum_{l, k = -Q}^{Q}\left|\Var\left[\sigma_{I}^{2}(\mathbf{x})\right] -  \Cov\left(\sigma_{I}^{2}(\mathbf{x}), \sigma_{I}^{2}(\mathbf{x} + (l, k)\triangle) \right)\right| \\
&\leq \frac{1}{(2Q+1)^{2}}\sum_{l, k = -Q}^{Q}\left|\nabla C(c_{(l, k)})\right|\left| (l, k)\triangle\right| \text{ where } \nabla C \text{ denotes the gradient of C with respect to } d_{x_{1}} \text{ and } d_{x_{2}},\\
&\leq \frac{K\triangle}{(2Q+1)^{2}}\sum_{l, k = -Q}^{Q} \sqrt{l^{2} + k^{2}} \text{ where } K \text{ is the maximum absolute value of the gradients, } \\
&\leq \frac{K\triangle}{(2Q+1)^{2}}\sum_{l, k = -Q}^{Q} (|l| + |k|) \text{ by the Triangle Inequality,}\\
&= \frac{2K\triangle Q(Q+1)}{(2Q+1)^{2}},
\end{align*}
which converges to zero if $Q$ tends to infinity and $\triangle$ behaves like $O(Q^{-\tilde{r}})$ for $\tilde{r}>0$. 
\\
Similarly, for term (\ref{eqn:MSEt5}), there exists points $\{\mathbf{c}_{(l, k, l', k')}: -Q \leq l, k, l', k' \leq Q\}$ such that: 
\begin{align*}
 &\left|\frac{1}{(2Q+1)^{4}}\sum_{\substack{l', k', l, k\\ = -Q}}^{Q} \left[ \Cov\left(\sigma_{I}^{2}(\mathbf{x} + (l, k)\triangle), \sigma_{I}^{2}(\mathbf{x} + (l', k')\triangle) \right)-  \Cov\left( \sigma_{I}^{2}(\mathbf{x}), \sigma_{I}^{2}(\mathbf{x} + (l', k')\triangle) \right) \right]\right| \\
&\leq \frac{1}{(2Q+1)^{4}}\sum_{\substack{l', k', l, k\\ = -Q}}^{Q} \left| \Cov\left(\sigma_{I}^{2}(\mathbf{x} + (l, k)\triangle), \sigma_{I}^{2}(\mathbf{x} + (l', k')\triangle) \right)-  \Cov\left( \sigma_{I}^{2}(\mathbf{x}), \sigma_{I}^{2}(\mathbf{x} + (l', k')\triangle) \right) \right|\\
&\leq  \frac{1}{(2Q+1)^{4}}\sum_{\substack{l', k', l, k\\ = -Q}}^{Q} \left|\nabla f(c_{(l, k, l', k')})\right|\left| (l, k)(\triangle)\right| \\
&\leq \frac{K'\triangle}{(2Q+1)^{2}}\sum_{l, k = -Q}^{Q} \sqrt{l^{2} + k^{2}} \text{ where } K' \text{ is the maximum absolute value of the gradients, } \\
&= \frac{2K'\triangle Q(Q+1)}{(2Q+1)^{2}}. 
\end{align*}
So, term (\ref{eqn:MSEt5}) also converges to zero  if $Q$ tends to infinity and $\triangle$ behaves like $O(Q^{-\tilde{r}})$ for $\tilde{r}>0$. Since under the latter conditions the MSE of our local variance estimator decreases to zero, we have proved that it converges to the true local variance in the $\mathcal{L}_{2}$ sense.
\end{proof}

\begin{proof}[Proof of Corollary \ref{cor:convar}]
Since $\mathcal{L}_{2}$ convergence implies convergence in probabilty, each point at which we compute $\hat{\sigma}_{I}^{2}$ is associated with a sequence $\{ \hat{\sigma}_{I}^{2}(\mathbf{x}, Q) \}$ converges in probability towards $\sigma_{I}^{2}(\mathbf{x})$. By Theorem 2.7(vi) of \cite{VDV1998}, this means that the vector $\left(\hat{\sigma}_{I}^{2}(\mathbf{x}_{1}, Q), \dots, \hat{\sigma}_{I}^{2}(\mathbf{x}_{M}, Q)\right)$ converges in probability to $\left(\sigma_{I}^{2}(\mathbf{x}_{1}), \dots, \sigma_{I}^{2}(\mathbf{x}_{M})\right)$. 
\\
Furthermore, since convergence in probability implies convergence in distribution (also established in Theorem 2.7(ii) of \cite{VDV1998}), we have that the limit of the joint distribution of $\left(\hat{\sigma}_{I}^{2}(\mathbf{x}_{1}, Q), \dots, \hat{\sigma}_{I}^{2}(\mathbf{x}_{M}, Q)\right)$ is the joint distribution of $\left(\sigma_{I}^{2}(\mathbf{x}_{1}), \dots, \sigma_{I}^{2}(\mathbf{x}_{M})\right)$. This means that the mean, variance and normalised variogram of the local variance estimator converge to those of the true conditional variance.
\end{proof}

\section*{Acknowledgements}

We would like to thank Mikko Pakkanen, Claudia Kl\"uppelberg and Carsten Chong for helpful discussions. M. Nguyen is grateful to Imperial College for her PhD scholarship which supported this research. A.E.D. Veraart acknowledges financial support by a Marie Curie FP7 Integration Grant within the 7th European Union Framework Programme.

\bibliographystyle{agsm} 
\bibliography{refs}

\vspace{2mm}
Michele Nguyen, Department of Mathematics, Imperial College London, 180 Queen's Gate, SW7 2AZ London, UK. \\
Email: michele.nguyen09@imperial.ac.uk

\end{document}